\definecolor{weborange}{rgb}{.8,.3,.3}
\definecolor{webblue}{rgb}{0,0,.8}
\definecolor{internallinkcolor}{rgb}{0,.5,0}
\definecolor{externallinkcolor}{rgb}{0,0,.5}
\newtheorem{theorem}{Theorem}[]
\newtheorem{definition}[theorem]{Definition}
\newtheorem{lemma}[theorem]{Lemma}
\newtheorem{corollary}[theorem]{Corollary}
\newtheorem{claim}[theorem]{Claim}
\DeclareMathOperator*\err{err}
\DeclareMathOperator*\opt{opt}
\DeclareMathOperator*\Exp{{\bf E}}
\DeclareMathOperator*\Prob{{\bf Pr}}
\newcommand{\bool}{\left\{0,1\right\}}
\newcommand{\boolpm}{\left\{-1,1\right\}}
\newcommand{\bigo}[1]{O\left(#1\right)}
\newcommand{\numberthis}{\addtocounter{equation}{1}\tag{\theequation}}
\newcommand{\prblm}[1]{{\uppercase{\rm#1}}}
\newcommand{\MCSP}{\prblm{MCSP}}
\newcommand{\SG}{\prblm{SUM\text{-}GREATER}}
\newcommand{\MKtP}{\mathrm{MKtP}}
\newcommand{\AND}{{\sf AND}}
\newcommand{\FORMULA}{{\sf FORMULA}}
\newcommand{\XOR}{{\sf XOR}}
\newcommand{\LTF}{{\sf LTF}}
\newcommand{\PTF}{{\sf PTF}}
\newcommand{\SYM}{{\sf SYM}}
\newcommand{\GIP}{{\sf GIP}}
\newcommand{\EI}{\aleph}
\newcommand{\given}{\mid}
\newcommand{\uniform}{U}
\newcommand{\LH}[1]{#1^{\mathrm{L}}}
\newcommand{\RH}[1]{#1^{\mathrm{R}}}
\newcommand{\myapprox}[1]{\ensuremath{\stackrel{#1}{\approx}}}
\newcommand{\real}{\mathbb{R}}
\newcommand{\naturals}{\mathbb{N}}
\newcommand{\abs}[1]{\left|#1\right|}
\newcommand{\prn}[1]{\left(#1\right)}
\newcommand{\cprn}[1]{\!\left(#1\right)}
\newcommand{\csqbra}[1]{\!\left[#1\right]}
\newcommand{\brkts}[1]{\left\{#1\right\}}
\renewcommand\Set[1]{\csname Set \endcsname{\mskip-\medmuskip#1\mskip-\medmuskip}}
\author{
	\quad \quad Valentine Kabanets\thanks{School of Computing Science, Simon Fraser University, Burnaby, BC, Canada; \texttt{kabanets@cs.sfu.ca}.}\quad \quad
	\and
	Sajin Koroth\thanks{School of Computing Science, Simon Fraser University, Burnaby, BC, Canada; \texttt{sajin\_koroth@sfu.ca}.} \quad \quad
	\and
	Zhenjian Lu\thanks{School of Computing Science, Simon Fraser University, Burnaby, BC, Canada; \texttt{zhenjian\_lu@sfu.ca}.} \quad \quad \vspace{0.1cm}
	\and
	Dimitrios Myrisiotis\thanks{Department of Computing, Imperial College London, UK; \texttt{d.myrisiotis17@imperial.ac.uk}.}
	\and
	Igor C. Oliveira\thanks{Department of Computer Science, University of Warwick, UK; \texttt{igor.oliveira@warwick.ac.uk}.} \vspace{0.1cm}
}
\title{
Algorithms and Lower Bounds for de Morgan Formulas of Low-Communication Leaf Gates
}
\begin{document}
	
	\maketitle

	\vspace{-0.45cm}
	
	\begin{abstract}
	    The class $\FORMULA[s] \circ \mathcal{G}$ consists of Boolean functions computable by size-$s$ de Morgan formulas whose leaves are any Boolean functions from a class $\mathcal{G}$. We give \emph{lower bounds} and (SAT, Learning, and PRG) \emph{algorithms} for $\FORMULA[n^{1.99}]\circ \mathcal{G}$, for classes $\mathcal{G}$ of functions with \emph{low communication complexity}. Let $R^{(k)}(\mathcal{G})$ be the maximum $k$-party number-on-forehead randomized communication complexity of a function in $\mathcal{G}$. Among other results, we show that:
	    \begin{itemize}
    	    \item The Generalized Inner Product function $\mathsf{GIP}^k_n$ cannot be computed in  $\FORMULA[s]\circ \mathcal{G}$ on more than $1/2+\varepsilon$ fraction of inputs for
    	    $$
    	        s = o \! \left ( \frac{n^2}{ \left(k \cdot 4^k \cdot {R}^{(k)}(\mathcal{G}) \cdot \log (n/\varepsilon) \cdot \log (1/\varepsilon) \right)^{2}} \right).
    	    $$
    	    This significantly extends the lower bounds against bipartite formulas obtained by \citep{Tal17}. As a corollary, we get an average-case lower bound for $\mathsf{GIP}^k_n$ against $\FORMULA[n^{1.99}]\circ \PTF^{k-1}$, i.e., sub-quadratic-size de Morgan formulas with degree-$(k-1)$ PTF (polynomial threshold function) gates at the bottom. Previously, only sub-linear lower bounds were known \cite{Nis94, Vio15} for circuits with PTF gates.
    	    
    	    \item There is a PRG of seed length $n/2 + O\left( \sqrt{s} \cdot R^{(2)}(\mathcal{G}) \cdot\log(s/\varepsilon) \cdot \log (1/\varepsilon) \right)$ that $\varepsilon$-fools $\FORMULA[s] \circ \mathcal{G}$. For the special case of $\FORMULA[s] \circ \LTF$, i.e., size-$s$ formulas with LTF (linear threshold function) gates at the bottom, we get the better seed length $O\left(n^{1/2}\cdot s^{1/4}\cdot \log(n)\cdot \log(n/\varepsilon)\right)$. In particular, this  provides the first non-trivial PRG (with seed length $o(n)$) for intersections of $n$ half-spaces in the regime where $\varepsilon \leq 1/n$, complementing a recent result of \citep{OST19}.

	        \item There exists a randomized $2^{n-t}$-time $\#$SAT algorithm for $\FORMULA[s] \circ \mathcal{G}$, where
            $$
                t = \Omega\left(\frac{n}{\sqrt{s} \cdot \log^2(s)\cdot R^{(2)}(\mathcal{G})}\right)^{1/2}.
            $$ 
            In particular, this implies a nontrivial \#SAT algorithm for $\FORMULA[n^{1.99}]\circ \LTF$.

	        \item The Minimum Circuit Size Problem is not in $\FORMULA[n^{1.99}] \circ \mathsf{XOR}$; thereby making progress on hardness magnification, in connection with results from \citep{OPS19, Magnification_FOCS19}. On the algorithmic side, we show that the concept class $\FORMULA[n^{1.99}] \circ \mathsf{XOR}$ can be PAC-learned in time $2^{O(n/\log n)}$.
	    \end{itemize}
	\end{abstract}
	
	
	
	\newpage
	
	\setcounter{tocdepth}{3}
	
	\tableofcontents
	
	\newpage
	
    

	\section{Introduction}
	
	A (de Morgan) Boolean formula over $\{0,1\}$-valued input variables $x_1, \ldots, x_n$ is a binary tree whose internal nodes are labelled by AND or OR gates, and whose leaves are marked with a variable or its negation. The power of Boolean formulas has been intensively investigated since the early years of complexity theory (see, e.g.,~\citep{subbotovskaya1961realization, Nec66, khrapchenko1971method, andreev1987method, DBLP:journals/rsa/PatersonZ93, DBLP:journals/rsa/ImpagliazzoN93, Has98, Tal14, DM18}). The techniques underlying these complexity-theoretic results have also enabled algorithmic developments. These include learning algorithms \cite{Rei11b, DBLP:conf/innovations/ServedioT17}, satisfiability algorithms (cf.~\citep{DBLP:journals/eccc/Tal15}), compression algorithms \citep{DBLP:journals/cc/ChenKKSZ15}, and the construction of pseudorandom generators \citep{IMZ12} for Boolean formulas of different sizes. But despite many decades of research, the current  non-trivial algorithms and lower bounds apply only to formulas of less than cubic size, and understanding larger formulas remains a major open problem in circuit complexity.

	In many scenarios, however, understanding smaller formulas whose leaves are replaced by certain functions would also be very useful. Motivated by several recent works, we initiate a systematic study of the $\FORMULA \circ \mathcal{G}$ model, i.e., Boolean formulas whose leaves are labelled by an arbitrary function from a fixed class $\mathcal{G}$. This model unifies and generalizes a variety of models that have been previously studied in the literature:
	\begin{itemize}
	    \item [--]  Oliveira, Pich, and Santhanam~\citep{OPS19} show that obtaining a refined understanding of formulas of size $n^{1 + \varepsilon}$ over parity (XOR) gates would have significant consequences in complexity theory. Note that de Morgan formulas of size $n^{3 + \varepsilon}$ can simulate such devices. Therefore, a better understanding of the $\FORMULA \circ \mathcal{G}$ model even when $\mathcal{G} = \mathsf{XOR}$ is  \emph{necessary} before we are able to analyze super-cubic size formulas.\footnote{We remark that even a single layer of $\mathsf{XOR}$ gates can compute powerful primitives, such as error-correcting codes and hash functions.}

        
        \item [--] Tal~\citep{Tal17} obtains almost quadratic lower bounds for the model of bipartite formulas, where there is a fixed partition of the input variables into $x_1, \ldots, x_n$ and $y_1, \ldots, y_n$, and a formula leaf can compute an \emph{arbitrary} function over either $\vec{x}$ or $\vec{y}$.  This model was originally investigated by Pudl\'ak, R\"odl, and Savick\'y~\cite{PRS88}, where it was referred to as graph complexity. The model is also equivalent to {\sf PSPACE}-protocols in communication complexity (cf.~\citep{DBLP:journals/cc/GoosPW18}).
        
        \item [--] Abboud and Bringmann~\citep{DBLP:conf/icalp/AbboudB18} consider formulas where the leaves are threshold gates whose input wires can be arbitrary functions applied to either the first or the second half of the input. This extension of bipartite formulas  is  denoted by $\mathcal{F}_2$ in \citep{DBLP:conf/icalp/AbboudB18}. Their work establishes connections between faster $\mathcal{F}_2$-SAT algorithms, the complexity of problems in {\sf P} such as Longest Common Subsequence and the Fr\'echet Distance Problem, and circuit lower bounds.
        
        \item [--]  Polytopes (i.e.~intersection of half-spaces), which corresponds to $\mathcal{G}$ being the family of linear-threshold functions, and the formula contains only AND gates as internal gates. The constructing of PRGs for this model has received significant attention in the literature (see \citep{OST19} and references therein).
	\end{itemize}
	
	We obtain in a unified way several new results for the $\mathsf{FORMULA} \circ \mathcal{G}$ model, for natural classes $\mathcal{G}$ of functions which include parities, linear (and polynomial) threshold functions, and indeed many other functions of interest. In particular, we show that this perspective leads to stronger lower bounds, general satisfiability algorithms, and better pseudorandom generators for a broad class of functions. 

	\subsection{Results}
	
	We now describe in detail our main results and how they contrast to previous works. Our techniques will be discussed in Section \ref{sec:techniques}, while a few open problems are mentioned in Section \ref{sec:concluding}.
		
	We let $\FORMULA[s] \circ \mathcal{G}$ denote the set of Boolean functions computed by formulas containing at most $s$ leaves, where each leaf computes according to some function in $\mathcal{G}$. The set of parity functions and their negations will be denoted by $\mathsf{XOR}$.
	
	We use the following notation for communication complexity. For a Boolean function $f \colon \{0,1\}^{n} \to \{0,1\}$, we let $D(f)$ be the two-party deterministic communication complexity of $f$, where each party is given an input of $n/2$ bits. Similarly, for a Boolean function $g \colon \{0,1\}^{n} \to \{0,1\}$, we denote by $R^{(k)}_{\delta}(g)$ the communication cost of the best $k$-party \emph{number-on-forehead} (NOF) communication protocol that computes $g$ with probability at least $1-\delta$ on every input, where the probability is taken over the random choices of the protocol. For simplicity, we might omit the superscript $(k)$ from $R^{(k)}_{\delta}(g)$ when $k = 2$. One of our results will also consider $k$-party \emph{number-in-hand} (NIH) protocols, and this will be clearly indicated in order to avoid confusion. We always assume a canonical partition of the input coordinates in all statements involving $k$-party communication complexity, unless stated otherwise. We generalize these definitions for a class of functions $\mathcal{G}$ in the natural way. For instance, we let $R^{(k)}_{\delta}(\mathcal{G})=\max_{g\in\mathcal{G}} R^{(k)}_{\delta}(g)$.
    
	Our results refer to standard notions in the literature, but in order to fix notation, Section \ref{sec:preliminaries} formally defines communication protocols, Boolean formulas, and other notions relevant in this work. We refer to the textbooks \citep{Kushilevitz-Nisan97} and \citep{Juk12} for more information about communication complexity and Boolean formulas, respectively. To put our results into context, here we only briefly review a few known upper bounds on the communication complexity of certain classes $\mathcal{G}$.
	
    \paragraph{Parities ($\mathsf{XOR}$) and Bipartite Formulas.}
    
    Clearly, the deterministic two-party communication complexity of any parity function is at most $2$, since to agree on the output it is enough for the players to exchange the parity of their relevant input bits. Moreover, note that the bipartite formula model discussed above precisely corresponds to formulas whose leaves are computed by a two-party protocol of communication cost at most $1$.
		
	\paragraph{Halfspaces and Polynomial Threshold Functions (PTFs).}
	
	Recall that a halfspace, also known as a Linear Threshold Function (LTF), is a Boolean function of the form $\mathsf{sign} (\sum_i^n a_i \cdot x_i - b )$, where each $a_i, b \in \mathbb{R}$ and $x \in \{0,1\}^n$, and that a degree-$d$ PTF is its natural generalization where degree-$d$ monomials are allowed. It is known that if $g(x_1, \ldots, x_n)$ is a halfspace, then its randomized two-party communication complexity, namely $R^{(2)}_\delta(g)$, satisfies $R^{(2)}_\delta(g) = O(\log (n) + \log (1/\delta))$ \citep{Nis94}. On the other hand, if $g(x_1, \ldots, x_n)$ is a degree-$d$ PTF, then $R^{(d+1)}_\delta(g) = {O}\big((d \log d)(d \log n + \log(1/\delta))\big)$ \citep{Nis94, Vio15}.
	
	\paragraph{Degree-${\bm d}$ Polynomials over $\mathsf{GF}(2)$.}
	
	It is well known that a degree-$d$ $\mathsf{GF}(2)$-polynomial admits a $(d+1)$-party deterministic protocol of communication cost $d + 1$ under \emph{any} variable partition, since in the number-on-forehead model each monomial is entirely seen by some player. In particular, the Inner Product function $\mathsf{IP}_{n}(x,y) = \sum_i x_i \cdot y_i~(\mathsf{mod}\;2)$ satisfies $R^{(3)}_{1/3}(\mathsf{IP}_n) = O(1)$.
	
	\subsubsection{Lower bounds}
	
	Prior to this work, the only known lower bound against $\FORMULA \circ \mathsf{XOR}$ or bipartite formulas was the recent result of \citep{Tal17} showing that $\mathsf{IP}_n$ is hard (even on average) against nearly sub-quadratic formulas. In contrast, we obtain a significantly stronger result and establish lower bounds for different Boolean functions. We define such functions next.
	
	\paragraph{$\mathsf{GIP}^k_n$.}
	
	The Generalized Inner Product function $\GIP^k_{n}\colon \{0,1\}^{n} \to \{0,1\}$ is defined as
	\[
	    \GIP_{n}^{k}\left(x^{(1)},x^{(2)},\dots,x^{(k)}\right) = \sum_{j=1}^{n/k} \bigwedge_{i=1}^{k} x_{j}^{(i)}~(\mathsf{mod}\;2),
	\]
	where $x^{(i)}\in \{0,1\}^{n/k}$ for each $i\in [k]$.
	
	\paragraph{$\mathsf{MKtP}$.}
	
	In the Minimum Kt Problem, where $\mathsf{Kt}$ refers to Levin's time-bounded Kolmogorov complexity\footnote{For a string $x \in \{0,1\}^*$, $\mathsf{Kt}(x)$ denotes the minimum value $|M| + \log t$ taken over $M$ and $t$, where $M$ is a machine that prints $x$ when it computes for $t$ steps, and $|M|$ is the description length of $M$ according to a fixed universal machine $U$.}, we are given a string $x \in \{0,1\}^n$ and a string $1^\ell$. We accept  $(x,1^\ell)$ if and only if $\mathsf{Kt}(x) \leq \ell$. 
	
	\paragraph{$\mathsf{MCSP}$.}
	
	In the Minimum Circuit Size Problem, we are given as input the description of a Boolean function $f \colon \{0,1\}^{\log n} \to \{0,1\}$ (represented as an $n$-bit string), and a string $1^\ell$. We accept $(f,1^\ell)$ if and only the circuit complexity of $f$ is at most $\ell$.
	
	\vspace{0.1cm}
	
	\begin{theorem}[Lower bounds]\label{thm:main_lbs}
    	The following unconditional lower bounds hold:
    	\begin{itemize}
    	    \item[\emph{1.}] If $\mathsf{GIP}^k_n$ is $(1/2 + \varepsilon)$-close under the uniform distribution to a function in $\FORMULA[s] \circ \mathcal{G}$, then $$s \;=\; \Omega \! \left ( \frac{n^2}{k^2 \cdot 16^k \cdot \big ({R}^{(k)}_{\varepsilon/(2n^2)}(\mathcal{G}) + \log n \big )^2 \cdot \log^2 (1/\varepsilon)} \right ).$$
    	    
    	    \item[\emph{2.}] If $\mathsf{MKtP} \in \FORMULA[s] \circ \mathcal{G}$, then  
    	    $$
               s \;=\; \widetilde{\Omega} \! \left( \frac{n^2}{k^2\cdot16^{k}\cdot R^{(k)}_{1/3}(\mathcal{G})}\right).
    	    $$
    	    
    	    \item[\emph{3.}] If $\mathsf{MCSP} \in \FORMULA[s] \circ \mathsf{XOR}$, then $s = \widetilde{\Omega}(n^2)$, where $\widetilde{\Omega}$ hides inverse $\mathsf{polylog}(n)$ factors.
    	\end{itemize}
	\end{theorem}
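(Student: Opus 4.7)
The plan is to prove the three items in order, with Item~1 providing the quantitative core; Items~2 and~3 will then follow from Item~1 via a pseudorandomness-vs-compressibility reduction in which $\mathsf{MKtP}$ and $\mathsf{MCSP}$ act as universal distinguishers.

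\emph{Item 1.} The approach converts any $F \in \FORMULA[s]\circ\mathcal{G}$ into a $k$-party randomized NOF protocol and then invokes a discrepancy lower bound for $\mathsf{GIP}^k$. The two ingredients are: (i)~any de Morgan formula of size $s$ admits a randomized decision-tree evaluator making $\tilde{O}(\sqrt{s}\cdot\log(1/\delta))$ queries and erring on any fixed input with probability at most $\delta$ (this follows from the $p^2$-shrinkage exponent of de Morgan formulas via iterated random restrictions, in the style of Reichardt's query-complexity bound); and (ii)~$\mathsf{disc}_k(\mathsf{GIP}^k_n) \leq 2^{-\Omega(n/(k\cdot 4^k))}$ by Babai--Nisan--Szegedy, so any $k$-party NOF randomized protocol of cost $c$ agreeing with $\mathsf{GIP}^k_n$ on $1/2+\varepsilon$ of uniform inputs satisfies $c \geq \Omega(n/(k\cdot 4^k)) - O(\log(1/\varepsilon))$. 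To combine, I would start with $F$, replace each leaf query by a NOF subprotocol for the corresponding $\mathcal{G}$-function of per-query error $\varepsilon' = \varepsilon/(2n^2)$, and union-bound over the $\tilde{O}(\sqrt{s}\cdot\log(1/\varepsilon))$ query steps. The resulting NOF protocol has total cost $\tilde{O}(\sqrt{s}\cdot \log(1/\varepsilon)\cdot (R^{(k)}_{\varepsilon/(2n^2)}(\mathcal{G}) + \log n))$ and agrees with $\mathsf{GIP}^k_n$ on at least $1/2+\varepsilon/2$ of inputs. Setting this equal to the discrepancy lower bound and rearranging yields the claimed $s = \Omega(n^2 / (k^2\cdot 16^k\cdot (R^{(k)}+\log n)^2\cdot \log^2(1/\varepsilon)))$.

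\emph{Items 2 and 3.} Instantiating Item~1 with $\varepsilon = 1/\mathrm{poly}(n)$ gives $\mathsf{GIP}^k_n$ hard on average against $\FORMULA[\tilde{\Omega}(n^2/(k^2\cdot 16^k\cdot R^{(k)}))]\circ\mathcal{G}$. A Nisan--Wigderson-style construction converts this into a pseudorandom generator $G$ of short seed whose output $1/\mathrm{poly}(n)$-fools the class. But every bitstring in the image of $G$ has $\mathrm{Kt}$-complexity bounded by some small $\ell$ (since $G$ is succinctly describable and polynomial-time computable), while a uniformly random bitstring of the same length has $\mathrm{Kt}$-complexity nearly maximal with overwhelming probability. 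Hence for an appropriate threshold, $\mathsf{MKtP}(\cdot, 1^\ell)$ distinguishes the output of $G$ from uniform with constant advantage; were $\mathsf{MKtP}$ to lie in $\FORMULA[s']\circ\mathcal{G}$ for $s'$ below the target threshold, this would contradict the fooling property, yielding Item~2. Item~3 is the same argument specialized to $k=2$ and $\mathcal{G}=\mathsf{XOR}$ (where $R^{(2)}(\mathsf{XOR})=O(1)$), replacing $\mathrm{Kt}$-complexity by circuit complexity and $\mathsf{MKtP}$ by $\mathsf{MCSP}$.

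The main technical obstacle is Ingredient~(i) of Item~1: one needs the sharp $\tilde{O}(\sqrt{s})$ randomized query complexity of size-$s$ de Morgan formulas with the correct dependence on the per-input error, composed with the per-query $\mathcal{G}$-subprotocols so that the protocol error and the query-algorithm error accumulate cleanly, without extra polylog losses beyond the advertised $\log^2(1/\varepsilon)$ and $\log n$ factors. Items~2 and~3 are comparatively routine once Item~1 is in place, but require a Nisan--Wigderson construction whose seed length and error are finely tuned so that the distinguishing advantage of $\mathsf{MKtP}$ (respectively $\mathsf{MCSP}$) against the pseudorandom output survives the reduction; this matches the hardness-magnification framework of \citep{OPS19}.
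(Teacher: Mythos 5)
Your route differs from the paper's in Item~1, where the key technical claim you lean on is actually false, and in Item~3, where a necessary hypothesis is missing. Let me explain.

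\textbf{Item 1: Ingredient (i) is not true.} You claim that every size-$s$ de Morgan formula admits a \emph{randomized} decision-tree evaluator making $\tilde{O}(\sqrt{s}\cdot\log(1/\delta))$ queries. No such algorithm exists. Already $\mathrm{OR}_s$ is a de Morgan formula of size $s$ with randomized query complexity $\Theta(s)$, not $\tilde{O}(\sqrt{s})$. The $\tilde{O}(\sqrt{s})$-query evaluator (Reichardt) is a \emph{quantum} query algorithm; the $p^2$-shrinkage exponent and/or the adversary bound are used to bound quantum query complexity and give degree-$O(\sqrt{s})$ \emph{approximating polynomials}, but neither yields a classical randomized query algorithm. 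Consequently the composition ``formula evaluator of $\tilde{O}(\sqrt{s})$ queries $\circ$ per-query NOF subprotocol'' that you propose does not get off the ground as stated. The paper sidesteps this entirely: it takes a degree-$d$ approximating polynomial $p$ for the top formula ($d=O(\sqrt{s}\log(1/\varepsilon))$), expands $p(g_1,\dots,g_s)$ in the Fourier basis, and observes that correlation of $F$ with $\GIP_n^k$ forces \emph{some single monomial} — an XOR of at most $d$ leaf functions — to carry correlation at least $s^{-O(d)}$ (\Cref{lem:lb-Tal-2}); that XOR is then a function of deterministic $k$-party NOF cost $\leq d\cdot R^{(k)}_{\varepsilon/(2s)}(\mathcal{G})$, and the BNS discrepancy bound finishes. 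This is analytic, not algorithmic: no evaluator for the formula is ever run. Separately, the paper handles the randomized leaf protocols by an averaging/convexity step (\Cref{thm:lb-proof-1}): write the randomized protocol as a convex combination of deterministic ones, argue the average still approximates $C$, and pick out a good deterministic realization. If you insist on the protocol-composition picture you should adopt this polynomial-expansion route; it is essentially required here because the top-formula ``cost'' must be a polynomial degree, not a randomized query count.

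\textbf{Item 3 does not follow ``by the same argument'' replacing $\mathrm{Kt}$ by circuit size.} For $\MKtP$ the argument is that every output of a polynomial-time PRG $G$ has $\mathrm{Kt}$-complexity bounded by about its seed length plus $O(\log n)$ — this is essentially automatic. For $\MCSP$ you need the analogous statement that the output of $G$, \emph{read as a truth table of a function on $\log n$ variables}, has small \emph{circuit} complexity. This is not automatic for an arbitrary polynomial-time PRG (in particular not for your GIP-based NW/Yao-style generator). The paper is careful about this: for $\FORMULA\circ\XOR$ it does \emph{not} use the GIP-based generator at all, but instead plugs in a small-bias generator (\Cref{thm:prg-formula-xor}), and then crucially uses that such a generator is \emph{local} — each output bit is a small circuit in the seed and the index — so that the image consists of truth tables of small circuits, enabling the distinguishing argument via $\MCSP$ (\Cref{thm:prg-formula-xor-mcsp}, citing~\cite{CKLM19}). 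Your sketch omits the locality requirement and the switch of generator, which is where the actual work in Item~3 lies.

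\textbf{Item 2.} Your picture is roughly right: the paper builds a PRG fooling $\FORMULA\circ\mathcal{G}$ from the average-case hardness of $\GIP^k_n$ (\Cref{lem:mild-prg}) and then notes that $\MKtP$ distinguishes it from uniform. One small correction: the paper uses a Yao/BMG-style one-bit-extension hybrid, not a full Nisan--Wigderson combinatorial-design construction, and the seed length it obtains is the mild $n - n/\text{poly}(\cdot)$, which is already enough for the $\MKtP$ distinguisher; you do not need the designs.
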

	
	\vspace{0.1cm}
	
	Observe that, while \citep{Tal17} showed that the Inner Product function $\mathsf{IP}_n$ is hard against sub-quadratic bipartite formulas, Theorem \ref{thm:main_lbs} Item 1 yields lower bounds against formulas whose leaves can compute bounded-degree PTFs and $\mathsf{GF}(2)$-polynomials, including $\mathsf{IP}_n$. 
    PTF circuits were previously studied by Nisan \cite{Nis94}, who obtained an almost linear $\Omega_d(n^{1-o(1)})$ gate complexity lower bound against circuits with degree-$d$ PTF gates. Recently, \cite{KKL17} gave a super-linear \emph{wire} complexity lower bound for constant-depth circuits with constant-degree PTF gates. However, it was open whether we can prove lower bounds against any circuit model that can incorporate a linear number of PTF gates. In fact, it was open before this work to show a super-linear gate complexity lower bound against $\AND\circ \PTF$.
	
	Let us now comment on the relevance of Items 2 and 3. Both $\mathsf{MCSP}$ and $\mathsf{MKtP}$ are believed to be computationally much harder than $\mathsf{GIP}_n^k$. However, it is more difficult to analyze these problems compared to $\mathsf{GIP}_n^k$ because the latter is mathematically ``structured,'' while the former problems do not seem to be susceptible to typical algebraic, combinatorial, and analytic techniques.
	
	More interestingly, $\mathsf{MCSP}$ and $\mathsf{MKtP}$ play an important role in the theory of hardness magnification (see \citep{OPS19, Magnification_FOCS19}). In particular, if one could show that $\mathsf{MCSP}$ restricted to an input parameter $\ell \leq n^{o(1)}$ is not in $\FORMULA[n^{1 + \varepsilon}] \circ \mathsf{XOR}$ for some $\varepsilon > 0$, then it would follow that $\mathsf{NP}$ cannot be computed by Boolean formulas of size $n^c$, where $c \in \mathbb{N}$ is arbitrary. Theorem \ref{thm:main_lbs} makes partial progress on this direction by establishing the first lower bounds for these problems in the $\FORMULA \circ \mathcal{G}$ model. (We note that the proof of Theorem \ref{thm:main_lbs} Item 3 requires instances where the parameter $\ell$ is $n^{\Omega(1)}$.)
	
	\subsubsection{Pseudorandom generators}\label{sec:intro_prg}
	
	We also get pseudorandom generators (PRGs) against $\FORMULA\circ\mathcal{G}$ for various classes of functions $\mathcal{G}$. Recall that a PRG against a class of functions $\mathfrak{C}$ is a function $G$ mapping short Boolean strings (seeds) to longer Boolean strings, so that every function in $\mathfrak{C}$ accepts $G$'s output on a uniformly random seed with about the same probability as that for an actual uniformly random string. More formally, $G\colon\{0,1\}^{\ell}\to\{0,1\}^n$ is a  PRG that $\varepsilon$-fools $\mathfrak{C}$ if for every Boolean function $h \colon \{0,1\}^n \to \{0,1\}$ in $\mathfrak{C}$, we have
	\[
        \left|\Prob_{z\in\{0,1\}^{\ell}}[h(G(z))=1] - \Prob_{x\in\{0,1\}^n}[h(x)=1]\right| \;\leq\; \varepsilon.
	\]
	Furthermore, we require $G$ to run in deterministic time $\mathsf{poly}(n)$ on an input string $z \in \{0,1\}^\ell$. The parameter $\ell = \ell(n)$ is called the seed length of the PRG and is the main quantity to be minimized when constructing PRGs.

	There exists a PRG that fools formulas of size $s$ and that has a seed of length $s^{1/3 + o(1)}$~\citep{IMZ12}. In particular, there are non-trivial PRGs for $n$-variate formulas of size nearly $n^3$. Unfortunately, such PRGs cannot be used to fool even linear size formulas over parity functions, since the naive simulation of these enhanced formulas by standard Boolean formulas requires size $n^3$. Moreover, it is not hard to see that this simulation is optimal: Andreev's function, which is hard against formulas of nearly cubic size (cf.~\citep{Has98}), can be easily computed in $\FORMULA[O(n)] \circ \mathsf{XOR}$. Given that a crucial idea in the construction of the PRG in \citep{IMZ12} (shrinkage under restrictions) comes from this lower bound proof, new techniques are needed in order to approach the problem in the $\FORMULA \circ \mathsf{XOR}$ model.
	
	More generally, extending a computational model for which strong PRGs are known to allow parities at the bottom layer can cause significant difficulties. A well-known example is $\mathsf{AC}^0$ circuits and their extension to $\mathsf{AC}^0$-$\mathsf{XOR}$. While the former class admits PRGs of poly-logarithmic seed length (see e.g.~\citep{ST19}), the most efficient PRG construction for the latter has seed length $(1 - o(1)) \cdot n$ \citep{DBLP:journals/toc/FeffermanSUV13}. Consequently, designing PRGs of seed length $\leq (1 - \Omega(1)) \cdot n$ can already be a challenge. We are not aware of previous results on PRGs for $\FORMULA \circ \mathcal{G}$ for any non-trivial class $\mathcal{G}$. 
	
	By combining ideas from circuit complexity and communication complexity, we construct PRGs of various seed lengths for $\FORMULA\circ\mathcal{G}$, where $\mathcal{G}$ ranges from the class of parity functions to the much larger class of functions of bounded randomized $k$-party communication complexity.
	
	\begin{theorem}[Pseudorandom generators] \label{thm:main_PRG}
    	Let $\mathcal{G}$ be a class of $n$-bits functions. Then,
    	\begin{itemize}
            \item[\emph{1.}] In the context of parity functions, there is a \emph{PRG} that $\varepsilon$-fools $\FORMULA[s] \circ \mathsf{XOR}$ of seed length
            $$
                \ell \; = \; O \! \left (\sqrt{s}\cdot\log(s)\cdot\log(1/\varepsilon) + \log(n)\right).
            $$
            
            \item[\emph{2.}] In the context of two-party randomized communication complexity, there is a \emph{PRG} that $\varepsilon$-fools $\FORMULA[s] \circ \mathcal{G}$ of seed length
            $$
                \ell \; = \;  n/2+O \! \left(\sqrt{s}\cdot\left(R^{(2)}_{\varepsilon/(6s)}(\mathcal{G})+\log(s)\right)\cdot\log(1/\varepsilon)\right).
            $$
            More generally, for every $k(n) \geq 2$, let $\mathcal{G}$ be the class of functions that have $k$-party number-in-hand \emph{(NIH)} $(\varepsilon/6s)$-error randomized communication protocols of cost at most $R^{(k\text{-}\mathsf{NIH})}_{\varepsilon/(6s)}$. There exists a \emph{PRG} that $\varepsilon$-fools $\FORMULA[s]\circ \mathcal{G}$ with seed length
            $$
               \ell \; = \; n/k+O \! \left(\sqrt{s}\cdot \left (R^{(k\text{-}\mathsf{NIH})}_{\varepsilon/(6s)} +\log (s) \right )\cdot \log(1/\varepsilon) +\log(k)\right)\cdot\log(k).
            $$
            
            \item[\emph{3.}] In the setting of $k$-party \emph{NOF} randomized communication complexity, there is a \emph{PRG} that $\varepsilon$-fools $\FORMULA[s] \circ \mathcal{G}$ of seed length
            $$
                \ell \; = \; n-\frac{n}{O \! \left(\sqrt{s}\cdot k \cdot 4^k \cdot \left(  R^{(k)}_{\varepsilon/(2s)}(\mathcal{G})+\log(n)\right)\cdot\log(n/\varepsilon)\right)}.
            $$
    	\end{itemize}
	\end{theorem}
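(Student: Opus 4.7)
The plan is to combine (a) a shrinkage lemma for the $\FORMULA \circ \mathcal{G}$ model under random restrictions, in the spirit of Hastad/Tal, with (b) pseudorandom generators that fool a single $\mathcal{G}$-leaf on an appropriate input distribution, and compose them via iterated pseudorandom restriction as in Impagliazzo-Meka-Zuckerman.

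The first step is the shrinkage lemma: for a random $p$-restriction $\rho$, a formula $f \in \FORMULA[s] \circ \mathcal{G}$ satisfies $f|_\rho \in \FORMULA[O(p^2 s)] \circ \mathcal{G}$ with high probability, by applying Hastad's (or Tal's) shrinkage analysis to the de Morgan structure above the leaves; each surviving leaf stays in $\mathcal{G}$ on the free coordinates by closure under variable restrictions. The shrinkage is then used inside the PRG as follows: apply $O(\log s)$ pseudorandom $p$-restrictions drawn from an almost-$q$-wise independent distribution over $\{0,1,\star\}^n$ (or, alternatively, a single restriction with $p \approx 1/\sqrt{s}$) to collapse the formula to $O(1)$ surviving leaves, then fool a single $\mathcal{G}$-gate on the remaining free coordinates using a leaf-level PRG. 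A union bound over the $s$ leaves introduces the $\varepsilon/O(s)$ error-per-leaf required in each statement, and the $\sqrt{s}$ factor in the seed length comes from the restriction-round bookkeeping, matching the shrinkage exponent of $2$.

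The leaf-level PRG is where the three parts diverge. For \textbf{Part~1} ($\mathcal{G} = \mathsf{XOR}$), a Naor-Naor $\varepsilon$-biased generator of seed length $O(\log(n/\varepsilon))$ fools every parity, yielding the additive $\log n$ term. For \textbf{Part~2} ($k$-party NIH cost $R$), I would take $n/k$ truly random coordinates corresponding to one player and use a seeded extractor / pairwise-independent hash using $O(R \cdot \log(1/\varepsilon))$ additional bits to simulate the joint view of the NIH protocol consistent with the transcript; the $\log k$ factor in the seed length arises from generalizing to $k$ parties via a recursive $2$-party construction. For \textbf{Part~3} ($k$-party NOF cost $R$), I would exploit the connection between NOF complexity and cylinder-intersection discrepancy (Babai-Nisan-Szegedy) and use a PRG against cylinder intersections, incurring the customary $k \cdot 4^k$ loss from the discrepancy-based analysis. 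Composing the $O(\sqrt{s})$-bit restriction seed with the leaf-level cost reproduces the three claimed bounds.

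The main obstacle is the shrinkage lemma for $\FORMULA \circ \mathcal{G}$ when $\mathcal{G}$-leaves are powerful enough that individual leaves rarely become constant under a mild restriction. I would handle this by treating each surviving $\mathcal{G}$-leaf as an extended variable and arguing that shrinkage of the de Morgan skeleton suffices---formula contraction is driven by leaves whose inputs are entirely fixed (which occurs at a rate the shrinkage argument tolerates, especially for low-communication $\mathcal{G}$ where a leaf's dependence on a small number of coordinates can be extracted from its protocol). A secondary challenge is obtaining the $k \cdot 4^k$ factor in the Part~3 seed length with the correct dependence on $R$ and $\log(n/\varepsilon)$; this requires a careful composition of the cylinder-intersection PRG with the shrinkage-based restriction scheme, together with an accurate accounting of error amplification across the $\log s$ restriction rounds.
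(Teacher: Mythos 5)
Your proposal rests on a shrinkage lemma for $\FORMULA\circ\mathcal{G}$ under random restrictions, and that lemma does not hold in this model. The de~Morgan shrinkage argument (H{\aa}stad, Tal, IMZ) is driven by leaves getting \emph{killed}, i.e., becoming constant, under the restriction: a literal survives with probability $p$, which fuels the $p^2$ contraction factor. When leaves are functions from $\mathcal{G}$, a leaf becomes constant only if essentially \emph{all} of its relevant inputs are fixed. This virtually never happens for the classes this theorem is about: an $\mathsf{XOR}$ over $\Omega(n)$ variables, an $\mathsf{LTF}$ with $\Omega(n)$ nonzero weights, or a generic bipartite/low-communication function all remain live and non-constant under a $p$-restriction with $p = n^{-1/2}$ or $p = 1/\mathrm{polylog}(n)$. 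So the skeleton has no reason to contract, the ``treat each surviving leaf as an extended variable'' step fails because fixing original coordinates does not induce a random restriction on those extended variables, and the claim that ``low-communication $\mathcal{G}$'' means ``a leaf's dependence on a small number of coordinates can be extracted'' is simply false (parity over all $n$ bits has communication cost $2$ and full dependence). The paper explicitly observes that the IMZ shrinkage-based PRG breaks down here, which is the reason a different route is needed.

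The paper's actual route avoids restrictions entirely. It uses Reichardt's theorem that every size-$s$ de~Morgan formula has a real point-wise $\varepsilon$-approximating polynomial of degree $O(\sqrt{s}\cdot\log(1/\varepsilon))$, and substitutes the leaf functions $g_1,\dots,g_s$ into that polynomial. Each monomial is then a $\pm 1$-product, i.e., an $\XOR$, of at most $d = O(\sqrt{s}\cdot\log(1/\varepsilon))$ functions from $\mathcal{G}$. Because the polynomial has $L_1$-norm at most $s^{O(d)}$, it suffices to fool every such monomial to error $\delta = 2^{-\Theta(\sqrt{s}\log s\log(1/\varepsilon))}$. Part~1 then follows from small-bias generators (an XOR of XORs is an XOR); Part~2 from the INW-type extractor PRG against low-cost NIH protocols, since an XOR of $d$ functions of cost $R$ has cost $dR$ (with randomized protocols reduced to a convex combination of deterministic ones); and Part~3 from a hardness-versus-randomness construction driven by the $\GIP$ average-case lower bound of Theorem~\ref{thm:lb-main}, not from a cylinder-intersection PRG. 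Your restriction-based plan and error-per-leaf bookkeeping would need to be replaced wholesale by something along these lines.
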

	
    A few comments are in order. Under a standard connection between PRGs and lower bounds (see e.g.~\cite{kabanets2002derandomization}), improving the dependence on $s$ in the seed length for $\FORMULA[s] \circ \mathsf{XOR}$ (Theorem \ref{thm:main_PRG} Item 1) would require the proof of super-quadratic lower bounds against $\FORMULA \circ \mathsf{XOR}$. We discuss this problem in more detail in Section \ref{sec:concluding}. Note that the additive term $n/2$ is necessary in Theorem \ref{thm:main_PRG} Item 2, since the model computes in particular every Boolean function on the first $n/2$ input variables (i.e.~a protocol of communication cost $1$). Similarly, $\ell \geq (1 - 1/k) \cdot n$ in Theorem \ref{thm:main_PRG} Item 3. Removing the exponential dependence on $k$ would also require advances in state-of-the-art lower bounds for multiparty communication complexity.

	Theorem \ref{thm:main_PRG} Item 2 has an interesting implication for fooling a well-studied class of functions: \emph{intersections of halfspaces}.\footnote{Clearly, the intersection of $s$ functions can be computed by an enhanced formula of size $s + 1$.} Note that an intersection of halfspaces is precisely a polytope, or equivalently, the set of solutions of a $0$-$1$ integer linear program. Such objects have found applications in many fields, including optimization and high-dimensional geometry.  After a long sequence of works on the construction of PRGs for bounded-weight halfspaces, (unrestricted) halfspaces, and generalizations of these classes,\footnote{We refer to the recent reference \citep{OST19} for an extensive review of the literature in this area.}  the following results are known for the intersection of $m$ halfspaces over $n$ input variables. Gopalan, O'Donnell, Wu, and Zuckerman~\citep{GOWZ10} gave a PRG for this class for error $\varepsilon$  with seed length 
	\[
	    O\big (m\cdot\log(m/\varepsilon)+\log n)\cdot\log(m/\varepsilon) \big).
    \]
    Note that the seed length of their PRG becomes trivial if the number of halfspaces is linear in $n$. More recently, O'Donnell, Servedio and Tan~\cite{OST19} constructed a PRG with seed length 
    \[
        \poly(\log(m),1/\varepsilon)\cdot\log(n).
    \]
    Their PRG has a much better dependence on $m$, but it cannot be used in the small error regime. For example, the seed length becomes trivial if $\varepsilon = 1/n$. In particular, before this work it was open to construct a non-trivial PRG for the following natural setting of parameters (cf.~\citep[Section 1.2]{OST19}): intersection of $n$ halfspaces with error $\varepsilon = 1/n$. 
    
    We obtain the following consequence of Theorem \ref{thm:main_PRG} Item 2, which follows from a result of Viola \citep{Vio15} on the $k$-party \emph{number-in-hand} randomized communication complexity of a halfspace.
    
    \begin{corollary}[Fooling intersections of halfspaces in the low-error regime]\label{cor:PRG_LTF}
        For every $n, m \in \mathbb{N}$ and $\varepsilon > 0$, there is a pseudorandom generator with seed length 
        \[
			O\! \left(n^{1/2}\cdot m^{1/4}\cdot \log(n)\cdot \log(n/\varepsilon)\right).
		\]
		that $\varepsilon$-fools the class of intersections of $m$ halfspaces over $\{0,1\}^n$.
    \end{corollary}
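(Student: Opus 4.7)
The plan is to derive the corollary from the general $k$-party NIH version of Theorem~\ref{thm:main_PRG} Item~2. First, I would observe that an intersection of $m$ halfspaces is computed by an $m$-leaf AND tree whose leaves are halfspace gates, so the target class lies in $\FORMULA[m]\circ\LTF$. Hence the theorem applies with $s=m$ and $\mathcal{G}$ the class of halfspaces on $n$ variables, and the only remaining input needed is a good bound on $R^{(k\text{-}\mathsf{NIH})}_{\varepsilon/(6m)}(\LTF)$.

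For that step I would invoke the result of Viola~\citep{Vio15} on the $k$-party NIH randomized communication complexity of halfspaces: using a protocol in which each of the $k$ players transmits a bounded-precision approximation of the weighted partial sum of the halfspace's coefficients on its $n/k$ input bits, one obtains $R^{(k\text{-}\mathsf{NIH})}_{\delta}(\LTF) = \widetilde{O}(k \cdot \mathrm{polylog}(n/\delta))$. The delicate point (and the technical heart of \citep{Vio15}) is that halfspace weights can be exponentially large, so the rounding must be carried out in a Nisan-style fashion (hashing / small-bias sampling) in order to correctly decide the threshold on boundary inputs with error $\delta$.

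Substituting $s=m$ and this bound into the seed-length formula of Theorem~\ref{thm:main_PRG} Item~2 yields
$$\ell \;=\; n/k + O\!\big(\sqrt{m}\cdot(k+\log(n/\varepsilon))\cdot \log(1/\varepsilon)+\log k\big)\cdot \log k,$$
up to polylogarithmic absorption. I would then balance the two dominant terms $n/k$ and $\sqrt{m}\cdot k$ by choosing $k=\lceil n^{1/2}/m^{1/4}\rceil$, which makes both terms $\Theta(n^{1/2}m^{1/4})$. Collecting the remaining $\log k=O(\log n)$ and $\log(1/\varepsilon)$ factors delivers the claimed seed length $O(n^{1/2}m^{1/4}\log(n)\log(n/\varepsilon))$.

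The main obstacle is pinning down the correct form of Viola's NIH halfspace protocol and verifying that its dependence on the number of players $k$ (rather than on $n$) is what balances against the $n/k$ overhead inherent in Theorem~\ref{thm:main_PRG} Item~2; once that communication bound is taken as a black box, everything else is a routine optimization in $k$.
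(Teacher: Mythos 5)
Your proposal follows essentially the same route as the paper: place intersections of $m$ halfspaces in $\FORMULA[m]\circ\LTF$, invoke the $k$-party NIH instance of Theorem~\ref{thm:main_PRG} Item~2 together with a $\widetilde{O}(k\cdot\mathrm{polylog}(n/\delta))$ bound on the NIH randomized communication complexity of a halfspace (the paper realizes this via Nisan's $\SG$ protocol after first reducing to $O(n\log n)$-bit weights via~\cite{MTT61}, with Viola's refinement noted as an alternative), and then optimize over $k$. One small quantitative point: the choice $k = n^{1/2}/m^{1/4}$ by itself leaves an extra $\log(n)\log(1/\varepsilon)$ factor in the communication term, so to land exactly on the claimed $O(n^{1/2}m^{1/4}\log(n)\log(n/\varepsilon))$ you should absorb the logs into the denominator of $k$ as the paper does, taking $k = n^{1/2}/\bigl(m^{1/4}\log(n)\log(n/\varepsilon)\bigr)$.
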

    
    We note that the PRG from Theorem \ref{thm:main_PRG} Item 3 can fool, even in the exponentially small error regime, not only intersections of halfspaces, but also small formulas over bounded-degree PTFs. 
    
    Finally, Theorem \ref{thm:main_PRG} Item 2 yields the first non-trivial PRG for formulas over symmetric functions. Let $\mathsf{SYM}$ denote the class of symmetric Boolean functions on any number of input variables.
    
    \begin{corollary}[Fooling sub-quadratic formulas over symmetric gates]\label{cor:PRG_FORM_SYM}
        For every $n, s \in \mathbb{N}$ and $\varepsilon > 0$, there is a pseudorandom generator with seed length
        \[
		    O \! \left(n^{1/2}\cdot s^{1/4}\cdot \log(n)\cdot \log(1/\varepsilon) \right).
		\]
		that $\varepsilon$-fools $n$-variate Boolean functions in $\FORMULA[s]\circ \SYM$. 
	\end{corollary}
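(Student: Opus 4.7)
The plan is to derive Corollary~\ref{cor:PRG_FORM_SYM} from Theorem~\ref{thm:main_PRG} Item 2 (the general $k$-party number-in-hand version) by exhibiting a cheap multi-party \textsf{NIH} protocol for symmetric functions and optimizing $k$. Fix any Boolean function $g \in \mathsf{SYM}$ depending on some subset $S$ of the $n$ input variables, together with the canonical partition of $[n]$ into $k$ blocks. I would observe that in the $k$-party \textsf{NIH} model, each player sees its own block of inputs, and therefore each player can locally compute the Hamming weight of the restriction of $S$ to their block. Since each such local weight is an integer in $\{0,1,\ldots,n\}$ and hence fits in $\lceil \log(n+1) \rceil$ bits, the $k$ players broadcast their weights (in any fixed order), one party sums them up, applies the symmetric function to the total, and announces the output. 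This gives a deterministic $\textsf{NIH}$ protocol (hence also a zero-error randomized protocol) of cost $O(k \log n)$ for every $g \in \mathsf{SYM}$ and every $k \geq 2$, under any partition of the $n$ variables. In particular $R^{(k\text{-}\mathsf{NIH})}_{\varepsilon/(6s)}(\mathsf{SYM}) \leq O(k \log n)$.

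Plugging this bound into Theorem~\ref{thm:main_PRG} Item 2, the seed length of the resulting \textsf{PRG} that $\varepsilon$-fools $\FORMULA[s] \circ \mathsf{SYM}$ becomes
\[
    \ell(k) \; = \; \frac{n}{k} \;+\; O\!\left(\sqrt{s}\cdot\bigl(k \log n + \log s\bigr)\cdot \log(1/\varepsilon) \,+\, \log k\right)\cdot \log k.
\]
Absorbing lower-order terms and using $\log k \le \log n$ and $\log s \le \log n$, this simplifies to
\[
    \ell(k) \; \leq \; \frac{n}{k} \,+\, O\!\left(\sqrt{s}\cdot k \cdot \log^{2}(n) \cdot \log(1/\varepsilon)\right).
\]
The next step is to balance these two terms. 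Setting $n/k \asymp \sqrt{s}\cdot k \cdot \log^{2}(n)\cdot \log(1/\varepsilon)$ yields the choice
\[
    k \; \asymp \; \frac{\sqrt{n}}{s^{1/4}\cdot \log(n)\cdot \sqrt{\log(1/\varepsilon)}},
\]
and substituting this back gives both terms of order $O\bigl(\sqrt{n}\cdot s^{1/4}\cdot \log(n)\cdot \sqrt{\log(1/\varepsilon)}\bigr)$, which is bounded by $O\bigl(n^{1/2}\cdot s^{1/4}\cdot \log(n)\cdot \log(1/\varepsilon)\bigr)$, matching the statement of the corollary. (One should round $k$ to a positive integer, and note that if the formula $k$ would prescribe a value below $2$ or above $n$, one can instead use the trivial PRG that reads the full input; in those degenerate parameter regimes the claimed seed length is $\Omega(n)$ anyway.)

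There is essentially no hard step: the content of the argument is the protocol for symmetric functions, which is immediate in the \textsf{NIH} model because each party has private access to its own inputs and Hamming weights compose additively. The one minor subtlety worth double-checking is that each leaf of a $\FORMULA[s]\circ\mathsf{SYM}$ formula may compute a symmetric function on an arbitrary subset $S \subseteq [n]$, not on all $n$ variables; but restricting the canonical $k$-partition to $S$ still distributes $|S|$ bits among the $k$ players, so the same broadcast-and-sum protocol applies with cost $O(k\log n)$ uniformly over all such leaves. Thus Theorem~\ref{thm:main_PRG} Item 2 is applicable to $\mathcal{G} = \mathsf{SYM}$ with the communication bound above, and the optimization of $k$ above completes the proof of Corollary~\ref{cor:PRG_FORM_SYM}.
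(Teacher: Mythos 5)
Your proposal is correct and follows essentially the same route as the paper: the paper proves this corollary (as part of Theorem~\ref{thm:prg-formula-sym}) by observing that symmetric functions have a deterministic $k$-party NIH protocol of cost $k\log n$, plugging this into the general $k$-party NIH PRG (Theorem~\ref{thm:prg-formula-lc}/\ref{thm:prg-formula-lc-det}), and optimizing $k \approx n^{1/2}/(s^{1/4}\log n)$. Your handling of the subtlety that leaves may be symmetric in a subset of coordinates, and your slightly finer choice of $k$ absorbing a $\sqrt{\log(1/\varepsilon)}$ factor, are both sound and do not change the character of the argument.
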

	
	Prior to this work, Chen and Wang \citep{DBLP:conf/innovations/ChenW19} proved that the number of satisfying assignments of an $n$-variate formula of size $s$ over symmetric gates can be approximately counted to an additive error term $\leq \varepsilon\cdot 2^n$ in deterministic time $\exp{\!(n^{1/2}\cdot s^{1/4 + o(1)} \sqrt{(\log (n) + \log (s))} )}$, where $\varepsilon >0$ is an arbitrary constant. While their upper bound is achieved by a white-box algorithm, Corollary \ref{cor:PRG_FORM_SYM} provides a (black-box) PRG for the same task.
	
	
	
	\subsubsection{Satisfiability algorithms}
	
	In the $\#$SAT problem for a computational model $\mathcal{C}$, we are given as input the description of a computational device $D(x_1, \ldots, x_n)$ from $\mathcal{C}$, and the goal is to count the number of satisfying assignments for $D$. This generalizes the SAT problem for $\mathcal{C}$, where it is sufficient to decide whether $D$ is satisfiable by some assignment. 
	
	In this section, we show that $\#$SAT algorithms can be designed for a broad class of functions. We consider the $\FORMULA \circ \mathcal{G}$ model for classes $\mathcal{G}$ that admit two-party communication protocols of bounded cost. We establish a general result in this context which can be used to obtain algorithms for previously studied classes of Boolean circuits.
	
	To put our $\#$SAT algorithms for $\FORMULA \circ \mathcal{G}$ into context, we first mention relevant related work on the satisfiability of Boolean formulas. Recall that in the very restricted setting of CNF formulas, known algorithms run (in the worst-case) in time $2^{n - o(n)}$ when the input formulas can have a super-linear number of clauses (cf.~\citep{dantsin2009worst}). On the other hand, for the class of general formulas, there is a better-than-brute-force algorithm for formulas of size almost $n^3$. In more detail, for any $\varepsilon >0$, there is a deterministic $\#$SAT algorithm for $\FORMULA[n^{3 - \varepsilon}]$ that runs in time $2^{n - n^{\Omega(\varepsilon)}}$ \citep{DBLP:journals/eccc/Tal15}. No results are known for formulas of cubic size and beyond, and for the reasons explained in Section \ref{sec:intro_prg}, the algorithm from \citep{DBLP:journals/eccc/Tal15} cannot even be applied to $\FORMULA \circ \mathsf{XOR}$.  
	
	Before stating our results, we discuss the input encoding in the $\#$SAT problem for $\FORMULA \circ \mathcal{G}$. The top formula $F$ is represented in some canonical way, while for each leaf $\ell$ of $F$, the input string contains the description of a protocol $\Pi_\ell$ computing a function in $\mathcal{G}$. Our results are robust to the encoding employed for $\Pi_\ell$. Recall that a protocol for a two-party function is specified by a protocol tree and a sequence of functions, where each function is associated with some internal node of the tree and depends on $n/2$ input bits. Since a protocol of communication cost $o(n)$ has a protocol tree containing at most $2^{o(n)}$ nodes, it can be specified by a string of length $2^{n/2 + o(n)}$. Our algorithms will run in time closer to $2^n$, and using a fully explicit input representation for the protocols is not an issue. Another possibility for the input representation is to use ``computational efficient'' protocols. Informally, the next bit messages of such protocols can be computed in polynomial time from the current transcript of the protocol and a player input. An advantage of this representation is that an input to our $\#$SAT problem can be succinctly represented. We observe that these input representations can be generalized to randomized two-party protocols in natural ways. We refer to Section \ref{sec:preliminaries} for a formal presentation.
	
	We obtain non-trivial satisfiability algorithms assuming upper bounds on the two-party deterministic and randomized communication complexities of functions in $\mathcal{G}$.
	
	\begin{theorem}[Satisfiability algorithms]\label{thm:main_sat}
	The following results hold.
    	\begin{itemize}
    	    \item[\emph{1.}] There is a deterministic $\#$\emph{SAT} algorithm for $\FORMULA[s] \circ \mathcal{G}$ that runs in time
    	    $$
    	        2^{n - t},~\text{where}~t = \Omega\! \left ( \frac{n} {\sqrt{s} \cdot \log^2(s) \cdot D(\mathcal{G}) }\right).
    	    $$
    	    
    	    \item[\emph{2.}] There is a randomized $\#$\emph{SAT} algorithm for $\FORMULA[s] \circ \mathcal{G}$ that runs in time
    	    $$
    	        2^{n - t},~\text{where}~t = \Omega\! \left ( \frac{n} {\sqrt{s} \cdot \log^2(s) \cdot R_{1/3}(\mathcal{G}) }  \right )^{\! 1/2}.
    	    $$
    	\end{itemize}
	\end{theorem}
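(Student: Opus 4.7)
Both items follow the same template: an outer random-restriction/shrinkage layer for the top formula, combined with an inner \#SAT subroutine that exploits the communication complexity of $\mathcal{G}$.

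\textbf{Item 1 (deterministic).} Partition the $n$ inputs into the two communication halves. With a live probability $p$ to be optimized, I would first invoke the concentrated shrinkage for de Morgan formulas (Impagliazzo--Matou\v{s}ek--Zuckerman, Tal), derandomized in Santhanam's recursive style so that seeds where shrinkage fails get re-restricted, to obtain a cube cover of $\{0,1\}^n$ of size at most $2^{(1-p)n}\cdot n^{O(1)}$ on which each restricted formula $F'$ lies in $\FORMULA[s']\circ\mathcal{G}$, where $s' = O(p^2 s \cdot \log^{O(1)} s)$ on $m = pn$ live variables. The inner subroutine uses a cost-$c$ deterministic protocol per leaf, with $c = D(\mathcal{G})$: each such protocol partitions $\{0,1\}^m$ into at most $2^c$ monochromatic combinatorial rectangles $A^i_T\times B^i_T$, and taking the product partition across the $s'$ leaves yields at most $2^{cs'}$ rectangles $\bigl(\bigcap_i A^i_{T_i}\bigr)\times\bigl(\bigcap_i B^i_{T_i}\bigr)$ on each of which $F'$ is constant. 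The inner \#SAT count is
\[
    \sum_{(T_1,\dots,T_{s'})} F(v^1_{T_1},\dots,v^{s'}_{T_{s'}}) \cdot \Bigl|\bigcap_i A^i_{T_i}\Bigr|\cdot \Bigl|\bigcap_i B^i_{T_i}\Bigr|,
\]
where each intersection cardinality is computed by explicit half-enumeration in time $O(2^{m/2})$, giving inner time $\widetilde{O}(2^{m/2 + cs'})$. The total running time is $2^{n-(m/2 - cs')}$; optimizing $p$ in the savings $pn/2 - c\cdot O(p^2 s\log^{O(1)} s)$ and writing the result in the stated clean form yields $t = \Omega(n/(\sqrt{s}\cdot \log^2 s\cdot D(\mathcal{G})))$.

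\textbf{Item 2 (randomized).} For each leaf I would take the $R_{1/3}(\mathcal{G})$-cost randomized protocol, amplify it to error $\delta$ at cost $O(R_{1/3}(\mathcal{G})\cdot \log(1/\delta))$, sample a deterministic realization, and feed into the Item 1 pipeline. The algorithm performs $\widetilde{O}(s'\cdot 2^{n-t})$ leaf evaluations in total, so choosing $\delta$ so that a union bound over these evaluations yields overall correctness with probability $\geq 2/3$ forces $\log(1/\delta) = \Theta(n - t + \log s)$, i.e.~an effective communication cost of $\widetilde{\Theta}(R_{1/3}(\mathcal{G})\cdot(n - t))$ replacing $D(\mathcal{G})$ in the Item 1 bound. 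Substituting yields a fixed-point equation for $t$ whose solution is $t = \Omega\bigl((n/(\sqrt{s}\cdot \log^2 s\cdot R_{1/3}(\mathcal{G})))^{1/2}\bigr)$.

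\textbf{Main obstacle.} The central difficulty is turning the probabilistic shrinkage into a worst-case cube cover: the concentrated-shrinkage PRG shrinks the formula only with high probability over its seed, so cubes on which shrinkage fails must be re-restricted recursively, with the extra work charged to a potential-function argument in the style of Santhanam's \#SAT algorithm. A secondary subtlety is that the one-sided sets $\{A^i_T\}_T$ need not individually partition $\{0,1\}^{m/2}$ (only the joint products $A^i_T\times B^i_T$ partition $\{0,1\}^m$), so the intersection sizes $|\bigcap_i A^i_{T_i}|$ must be computed by the explicit half-enumeration above rather than by multiplying single-leaf rectangle sizes; and calibrating $\delta$ in Item 2 so that the amplification overhead exactly balances the savings $t$ is what produces the square root in the randomized bound.
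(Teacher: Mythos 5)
Your plan rests on a shrinkage step that does not exist for $\FORMULA[s]\circ\mathcal{G}$, and this is fatal, not a technical obstacle. Concentrated shrinkage for de Morgan formulas works because each leaf is a single literal, so restricting a $(1-p)$-fraction of the variables kills roughly a $(1-p)$-fraction of the leaves; but when the leaves are functions $g\in\mathcal{G}$ depending on (potentially all) $n$ variables, restricting a subset of variables sets no $\mathcal{G}$-leaf to a constant, and the restricted device still has $s$ leaves from $\mathcal{G}$ on the $m$ live variables. There is no $s'=O(p^2 s\cdot\polylog(s))$. The paper itself flags exactly this: Andreev's function lies in $\FORMULA[O(n)]\circ\XOR$, so shrinkage-based techniques in the style of \citep{IMZ12,Tal14} do not transfer. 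The role that shrinkage plays in classical formula algorithms is instead played here by the approximating polynomial of Theorem~\ref{thm:approx-polynomial}: the size-$s$ top formula is replaced by an explicitly constructed degree-$O(\sqrt{s}\log s\cdot \log(1/\varepsilon))$ polynomial in the leaf values, which is then expanded into monomials each of which is a product of $O(\sqrt{s}\log s\cdot n')$ leaf-functions; each leaf is further decomposed via its protocol tree into a sum of rank-one rectangle indicators $\alpha^i\cdot\beta^i$, yielding an expression $Q'(x)=(A\cdot B)_{\LH{x},\RH{x}}$ that is evaluated at all points at once using Coppersmith's fast rectangular matrix multiplication. That is the engine of Theorem~\ref{thm:sat-main}, and it has nothing to do with random restrictions.

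Two further discrepancies are symptoms of the same misconception. First, even granting the (false) shrinkage claim, your savings $pn/2 - D(\mathcal{G})\cdot p^2 s\cdot\polylog(s)$ is maximized at $p^*=\Theta\!\left(n/(D(\mathcal{G})\cdot s\cdot\polylog(s))\right)$, giving $t=\Theta\!\left(n^2/(D(\mathcal{G})\cdot s\cdot\polylog(s))\right)$, which does not simplify to the stated $\Omega\!\left(n/(\sqrt{s}\log^2 s\cdot D(\mathcal{G}))\right)$ except when $s\approx n^2$; so the arithmetic already tells you this is not the decomposition the theorem is using. Second, in Item~2 your calibration $\log(1/\delta)=\Theta(n-t)$ plugged into the Item~1 form gives $t\cdot(n-t)\approx n/(\sqrt{s}\log^2 s\cdot R)$, i.e.~$t\approx 1/(\sqrt{s}\log^2 s\cdot R)$ rather than a square root. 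The square root in the paper arises because one only needs the sampled deterministic realizations to be simultaneously correct over $s$ leaves and the $2^{n'}$ assignments of the \emph{fixed} variables (not over the $2^{n-n'}$ free points, which are handled by $\poly(n)$ independent repetitions and a majority vote), yielding amplification cost $O(R\cdot(n'+\log s))=\widetilde{O}(R\cdot t)$ and hence $t^2\approx n/(\sqrt{s}\log^2 s\cdot R)$. Your union bound is over the wrong set and misses the repeat-and-majority trick.
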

	
	Theorem \ref{thm:main_sat} readily provides algorithms for many circuit classes. For instance, since one can effectively describe a randomized communication protocol for linear threshold functions \citep{Nis94,Vio15}, the algorithm from Theorem \ref{thm:main_sat} Item 2 can be used to count the number of satisfying assignments of Boolean devices from $\FORMULA[n^{1.99}] \circ \mathsf{LTF}$.
	
	\begin{corollary}[{$\#$SAT algorithm for formulas of linear threshold functions}]\label{cor:sat_formula_ltf}
    	There is a randomized $\#$\emph{SAT} algorithm for $\FORMULA[s] \circ \mathsf{LTF}$ that runs in time 
    	$$
            2^{n - t},~\text{where}~t = \Omega\! \left ( \frac{n} {\sqrt{s} \cdot \log^2(s) \cdot \log(n)}  \right )^{\! 1/2}.
        $$
	\end{corollary}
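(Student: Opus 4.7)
The plan is to derive the corollary by a direct instantiation of Theorem~\ref{thm:main_sat} Item 2 with $\mathcal{G} = \mathsf{LTF}$, combined with the known randomized two-party communication upper bound for a single halfspace. Specifically, the preliminary discussion before Theorem~\ref{thm:main_lbs} records Nisan's result that any halfspace $g$ on $n$ input bits satisfies $R^{(2)}_{\delta}(g) = O(\log n + \log(1/\delta))$; in particular, $R_{1/3}(\mathsf{LTF}) = O(\log n)$. Substituting this bound for $R_{1/3}(\mathcal{G})$ in the expression
\[
    t \;=\; \Omega\!\left(\frac{n}{\sqrt{s}\cdot \log^2(s)\cdot R_{1/3}(\mathcal{G})}\right)^{\!1/2}
\]
provided by Theorem~\ref{thm:main_sat} Item 2 yields exactly the stated bound for $\FORMULA[s]\circ \mathsf{LTF}$.

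The one point worth checking is that the input representation assumed by Theorem~\ref{thm:main_sat} is compatible with the natural encoding of an $\mathsf{LTF}$ leaf (by its weight vector and threshold). As the paper notes in its discussion of the input format, the algorithm is robust to the choice of encoding and in particular accepts ``computationally efficient'' protocol representations, where each next-message function is computable in polynomial time from the current transcript and the relevant player's input. Nisan's randomized halfspace protocol meets this requirement: given the weights $a_1,\dots,a_n$ and threshold $b$ defining a leaf $\ell$, one can produce, in time $\mathrm{poly}(n)$, a succinct description of the corresponding protocol $\Pi_\ell$ of communication cost $O(\log n)$. Performing this conversion once for each of the at most $s$ leaves adds only a $\mathrm{poly}(n,s)$ overhead, which is absorbed into the $2^{n-t}$ running time.

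The main (and essentially only) potential obstacle is the bookkeeping around the variable partition: Theorem~\ref{thm:main_sat} uses the canonical $n/2{+}n/2$ partition of the global input, whereas each $\mathsf{LTF}$ leaf depends on only a subset of the variables, and the $\FORMULA\circ \mathcal{G}$ definition, per the preliminaries, allows leaves to compute arbitrary functions on the relevant inputs. Hence the bound $R^{(2)}_{1/3}(g) = O(\log n)$ should be invoked with respect to the induced partition of each leaf's variables into those ``owned'' by the left player and those ``owned'' by the right player, which is exactly the setting in which Nisan's protocol operates. Once this alignment is observed, no additional ideas beyond Theorem~\ref{thm:main_sat} Item 2 are needed, and the corollary follows immediately.
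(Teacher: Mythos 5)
Your proposal is correct and follows the same route the paper takes: Corollary~\ref{cor:sat_formula_ltf} is obtained by instantiating Theorem~\ref{thm:main_sat}~Item~2 with $\mathcal{G}=\mathsf{LTF}$ and plugging in Nisan's bound $R^{(2)}_{1/3}(\mathsf{LTF})=O(\log n)$. Your additional checks on the protocol encoding and on the variable partition are exactly the concerns the paper addresses in its remark following Lemma~\ref{lem:sat-time-efficient}, so there is no gap.
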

	
	In connection with Corollary \ref{cor:sat_formula_ltf}, prior to this work essentially two lines of research have been pursued. $\#$SAT and/or SAT algorithms were known for bounded-depth circuits of almost-linear size whose gates can compute LTFs or sparse PTFs (see \citep{DBLP:conf/approx/KabanetsL18} and references therein), and for sub-exponential size $\mathsf{ACC}^0$ circuits with two layers of LTFs at the bottom, assuming a sub-quadratic number of them in the layer next to the input variables (see \citep{ACW16} for this result and further related work). Corollary \ref{cor:sat_formula_ltf} seems to provide the first non-trivial SAT algorithm that operates with unbounded-depth Boolean devices containing a layer with a sub-quadratic number of LTFs.
	
	Theorem \ref{thm:main_sat} can be seen as a generalization of several approaches to designing SAT algorithms appearing in the literature, which often employ ad-hoc constructions to convert bottlenecks in the computation of devices from a class $\mathcal{C}$ into non-trivial SAT algorithms for $\mathcal{C}$. We observe that, before this work, \citep{DBLP:conf/soda/PatrascuW10} had made a connection between faster SAT algorithms for CNFs and the 3-party communication complexity of a specific function. Their setting is different though: it seems to work only for CNFs, and they rely on conjectured upper bounds on the communication complexity of a particular problem. More recently, \citep{DBLP:conf/innovations/ChenW19} employed quantum communication protocols to design \emph{approximate counting} algorithms for several problems.\footnote{Recall that approximately counting satisfying assignments is substantially easier than solving $\#$SAT, for which the fastest known algorithms run in time $2^{(1-o(1))n}$.} In comparison to previous works, to our knowledge  Theorem \ref{thm:main_sat} is the first unconditional result that yields faster $\#$SAT algorithms via communication complexity in a generic way.%
    \footnote{It has been brought to our attention that Avishay Tal has independently discovered a SAT algorithm for bipartite formulas of sub-quadratic size (see the discussion in \citep[Page 7]{DBLP:conf/icalp/AbboudB18}), which corresponds to a particular case of Theorem \ref{thm:main_sat}.}
	
	\subsubsection{Learning algorithms}
	
	We describe a learning algorithm for the $\FORMULA \circ \mathsf{XOR}$ class in Leslie Valiant's challenging PAC-learning model \citep{Val84}. Recall that a (PAC) learning algorithm for a class of functions $\mathcal{C}$ has access to labelled examples $(x,f(x))$ from an unknown function $f \in \mathcal{C}$, where $x$ is sampled according to some (also unknown) distribution $\mathcal{D}$. The goal of the learner is to output, with high probability over its internal randomness and over the choice of random examples (measured by a confidence parameter $\delta$), a hypothesis $h$ that is close to $f$ under $\mathcal{D}$ (measured by an error parameter $\varepsilon$). We refer to \citep{KV94} for more information about this learning model, and to Section \ref{sec:preliminaries} for its standard formalization.
	
	It is known that formulas of size $s$ can be PAC-learned in time $2^{\widetilde{O}(\sqrt{s})}$ \citep{Rei11b}. Therefore, formulas of almost quadratic size can be non-trivially learned from random samples of an arbitrary distribution. A bit more formally, we say that a learning algorithm is \emph{non-trivial} if it runs in time $2^n/n^{\omega(1)}$, i.e., noticeably faster than the trivial brute-force algorithm that takes time $2^n \cdot \mathsf{poly}(n)$. Obtaining non-trivial learning algorithms for various circuit classes is closely connected to the problem of proving explicit lower bounds against the class \citep{OS17} (see also \citep{DBLP:conf/innovations/ServedioT17} for a systematic investigation of such algorithms). We are not aware of the existence of non-trivial learning algorithms for super-quadratic size formulas. However, it seems likely that such algorithms exist at least for formulas of near cubic size. As explained in Section \ref{sec:intro_prg}, this would still be insufficient for the learnability of classes such as (linear size) $\FORMULA \circ \mathsf{XOR}$.
	
	
	We explore structural properties of $\FORMULA \circ \mathsf{XOR}$ employed in previous results and boosting techniques from learning theory to show that sub-quadratic size devices from this class can be PAC-learned in time $2^{O(n/\log n)}$.
	
	\begin{theorem}[PAC-learning $\FORMULA \circ \mathsf{XOR}$ in sub-exponential time]\label{thm:main_learning} 
    	For every constant $\gamma > 0$, there is an algorithm that \emph{PAC} learns the class of $n$-variate Boolean functions $\FORMULA[n^{2 - \gamma}]\circ\mathsf{XOR}$ to accuracy $\varepsilon$ and with confidence $\delta$ in time $\mathsf{poly}\big (2^{n/\log n}, 1/\varepsilon, \log(1/\delta) \big )$.
	\end{theorem}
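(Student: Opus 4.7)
My plan is to combine a sparse Fourier approximation for $\FORMULA[s]\circ\mathsf{XOR}$ in the parity basis with a distribution-free boosting framework. First, I would establish that every $f \in \FORMULA[s]\circ\mathsf{XOR}$ is $\varepsilon$-pointwise-approximated by a linear combination $\sum_S a_S \chi_S$ supported on at most $T = 2^{\widetilde{O}(\sqrt{s}\log(1/\varepsilon))}$ parities with $\sum_S|a_S|\le T$. This follows from Reichardt's $L_\infty$-approximate-degree bound for de Morgan formulas applied to the top formula, viewing the $s$ leaf parities as formal variables: there is a polynomial $p$ of degree $d = O(\sqrt{s}\log(1/\varepsilon))$ in those $s$ leaf variables with $\|F - p\|_\infty \le \varepsilon$. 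Substituting $\chi_{S_i}$ for each leaf and using the identity $\chi_{S_{i_1}}\cdot\chi_{S_{i_2}} = \chi_{S_{i_1}\triangle S_{i_2}}$ collapses $p$ into a linear combination of at most $s^d$ parities over the original inputs, bounding both the Fourier sparsity and the Fourier $L_1$ norm simultaneously.

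Next, this sparse $L_1$ representation gives a distribution-free weak-learnability guarantee: for every distribution $D$ some parity $\chi_S$ has $|\mathbb{E}_D[f\cdot\chi_S]| \ge \Omega(1/T)$. Feeding this into a Freund--Schapire or Servedio-style SmoothBoost booster yields a strong $\varepsilon$-accurate PAC hypothesis in $\mathrm{poly}(T, 1/\varepsilon, \log(1/\delta))$ boosting rounds, each round maintaining a hypothesis that is a weighted majority (or real-valued sum) of parities. For $s = n^{2-\gamma}$ with $\gamma > 0$ constant, a direct computation gives $\sqrt{s}\cdot\mathrm{polylog}(s) = o(n/\log n)$ as $n \to \infty$, so $T \le 2^{O(n/\log n)}$, matching the claimed running time of $\mathrm{poly}(2^{n/\log n}, 1/\varepsilon, \log(1/\delta))$.

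The main obstacle is the efficient implementation of the weak learner: finding a parity that correlates with the target under an arbitrary distribution, using only random examples, is essentially Learning Parity with Noise, for which no sub-exponential algorithm is known. I plan to bypass this in two complementary ways: (i) run SmoothBoost so that all intermediate distributions remain $\mathrm{poly}(1/\varepsilon)$-bounded relative to uniform, reducing the problem to approximate Fourier analysis on near-uniform inputs; and (ii) exploit the sparse Fourier structure from the first step, which restricts the set of candidate parities to a family whose size is $\mathrm{poly}(T)$, so that a direct enumeration over this family fits within the target $2^{O(n/\log n)}$ budget. Alternatively, one can skip the explicit weak-learner formulation and apply KKMS-style $L_1$ polynomial regression directly in the parity basis, restricted to this same candidate family, solving a linear program of size $\mathrm{poly}(T)$ whose optimum is guaranteed (by the structural lemma) to be an $\varepsilon$-accurate hypothesis under any $D$.
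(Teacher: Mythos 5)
Your structural step is essentially the paper's: using Reichardt's $L_\infty$-approximating polynomial for the top formula, substituting parities for the leaves, and collapsing products of parities into single parities to conclude that every $f\in\FORMULA[s]\circ\mathsf{XOR}$ correlates with some parity, under \emph{any} distribution $\mathcal{D}$, to at least $s^{-O(\sqrt{s}\log(1/\varepsilon))}$ (this is the paper's Lemma~\ref{lem:lb-Tal-2}). Boosting the resulting weak learner with Freund's booster is also the paper's route, and the parameter calculation $\sqrt{n^{2-\gamma}}\cdot\mathrm{polylog} = o(n/\log n)$ is correct.

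The gap is in the implementation of the weak learner, and your two proposed workarounds do not resolve it. Workaround (ii) is the more serious problem: you claim the sparse Fourier structure "restricts the set of candidate parities to a family whose size is $\mathrm{poly}(T)$." But the candidate parities are the XORs of subsets of the \emph{unknown} formula's leaf parities; the learner sees only labeled examples and has no access to this family. There are $2^n$ parities a priori, and the $\mathrm{poly}(T)$-size family is a property of the unknown target, not something the learner can enumerate. The same objection kills the KKMS-style $L_1$-regression variant restricted to that family: you cannot set up the LP without knowing the basis. Workaround (i) is also insufficient: SmoothBoost keeps intermediate distributions smooth relative to the target distribution $\mathcal{D}$, not relative to uniform, so you never reduce to a setting where standard Fourier-spectrum algorithms apply; and even under the uniform distribution, finding a heavy Fourier coefficient from \emph{random examples alone} (no membership queries) is exactly the Learning Parity with Noise problem, which is not known to be solvable in time $2^{o(n)}$ in the general agnostic setting you need here.

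What makes the proof go through is a nontrivial external ingredient that your plan is missing: the Kalai--Mansour--Verbin algorithm, which \emph{agnostically} learns the class of parities (under any distribution, from random examples only) in time $2^{O(n/\log n)}$, outputting a hypothesis with error at most $\opt_\Delta(\mathsf{XOR}) + 2^{-n^{1-\zeta}}$. This running time is \emph{not} polynomial in the sparsity $T$, and the algorithm does not work by enumerating a candidate family; it is a bespoke algorithm (based on moderately sparse linear systems / the "learning parities with mildly structured noise" toolbox) whose $2^{O(n/\log n)}$ bound is exactly what sets the final running time of the theorem. Without invoking it (or something equivalent), the weak learner is unimplemented and the argument does not close. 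Note in particular that your bound $T=2^{\widetilde{O}(n^{1-\gamma/2})}$ is much smaller than $2^{O(n/\log n)}$, which suggests a better running time than the theorem states; that this improvement is not claimed is itself a signal that the bottleneck is the agnostic parity learner, not the sparsity of the representation.
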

	
	
	Note that a sub-exponential running time cannot be achieved for $\FORMULA \circ \mathcal{G}$ when we consider the communication complexity of $\mathcal{G}$. Again, the class is too large, for the same reason discussed in Section \ref{sec:intro_prg}. It might still be possible to design a non-trivial learning algorithm in this case, but this would possibly require the introduction of new lower bound techniques for $\FORMULA \circ \mathsf{XOR}$. 
	
    In contrast to the algorithm mentioned above that learns (standard) formulas of size $s \leq n^{2 - o(1)}$ in time $2^{\widetilde{O}(\sqrt{s})}$, the algorithm from Theorem \ref{thm:main_learning} does not learn smaller formulas over parities in time faster than $2^{O(n/\log n)}$. We discuss this in more detail in Sections \ref{sec:techniques} and \ref{sec:concluding}.
    
    Finally, we mention a connection to cryptography that provides a conditional upper bound on the size of $\FORMULA \circ \mathsf{XOR}$ circuits that can be learned in time $2^{o(n)}$. It is well known that if a circuit class $\mathcal{C}$ can compute pseudorandom functions (or some variants of this notion), then it cannot be learned in various learning models (see e.g.~\citep{KV94}). It has been recently conjectured that depth-two $\mathsf{MOD}_3 \circ \mathsf{XOR}$ circuits of linear size can compute weak pseudorandom functions of exponential security \citep[Conjecture 3.7]{BIP+18}. If this conjecture holds, then such circuits cannot be learned in time $2^{o(n)}$. Since $\mathsf{MOD}_3$ gates over a linear number of input wires can be simulated by formulas of size at most $O(n^{2.8})$ \citep{sergeev2017upper}, under this cryptographic assumption it is not possible to learn $\FORMULA[n^{2.8}] \circ \mathsf{XOR}$ in time $2^{o(n)}$, even if the learner only needs to succeed under the uniform distribution.
	
	\subsection{Techniques}\label{sec:techniques}

    In order to explain our techniques, we focus for the most part on the design of PRGs for $\FORMULA \circ \mathcal{G}$ when $\mathcal{G}$ is of bounded two-party randomized communication complexity (a particular case of Theorem \ref{thm:main_PRG} Item 2). This proof makes use of various ingredients employed in other results. After sketching this argument, we say a few words about our strongest lower bound (Theorem \ref{thm:main_lbs} Item 1) and the satisfiability and learning algorithms (Theorems \ref{thm:main_sat} and \ref{thm:main_learning}, respectively).
    
    We build on a powerful result showing that any small de Morgan formula can be  approximated pointwise by a low-degree polynomial:
    
    \vspace{0.25cm}
    
    \noindent\textbf{(A)} For every formula $F(y_1, \ldots, y_m)$ of size $s$, there is a polynomial $p(y_1, \ldots, y_m) \in \mathbb{R}[y_1, \ldots, y_m]$ of degree $O(\sqrt{s})$ such that $|F(a) - p(a)| \leq 1/10$ on every $a \in \{0,1\}^m$.
    
    \vspace{0.25cm}
    
    \noindent The only known proof of this result \citep{Rei11b} relies on a sequence of works \citep{BBC+01, DBLP:journals/cc/LaplanteLS06, DBLP:conf/stoc/HoyerLS07, DBLP:journals/toc/FarhiGG08, DBLP:conf/focs/Reichardt09, DBLP:journals/siamcomp/AmbainisCRSZ10, DBLP:journals/toc/ReichardtS12} on quantum query complexity, generalizing Grover's search algorithm for the OR predicate \citep{DBLP:conf/stoc/Grover96} to arbitrary formulas. The starting point of many of our results is a consequence of \textbf{(A)} which is implicit in the work of Tal \citep{Tal17}.
    
    \vspace{0.25cm}
    
    \noindent\textbf{(B)} Let $\mathcal{D}$ be a distribution over $\{0,1\}^m$, and $F \in \FORMULA[s] \circ \mathcal{G}$. Then, for every function $f$,
    \vspace{-0.1cm}
    $$
        \text{if~} \Pr_{x\sim \mathcal{D}}[F(x) = f(x)] \geq 1/2 + \varepsilon\text{~~then~~} \Pr_{x \sim \mathcal{D}}[h(x) = f(x)] \geq 1/2 + \exp{\!(-t)}
        \vspace{-0.15cm}
    $$
    for some function $h$ which is the XOR of at most $t$ functions in $\mathcal{G}$, where $t = \widetilde{\Theta}(\sqrt{s} \cdot \log(1/\varepsilon))$.
    
    \vspace{0.25cm}
    
    \noindent Intuitively, if we could understand well enough the XOR of any small collection of functions in $\mathcal{G}$, then we can translate this into results for $\FORMULA[s] \circ \mathcal{G}$, as long as $s \ll n^2$. We adapt the techniques behind $\textbf{(B)}$ to provide a general approach to constructing PRGs against $\FORMULA \circ \mathcal{G}$:
    
    \vspace{0.25cm}
    
    \noindent\textbf{Main PRG Lemma.} In order for a distribution $\mathcal{D}$ to $\varepsilon$-fool the class $\FORMULA[s] \circ \mathcal{G}$, it is enough for it to $\exp{\!(-t)}$-fool the class $\mathsf{XOR}_t \cdot \mathcal{G}$, where $t = \widetilde{\Theta}(\sqrt{s} \cdot \log(1/\varepsilon))$.
    
    \vspace{0.25cm}
    
    \noindent Recall that, in Theorem \ref{thm:main_PRG} Item 2, we consider a class $\mathcal{G}$ of functions that admit two-party randomized protocols of cost $R = R^{(2)}_{\varepsilon/6s}(\mathcal{G})$. It is easy to see that the XOR of any $t$ functions from $\mathcal{G}$ is a function that can be computed by a protocol of cost at most $t \cdot R$. Thus the lemma above shows that it is sufficient to fool, to exponentially small error, a class of functions of bounded two-party randomized communication complexity. Moreover, since a randomized protocol can be written as a convex combination of deterministic protocols, it is possible to prove that fooling functions of bounded deterministic communication complexity is enough. 
    
    Pseudorandom generators in the two-party communication model have been known since \citep{INW94}. Their construction exploits that the Boolean matrix associated with a function of small communication cost can be partitioned into a not too large number of monochromatic rectangles. We provide in Appendix \ref{appendix:prg_cc} a slightly modified and self-contained construction based on explicit extractors. It achieves the following parameters: There is an explicit PRG that $\delta$-fools any $n$-bit function of two-party communication cost $D$ and that has seed length $n/2 + O(D + \log(1/\delta))$. This PRG has non-trivial seed length even when the error is exponentially small, as required by our techniques. 
    One issue here is that the INW PRG was only shown to fool functions with low \emph{deterministic} communication complexity. To obtain our PRGs for $\FORMULA\circ\mathcal{G}$ when $\mathcal{G}$ admits low-cost \emph{randomized} protocols, we first extend the analysis of the INW PRG to show that it also fools functions with low \emph{randomized} communication complexity.
    Combining this construction with the aforementioned discussion completes the proof of Theorem \ref{thm:main_PRG} Item 2.
    
    The argument just sketched reduces the construction of PRGs for $\FORMULA \circ \mathcal{G}$ when functions in $\mathcal{G}$ admit low-cost \emph{randomized} protocols to the analysis of PRGs for functions that admit relatively low-cost \emph{deterministic} protocols. Our lower bound proof for $\mathsf{GIP}^k_n$ in Theorem \ref{thm:main_lbs} Item 1 proceeds in a similar fashion. We combine statement \textbf{(B)} described above with other ideas to show:
    
    \vspace{0.25cm}
    
    \noindent\textbf{Transfer Lemma (Informal).} If a function correlates with some small formula whose leaf gates have low-cost \emph{randomized} $k$-party protocols, then it also non-trivially correlates with some function that has relatively low-cost \emph{deterministic} $k$-party protocols.
    
    \vspace{0.25cm}
    
    \noindent Given this result, we are able to rely on a strong average-case lower bound for $\mathsf{GIP}^k_n$ against $k$-party deterministic protocols from \citep{BNS92} to conclude that $\mathsf{GIP}^k_n$ is hard for $\FORMULA \circ \mathcal{G}$.
    
    Our $\#$SAT algorithms combine the polynomial representation of the top formula provided by \textbf{(A)}, for which we show that such a polynomial can be obtained \emph{explicitly}, with a decomposition of the Boolean matrix at each leaf that is induced by a corresponding low-cost randomized or deterministic two-party protocol. A careful combination of these two representations allows us to adapt a standard technique employed in the design of non-trivial SAT algorithms (fast rectangular matrix multiplication) to obtain non-trivial savings in the running time. 
    
    Finally, our learning algorithm for $\mathsf{FORMULA} \circ \mathsf{XOR}$ is a consequence of statement \textbf{(B)} above coupled with standard tools from learning theory. In a bit more detail, since a parity of parities is just another parity function, \textbf{(B)} implies that, under any distribution, every function in $\mathsf{FORMULA}[n^{1.99}] \circ \mathsf{XOR}$ is weakly correlated with some parity function. Using the agnostic learning algorithm for parity functions of \citep{KMV08}, it is possible to weakly learn $\mathsf{FORMULA}[n^{1.99}] \circ \mathsf{XOR}$ in time $2^{O(n/\log n)}$. This weak learner can then be transformed into a (strong) PAC learner using standard boosting techniques \citep{Fre90}, with only a polynomial blow-up over its running time.

	\subsection{Concluding remarks}\label{sec:concluding}

    
    
    The main message of our results is that the \emph{computational power} of a subquadratic-size top formula is \emph{not} significantly enhanced by leaf gates of \emph{low communication complexity}. We believe that the idea of decomposing a Boolean device into a computational part and a layer of communication protocols will find further applications in lower bound proofs and algorithm design.
    
    One of our main open problems is to discover a method that can analyze $\FORMULA[s] \circ \mathcal{G}$ when $s \gg n^2$. For instance, is it possible to adapt existing techniques to show an explicit lower bound against $\FORMULA[n^{2.01}] \circ \mathcal{G}$, or achieving this is just as hard as breaking the cubic barrier for formula lower bounds? Results in this direction would be interesting even for $\mathcal{G} = \mathsf{XOR}$.
    
    Finally, we would like to mention a few questions connected to our results and their applications. Is it possible to combine the techniques behind Corollary \ref{cor:PRG_LTF} and~\citep{OST19} to design a PRG of seed length $n^{o(1)}$ and error $\varepsilon = 1/n$  for the intersection of $n$ halfspaces? Can we design a satisfiability algorithm for formulas over $k$-party number-on-forehead communication protocols? Is it possible to learn $\FORMULA[s] \circ \mathsf{XOR}$ in time $2^{\widetilde{O}(\sqrt{s})}$? (The learning algorithm for formulas from~\citep{Rei11b} relies on techniques from~\citep{DBLP:journals/siamcomp/KalaiKMS08}, and it is unclear how to extend them to the case of $\FORMULA \circ \mathsf{XOR}$.)
 
        
        
    
    \subsection{Organization}
    
    Theorem \ref{thm:main_lbs} Item 1 is proved in Section \ref{sec:lower-bounds}, while Items 2 and 3 rely on our PRG constructions and are deferred to Section \ref{sec:prgs}. The latter describes a general approach to constructing PRGs for $\FORMULA \circ \mathcal{G}$. It includes the proof of Theorem \ref{thm:main_PRG} and other applications. Our satisfiability algorithms (Theorem \ref{thm:main_sat}) appear in Section \ref{sec:sat}. Finally, Section \ref{sec:learning} discusses learning results for $\FORMULA \circ \mathsf{XOR}$ and contains a proof of Theorem \ref{thm:main_learning}.

	\section{Preliminaries}\label{sec:preliminaries}
	
	\subsection{Notation}
	
	Let $n\in\mathbb{N}$; we denote $\brkts{1,\dots,n}$ by $[n]$, and denote by $\uniform_n$ the uniform distribution over $\bool^n$. We use $\widetilde{O}(\cdot)$ (and $\widetilde{\Omega}(\cdot)$)  to hide polylogarithmic factors. That is, for any $f\colon\mathbb{N}\to\mathbb{N}$, we have that $\widetilde{O}(f(n))=O\cprn{f(n)\cdot\polylog\cprn{f(n)}}$.
	
	In this paper, we will mainly use $\boolpm$ as the Boolean basis. In some parts of this paper, we will use the $\bool$ basis for the simplicity of the presentation. This will be specified in corresponding sections.
	
	\subsection{De Morgan formulas and extensions}
	
	\begin{definition}
	    An $n$-variate \emph{de Morgan formula} is a directed rooted tree; its non-leaf vertices (henceforth, \emph{internal gates}) take labels from $\brkts{{\rm AND},{\rm OR},{\rm NOT}}=\brkts{\land,\lor,\neg}$ and its leaves (henceforth, \emph{variable gates}) take labels from the set of variables $\brkts{x_1,\dots,x_n}$. Each internal gate has bounded in-degree (henceforth, \emph{fan-in}); the {\rm NOT} gate in particular has fan-in $1$ and every variable gate has fan-in $0$. The \emph{size} of a de Morgan formula is the number of its leaf gates.
	\end{definition}
	
	In this work, we denote by $\FORMULA[s]$ the class of Boolean functions computable by size-$s$ de Morgan formulas. Let $\mathcal{G}$ denote some class of Boolean functions; then, we denote by $\FORMULA[s]\circ\mathcal{G}$ the class of functions computable by some size-$s$ de Morgan formula where its leaves are labelled by functions in $\mathcal{G}$.
	
	
    
    
    \subsection{Approximating polynomials}
    
    \begin{definition}[Point-wise approximation]
        For a Boolean function $f\colon\boolpm^n\to\boolpm$, we say that the function $\tilde{f}\colon\boolpm^n\to\real$ $\varepsilon$-approximates $f$ if for every $z\in\boolpm^n$,
		\[
			\left| f(z)-\tilde{f}(z)\right| \leq \varepsilon.
		\]
    \end{definition}
    
    We will need the following powerful result for the approximating degree of de Morgan formulas.
    
    \begin{theorem}[\cite{Rei11b}, see also~\cite{ BNRdW07}]\label{thm:approx-polynomial}
		Let $s>0$ be an integer and $0<\varepsilon<1$. Any de Morgan formula $F\colon\boolpm^n\to\boolpm$ of size $s$ has a $\varepsilon$-approximating polynomial of degree $d=O(\sqrt{s}\cdot\log(1/\varepsilon))$. That is, there exists a degree-$d$ polynomial $p\colon\boolpm^n\to\real$ over the reals such that for every $z\in\boolpm^n$,
		\[
			\left| p(z)-F(z)\right| \leq \varepsilon.
		\]
	\end{theorem}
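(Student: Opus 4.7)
The plan is to derive this result by invoking the celebrated connection between quantum query complexity and approximating polynomials. The argument proceeds in three main stages, two of which we would import as black boxes from prior work.

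First, I would appeal to Reichardt's result (cited as \cite{Rei11b}) establishing that the bounded-error quantum query complexity of any de Morgan formula of size $s$ is $Q_{1/3}(F) = O(\sqrt{s})$. The proof of this deep result is the main obstacle and we would not attempt to reproduce it: it relies on the span-program framework and the tight characterization of quantum query complexity by the (negative-weight) adversary bound, together with an inductive construction of a span program of witness size $O(\sqrt{s})$ for any size-$s$ formula, and an efficient simulation of such span programs by quantum query algorithms via phase estimation. This is where essentially all of the combinatorial and quantum-algorithmic difficulty is concentrated, and why the chain of citations in the paper stretches back through many papers on quantum walks and formula evaluation.

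Second, I would boost the error parameter. Given a $T$-query bounded-error quantum algorithm for $F$ with error $1/3$, one obtains a $T' = O(T \cdot \log(1/\varepsilon))$ query algorithm with error at most $\varepsilon$ by the standard trick of running the original algorithm $O(\log(1/\varepsilon))$ times and taking the majority vote of the outcomes, which is a valid quantum subroutine since each run only reads the oracle. Applied to the formula algorithm from the first step, this yields a quantum query algorithm for $F$ of error $\varepsilon$ using $T' = O(\sqrt{s} \cdot \log(1/\varepsilon))$ queries.

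Third, I would convert this quantum algorithm into an approximating polynomial using the classical reduction of Beals, Buhrman, Cleve, Mosca, and de Wolf~\cite{BBC+01}: any quantum algorithm that makes $T'$ queries to a Boolean input $z \in \boolpm^n$ and outputs $1$ with probability $p(z)$ gives rise to a real multilinear polynomial $p$ in $z_1, \ldots, z_n$ of degree at most $2T'$. Taking $p$ to be the acceptance probability polynomial of the boosted algorithm yields $|p(z) - \frac{1+F(z)}{2}| \le \varepsilon$ for all $z \in \boolpm^n$; rescaling by $\tilde{F}(z) := 2p(z) - 1$ gives a polynomial satisfying $|\tilde{F}(z) - F(z)| \le 2\varepsilon$ of degree $2T' = O(\sqrt{s}\cdot \log(1/\varepsilon))$, and absorbing the factor of $2$ into $\varepsilon$ (or re-running with $\varepsilon/2$) produces exactly the claimed bound. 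The only nontrivial ingredient outside Reichardt's theorem is the BBC+ reduction, which is a standard and short calculation expanding the final amplitudes of a quantum algorithm as degree-$T'$ polynomials in the oracle bits and taking squared magnitudes to double the degree.
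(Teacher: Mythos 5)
Your proposal is correct and is essentially the derivation the paper relies on: the paper cites Theorem~\ref{thm:approx-polynomial} rather than proving it, and its own discussion (both in the techniques section and in the remarks accompanying Lemma~\ref{thm:sat-explicit-polynomial}) makes explicit that the result comes from Reichardt's $O(\sqrt{s})$-query quantum formula-evaluation algorithm~\cite{Rei11b} combined with the acceptance-probability-to-polynomial reduction of~\cite{BBC+01}. The only cosmetic difference is that you amplify the error at the quantum-algorithm level by repeated runs plus majority vote, whereas the paper's ``see also'' reference~\cite{BNRdW07} performs the same $\log(1/\varepsilon)$ boosting at the polynomial level by composing the $1/3$-error polynomial with a majority polynomial; both yield the same degree bound.
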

	
	Note that \Cref{thm:approx-polynomial} still holds if we use $\bool$ as the Boolean basis.
    
	\subsection{Communication complexity}
	
	We use standard definitions from communication complexity. In this paper we consider the standard two party model of Yao and its generalizations to multiparty setting. We denote deterministic communication complexity of a Boolean function by $D(f)$ in the two party setting. We refer to \citep{Kushilevitz-Nisan97} for 
	standard definitions from communication complexity.
	
	    
	
	
	\begin{definition}
	    Let $f\colon\bool^{n}\to\bool$ be a Boolean function. The \emph{communication matrix of $f$}, namely $M_f$, is a $2^{n/2}\times2^{n/2}$ matrix defined by $\prn{M_f}_{x,y}:=f(x,y)$.
	\end{definition}
	
	\begin{definition}
	    A \emph{rectangle} is a set of the form $A\times B$, for $A,B\subseteq\bool^n$. A \emph{monochromatic rectangle} is a rectangle $S$ such that for all pairs $(x,y)\in S$ the value $f(x,y)$ is the same.
    \end{definition}
	
	\begin{lemma}
	    Let $\Pi$ be a protocol that computes $f\colon\bool^{n}\to\bool$ with at most $D$ bits of communication. Then, $\Pi$ induces a partition of $M_f$ into at most $2^D$ monochromatic rectangles.
	\end{lemma}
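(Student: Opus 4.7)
The plan is to prove this by induction on the structure of the protocol tree of $\Pi$, tracking for each node $v$ of the tree the set $S_v \subseteq \bool^{n/2} \times \bool^{n/2}$ of input pairs $(x,y)$ whose execution of $\Pi$ passes through $v$. The goal is to show that every $S_v$ is a (combinatorial) rectangle; once this is established, the lemma follows by restricting attention to leaves and counting.

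First I would set up the framework: recall that a deterministic protocol is represented by a binary tree whose internal nodes are labeled by a player (say Alice or Bob) together with a function of that player's input determining which child to descend to, and whose leaves are labeled by output values. The root corresponds to $S_{\text{root}} = \bool^{n/2} \times \bool^{n/2}$, which is trivially a rectangle. For the inductive step, suppose $v$ is an internal node with $S_v = A \times B$, and suppose $v$ is owned by Alice, so that a function $a \colon A \to \bool$ determines the next bit sent. Then the two children $v_0, v_1$ of $v$ satisfy $S_{v_0} = \{x \in A : a(x) = 0\} \times B$ and $S_{v_1} = \{x \in A : a(x) = 1\} \times B$, both of which are rectangles. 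The case where $v$ is owned by Bob is symmetric, splitting the $B$-side instead. Thus, by induction, every $S_v$ is a rectangle.

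Next, observe that the sets $\{S_\ell : \ell \text{ is a leaf of } \Pi\}$ form a partition of $\bool^{n/2} \times \bool^{n/2}$, since every input pair $(x,y)$ drives the protocol to a unique leaf. At each leaf $\ell$, the protocol outputs a single value $\Pi(\ell) \in \bool$, and because $\Pi$ computes $f$, we have $f(x,y) = \Pi(\ell)$ for every $(x,y) \in S_\ell$. Hence each $S_\ell$ is a monochromatic rectangle of $M_f$, and together they partition the matrix.

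Finally, to bound the number of rectangles, note that since $\Pi$ communicates at most $D$ bits on any input, the protocol tree has depth at most $D$, hence at most $2^D$ leaves. There is essentially no difficulty here; the main conceptual point is the inductive rectangle-preservation argument, which is entirely routine once one notices that Alice's transmitted bit depends only on $x$ (and the transcript so far) while Bob's depends only on $y$, so neither player's message can ``mix'' the two coordinates.
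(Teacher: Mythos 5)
Your proof is correct and is exactly the standard argument: induction on the protocol tree shows each node's set of consistent inputs is a rectangle, leaves form a partition, and the depth bound gives at most $2^D$ leaves. The paper states this lemma without proof (citing the Kushilevitz--Nisan textbook for standard communication-complexity facts), and your argument is precisely the canonical proof appearing there.
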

	
	Given a protocol, its \emph{transcript} is the sequence of bits communicated.
	
	\begin{lemma}
	    For every transcript $z$ of some communication protocol, the set of inputs $(x,y)$ that generate $z$ is a rectangle.
	\end{lemma}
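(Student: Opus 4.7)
The plan is to prove this by induction on the length $|z|$ of the transcript. Define, for each prefix $z^{(i)} = z_1 z_2 \cdots z_i$ of $z$, the set $S_i \subseteq \{0,1\}^{n/2} \times \{0,1\}^{n/2}$ of input pairs $(x,y)$ whose execution trace under the protocol begins with $z^{(i)}$. The goal is to show by induction that each $S_i$ is a combinatorial rectangle, i.e.\ $S_i = A_i \times B_i$ for some $A_i, B_i \subseteq \{0,1\}^{n/2}$. Taking $i = |z|$ then yields the lemma.

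For the base case $i = 0$, the empty transcript is generated by every input pair, so $S_0 = \{0,1\}^{n/2} \times \{0,1\}^{n/2}$, which is trivially a rectangle. For the inductive step, suppose $S_i = A_i \times B_i$. The protocol specifies, based on the partial transcript $z^{(i)}$ alone, which player speaks next; say it is Alice (the case of Bob is symmetric). Then Alice's next message is a function only of her input $x$ and of $z^{(i)}$; in particular it does not depend on $y$. Hence the set of $x \in A_i$ for which Alice would send exactly the bit $z_{i+1}$ is some subset $A_{i+1} \subseteq A_i$ determined purely by the protocol and $z^{(i)}$. Bob's input plays no role at this step, so $B_{i+1} = B_i$, and we obtain $S_{i+1} = A_{i+1} \times B_i$, again a rectangle.

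There is no real obstacle here; the only delicate point is to emphasize that which player speaks at round $i+1$ is determined by $z^{(i)}$ (a property of valid protocols), so the partition of $S_i$ induced by the possible values of the next bit is always a partition into two rectangles sharing the same $B$-coordinate (or the same $A$-coordinate, if Bob speaks). Iterating this yields the claim, and the size estimate $|S_{|z|}|$ being at most $2^{n - |z|}$ in total across all transcripts, together with the monochromaticity on accepting/rejecting transcripts, then feeds into the preceding lemma about the $2^D$-rectangle partition of $M_f$.
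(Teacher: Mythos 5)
Your induction on the transcript length is the standard textbook argument (the one implicitly referenced via \citep{Kushilevitz-Nisan97}; the paper states this lemma without proof), and it is correct: the key observation that the identity of the next speaker and that speaker's next bit depend only on the partial transcript and that speaker's own input is exactly what makes each prefix-set a rectangle. The final aside about $|S_{|z|}| \le 2^{n-|z|}$ is not needed here and is not quite right as stated (individual transcript rectangles need not shrink by a factor of two per bit), but it does not affect the validity of the proof of the lemma itself.
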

	
	

    Below, we recount the definitions of two multiparty communication models used in this work, namely the number-on-forehead and the number-in-hand models.
	
	\begin{definition}[``Number-on-forehead'' communication model; informal]
	    In the $k$-party ``number-on-forehead'' communication model, 
	    there are $k$ players and $k$ strings $x_1,\dots,x_k\in\bool^{n/k}$ and player $i$ gets all the strings except for $x_i$. The players are interested in computing a value $f\cprn{x_1,\dots,x_k}$, where $f\colon\bool^{n}\to\bool$ is some fixed function. 
	    We denote by $D^{(k)}(f)$ the number of bits that must be exchanged by the best possible number on forehead protocol solving $f$.
    \end{definition}
        
    
    We also use the following weaker communication model.
    
    \begin{definition}[``Number-in-hand'' communication model; informal]
	    In the $k$-party ``number-in-hand'' communication model,
	    there are $k$ players and $k$ strings $x_1,\dots,x_k\in\bool^{n/k}$ and player $i$ gets only $x_i$.
	    The players are interested in computing a value $f\cprn{x_1,\dots,x_k}$, where $f\colon\bool^{n}\to\bool$ is some fixed function.
	    We denote by $D^{(k\text{-}\mathsf{NIH})}(f)$ the number of bits that must be exchanged by the best possible communication protocol.
    \end{definition}
    
    Note that $D^{(k\text{-}\mathsf{NIH})}(f)\leq\prn{1-1/k}\cdot n+1$, for any $n$-variate Boolean function $f$, as if $k-1$ players write on the blackboard their string, then the player that did not reveal her input may compute $f\cprn{x_1,\dots,x_k}$ on her own and then publish it.
    
    For the communication models mentioned above, there are also bounded-error randomized versions, denoted by $R_\delta$, $R_\delta^{(k)}$, and $R^{(k\text{-}\mathsf{NIH})}_\delta$, respectively, where $0<\delta<1$ is an upper bound on the error probability of the protocol. In this setting, the players have access to some shared random string, say $r$, and the aforementioned error probability of the protocol is considered over the possible choices of $r$. Moreover, we require the error to be at most $\delta$ on each fixed choice of inputs.
    
    We can extend the definitions of the communication complexity measures, defined above, to classes of Boolean functions, in a natural way. That is, for any communication complexity measure $M\in\brkts{D,D^{(k)},D^{(k\text{-}\mathsf{NIH})},R_\delta,R_\delta^{(k)},R_\delta^{(k\text{-}\mathsf{NIH})}}$ and for any class of Boolean functions $\mathcal{G}$, we may define
    \[
        M\cprn{\mathcal{G}}:=\max_{g\in\mathcal{G}}M\cprn{g}.
    \]
	We note that throughout this paper, we denote by $n$ the number of input bits for the function regardless the communication models. In the $k$-party communication setting (either NOF or NIH), we assume without loss of generality that $n$ is divisible by $k$.
	
	\subsection{Pseudorandomness}
	
	A PRG against a class of functions $\mathfrak{C}$ is a deterministic procedure $G$ mapping short Boolean strings (seeds) to longer Boolean strings, so that $G$'s output ``looks random" to every function in $\mathfrak{C}$.
	
    \begin{definition}[Pseudorandom generators]
    	Let $G\colon\boolpm^{\ell}\to\boolpm^n$ be a function, $\mathfrak{C}$ be a class of Boolean functions, and $0<\varepsilon<1$. We say that $G$ is a \emph{pseudorandom generator of seed length $\ell$ that $\varepsilon$-fools $\mathfrak{C}$} if, for every function $f\in\mathfrak{C}$, it is the case that
    	\[
    	    \abs{\Exp_{z\sim\boolpm^\ell}\csqbra{f\cprn{G\cprn{z}}}-
    	    \Exp_{x\sim\boolpm^n}\csqbra{f\cprn{x}}}\leq\varepsilon.
    	\]
    \end{definition}
	
	A PRG $G$ outputting $n$ bits is called \emph{explicit} if $G$ can be computed in $\poly(n)$ time. All PRGs stated in this paper are explicit.
	
	\subsection{Learning}
	
    For a function $f:\bool^n\to\bool$ and a distribution $\mathcal{D}$ supported over $\bool^n$, we denote by ${\rm EX}\cprn{f,\mathcal{D}}$ a randomized oracle that outputs independent identically distributed labelled examples of the form $\prn{x,f(x)}$, where $x\sim\mathcal{D}$.

    \begin{definition}[PAC learning model \cite{Val84}]\label{def:pac}
        Let $\mathcal{C}$ be a class of Boolean functions. We say that a randomized algorithm $A$ \emph{learns} $\mathcal{C}$ if, when $A$ is given oracle access to ${\rm EX}\cprn{f,\mathcal{D}}$ and inputs $1^n$, $\varepsilon$, and $\delta$, the following holds. For every $n$-variate function $f \in \mathcal{C}$, distribution $\mathcal{D}$ supported over $\bool^n$, and real-valued parameters $\varepsilon > 0$ and $\delta > 0$, $A^{{\rm EX}\prn{f,\mathcal{D}}}(1^n, \varepsilon, \delta)$ outputs  with probability at least $1-\delta$ over its internal randomness and the randomness of the example oracle ${\rm EX}\cprn{f,\mathcal{D}}$ a description of a hypothesis $h:\brkts{0,1}^n\to\brkts{0,1}$ such that
        \begin{align*}
            \Prob_{x\sim\mathcal{D}}\csqbra{f\cprn{x}=h\cprn{x}} \;\geq\; 1-\varepsilon.
        \end{align*}
        The \emph{sample complexity} of a learning algorithm is the maximum number of random examples from ${\rm EX}\cprn{f,\mathcal{D}}$ requested during its execution.
    \end{definition}
	


	
	\section{Lower bounds}\label{sec:lower-bounds}
	
	In this section, we prove an average-case lower bound for the generalized inner product function against $\FORMULA\circ\mathcal{G}$, where $\mathcal{G}$ is the set of functions that have low-cost randomized communication protocols in the number-on-forehead setting. This corresponds to Item 1 of Theorem \ref{thm:main_lbs}. Items 2 and 3 rely on our PRG constructions, and the proofs are deferred to Section \ref{sec:prgs}.
	
	\begin{theorem}\label{thm:lb-main}
		For any integer $k\geq 2$, $s>0$ and any class of functions $\mathcal{G}$, let $C\colon\boolpm^{n}\to\boolpm$ be a function in $\FORMULA[s]\circ\mathcal{G}$ such that
		\[
		    \Prob_{x\sim\boolpm^{n}}\left[C(x)=\GIP_{n}^{k}(x)\right]\geq 1/2+\varepsilon.
		\]
		Then
		\[
	  		s = \Omega\left( \frac{n^2}{k^2\cdot16^{k}\cdot\left(R^{(k)}_{\varepsilon/(2n^2)}(\mathcal{G})+\log n\right)^2\cdot\log^2(1/\varepsilon)}\right).
		\]
	\end{theorem}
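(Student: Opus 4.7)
The plan is to follow the ``Transfer Lemma'' blueprint sketched in Section~\ref{sec:techniques}, reducing the average-case lower bound to the Babai--Nisan--Szegedy (BNS) discrepancy bound for $\GIP^k_n$ against $k$-party number-on-forehead \emph{deterministic} protocols. Write $C = F\circ(g_1,\ldots,g_m)$ with $F\in\FORMULA[s]$, each $g_i\in\mathcal{G}$, and $m\le s$. Applying Theorem~\ref{thm:approx-polynomial} produces a real polynomial $\widetilde{F}$ of degree $d = O(\sqrt{s}\cdot\log(1/\varepsilon))$ that $(\varepsilon/4)$-approximates $F$ pointwise on $\boolpm^m$. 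Expanding in the monomial basis,
\[
    \widetilde{F}(y_1,\ldots,y_m) \;=\; \sum_{T\subseteq[m],\,|T|\le d} \alpha_T \prod_{i\in T} y_i,
\]
I would let $L := \sum_T |\alpha_T|$ denote its $\ell_1$-norm and invoke a Markov-type bound on the coefficients of pointwise low-degree approximators to Boolean functions (as used in~\citep{Tal17}) to conclude $L\le 2^{O(d)}$.

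Substituting $y_i = g_i(x)$ and using the correlation hypothesis together with pointwise approximation gives
\[
    \Exp_x\!\left[\GIP^k_n(x)\cdot \widetilde{F}(g_1(x),\ldots,g_m(x))\right] \;\ge\; 2\varepsilon - \varepsilon/4.
\]
Expanding and applying the triangle inequality, some monomial $h(x) := \prod_{i\in T} g_i(x)$ with $|T|\le d$ satisfies $|\Exp_x[\GIP^k_n(x)\cdot h(x)]| \ge \varepsilon\cdot 2^{-O(d)}$. This is the first key reduction: correlation of $\GIP^k_n$ with the whole enhanced formula is transferred to correlation with a XOR of at most $d$ leaf gates.

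In the $\boolpm$-basis, the product $h$ coincides with the XOR of the $g_i$'s in the $\bool$-basis, so running the given $k$-party NOF randomized protocols for $g_{i_1},\ldots,g_{i_{|T|}}$ in parallel (with shared randomness) and XORing the outputs yields a randomized $k$-party NOF protocol for $h$ of cost at most $d\cdot R^{(k)}_{\delta}(\mathcal{G})$ and total error at most $d\delta$ for $\delta := \varepsilon/(2n^2)$, by the union bound. Writing this randomized protocol as a convex combination of deterministic protocols and averaging against the advantage, I fix the coins to obtain a deterministic $k$-party NOF protocol $\Pi$ of the same cost whose output function $h'$ agrees with $h$ outside a $d\delta$-fraction of inputs and therefore still correlates with $\GIP^k_n$ with advantage at least $\varepsilon\cdot 2^{-O(d)} - 2d\delta$; our choice of $\delta$ keeps this gap noticeable.

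Finally, I invoke the BNS discrepancy bound~\citep{BNS92}: every $k$-party NOF deterministic protocol of cost $c$ has correlation at most $2^{c}\cdot 2^{-\Omega(n/(k\cdot 4^k))}$ with $\GIP^k_n$. Taking logarithms in $2^{c}\cdot 2^{-\Omega(n/(k\cdot 4^k))} \;\ge\; \varepsilon\cdot 2^{-O(d)} - 2d\delta$ with $c = d\cdot R^{(k)}_{\delta}(\mathcal{G})$ yields
\[
    d\cdot\bigl(R^{(k)}_{\delta}(\mathcal{G}) + O(\log n)\bigr) \;\ge\; \Omega\!\left(\frac{n}{k\cdot 4^k}\right).
\]
Substituting $d = O(\sqrt{s}\cdot\log(1/\varepsilon))$ and rearranging gives the claimed lower bound, with the factor $k^{2}\cdot 16^k$ arising as the square of the BNS threshold $n/(k\cdot 4^k)$ when solving for $s$. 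The main obstacles lie in two places: first, securing the $\ell_1$ bound $L\le 2^{O(d)}$ for the approximating polynomial (this is not immediate from pointwise approximation and requires the Markov-style coefficient inequality underpinning Tal's analysis); and second, balancing $\delta$ so that the $d$-fold union-bound error is swamped by the surviving correlation $\varepsilon\cdot 2^{-O(d)}$ while the addend $\log n$ in $R^{(k)}_{\delta}(\mathcal{G})+\log n$ remains the only price paid for derandomization.
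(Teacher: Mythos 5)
The overall architecture of your proposal — pass to a low-degree approximating polynomial for the top formula, extract a single large monomial $h = \prod_{i\in T} g_i$ that still correlates with $\GIP^k_n$, derandomize the leaf protocols, and finish with the BNS discrepancy bound — matches the paper's route, and your observation that the product of $\boolpm$-valued leaves has NOF communication cost at most $|T|\cdot R^{(k)}_\delta(\mathcal{G})$ is exactly the reduction the paper uses. (Minor aside: the coefficient bound you worry about is cheap — a $(\varepsilon/4)$-approximator on $\boolpm^m$ has $|\alpha_T|\le 1+\varepsilon/4$ for every $T$, hence $L\le s^{O(d)}$, which is all that is needed; the paper does not need, and does not prove, your tighter $L\le 2^{O(d)}$.)

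However, there is a genuine gap in the order in which you derandomize, and it breaks your choice $\delta=\varepsilon/(2n^2)$. You first drop the correlation to $\varepsilon\cdot 2^{-O(d)}$ (in fact to $\varepsilon\cdot s^{-O(d)}$) by isolating a monomial, and only then replace each $g_i$ in $T$ by a randomized protocol and fix the coins. The advantage that survives after fixing coins is $\varepsilon\cdot s^{-O(d)}-2d\delta$, and for this to be positive you would need $\delta\lesssim \varepsilon\cdot s^{-O(d)}/d$, i.e.\ $\delta$ exponentially small in $d=O(\sqrt{s}\log(1/\varepsilon))$. With your $\delta=\varepsilon/(2n^2)$ the subtrahend $2d\delta\approx d\varepsilon/n^2$ overwhelms $\varepsilon\cdot s^{-O(d)}$ as soon as $d\gg\log n$, which is precisely the interesting regime. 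If you instead shrink $\delta$ to the needed exponentially small value, then $R^{(k)}_{\delta}(\mathcal{G})$ picks up an extra $\widetilde{O}(\sqrt{s}\log(1/\varepsilon))$ factor from error-reduction, and solving the final inequality for $s$ yields only a near-\emph{linear} bound $s=\widetilde{\Omega}(n/(k4^k R^{(k)}_{1/3}(\mathcal{G})))$, not the claimed near-\emph{quadratic} one.

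The paper (Lemma~\ref{thm:lb-proof-1} and Claim~\ref{claim:lb-approximating}) avoids this by derandomizing \emph{before} the polynomial step, at the full-formula level: replace each of the $s$ leaves $g_i$ by a randomized protocol with error $\varepsilon/(2s)$, so that $\widetilde{C}(x)=\Exp_\Pi[F(\Pi_1(x),\dots,\Pi_s(x))]$ is a pointwise $\varepsilon$-approximation of $C$ by the union bound. Here the union-bound loss $\varepsilon/2$ is compared against the \emph{original} advantage $\varepsilon$, not the exponentially degraded one, so $\delta=\varepsilon/(2s)\ge \varepsilon/(2n^2)$ suffices. Averaging over the coins then produces a single deterministic-leaf formula $D$ with advantage $\varepsilon/2$, and only at that point is the monomial-extraction step (Lemma~\ref{lem:lb-Tal-2}) applied, yielding a deterministic NOF protocol of cost $O(R^{(k)}_{\varepsilon/(2s)}(\mathcal{G})\cdot\sqrt{s}\log(1/\varepsilon))$ and advantage $s^{-O(\sqrt{s}\log(1/\varepsilon))}$ against $\GIP^k_n$, which BNS then turns into the stated $\Omega(n^2/\cdots)$ lower bound. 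Swapping the two steps back into this order repairs your argument.
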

	
	We need a couple useful lemmas from~\cite{Tal16}, whose proofs are presented in \Cref{subsec:formula-lemmas} (\Cref{lem:lb-Tal-1-app} and \Cref{lem:lb-Tal-2-app}) for completeness.
	
	\begin{lemma}[\cite{Tal16}]\label{lem:lb-Tal-1}
	    Let $\mathcal{D}$ be a distribution over $\boolpm^n$, and let $f,C\colon\boolpm^n\to\boolpm$ be such that
	    \[
	        \Prob_{x\sim\mathcal{D}}[C(x)=f(x)]\geq 1/2+\varepsilon.
	    \] 
	    Let $\tilde{C}\colon\boolpm^n\to\real$ be a $\varepsilon$-approximating function of $C$, i.e., for every $x\in\boolpm^n$, $|C(x)-\tilde{C}(x)|\leq \varepsilon$. Then,
	    \[
	        \Exp_{x\sim\mathcal{D}}[\tilde{C}(x)\cdot f(x)]\geq \varepsilon.
	    \]
	\end{lemma}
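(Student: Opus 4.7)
The plan is to convert the pointwise agreement hypothesis into a statement about correlation (in expectation), and then absorb the approximation error $\varepsilon$ via a one-line triangle inequality. Since $C$ and $f$ take values in $\{-1,+1\}$, their product $C(x)f(x)$ equals $+1$ when they agree and $-1$ otherwise, so the agreement hypothesis translates directly to $\Exp_{x\sim\mathcal{D}}[C(x)f(x)] \geq 2\varepsilon$. This is the first step I would carry out.

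Next, I would decompose the quantity of interest as
\[
    \tilde{C}(x)\cdot f(x) \;=\; C(x)\cdot f(x) \;+\; \bigl(\tilde{C}(x)-C(x)\bigr)\cdot f(x),
\]
and take expectations under $\mathcal{D}$. The first term contributes at least $2\varepsilon$ by the previous step. For the second term, the pointwise approximation bound $|\tilde{C}(x) - C(x)| \leq \varepsilon$ together with $|f(x)| = 1$ gives $\bigl|(\tilde{C}(x) - C(x))\cdot f(x)\bigr| \leq \varepsilon$ for every $x$, so by the triangle inequality its expectation is at least $-\varepsilon$. Combining the two bounds yields $\Exp_{x\sim\mathcal{D}}[\tilde{C}(x)f(x)] \geq 2\varepsilon - \varepsilon = \varepsilon$, as required.

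There is essentially no obstacle here; the lemma is a short and clean observation, and the only thing to be careful about is the factor of $2$ in the conversion from probability of agreement to expected product on the $\{-1,+1\}$ basis, which is exactly what makes the approximation slack $\varepsilon$ affordable.
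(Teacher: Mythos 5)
Your proof is correct and is essentially the same argument as the paper's: both convert the agreement probability to $\Exp_{x\sim\mathcal{D}}[C(x)f(x)]\geq 2\varepsilon$ and then absorb an $\varepsilon$-sized error from the approximation. The only cosmetic difference is that you write $\tilde{C}f = Cf + (\tilde{C}-C)f$ and bound the second term pointwise, while the paper conditions on the event $\{C(x)=f(x)\}$ and bounds $\tilde{C}(x)f(x)$ by $1-\varepsilon$ or $-1-\varepsilon$ on each side; these are two equivalent ways of doing the same bookkeeping.
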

	
	\begin{lemma}[\cite{Tal16}]\label{lem:lb-Tal-2}
	    Let $\mathcal{D}$ be a distribution over $\boolpm^n$ and let $\mathcal{G}$ be a class of functions. For $f\colon\boolpm^n\to\boolpm$, suppose that $D\colon\boolpm^n\to\boolpm\in \FORMULA[s]\circ\mathcal{G}$ is such that 
	    \[
	        \Prob_{x\sim\mathcal{D}}[D(x)=f(x)]\geq 1/2+\varepsilon_0.
	    \]
	    Then there exists some $h\colon\boolpm^n\to\boolpm\in \XOR_{O\left(\sqrt{s}\cdot\log(1/\varepsilon_0)\right)}\circ \mathcal{G}$ such that
	    \[
	        \Exp_{x\sim\mathcal{D}}[h(x)\cdot f(x)]\geq \frac{1}{s^{O\left(\sqrt{s}\cdot\log(1/\varepsilon_0)\right)}}.
	    \]
	\end{lemma}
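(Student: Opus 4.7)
}
The plan is to reduce the correlation statement for $\FORMULA[s]\circ \mathcal{G}$ to a correlation statement for a single ``monomial'' of the $g_i$'s by (i) approximating the top formula pointwise by a low-degree polynomial via \Cref{thm:approx-polynomial}, (ii) invoking \Cref{lem:lb-Tal-1} to transfer the correlation from $D$ to this polynomial approximator, and (iii) applying a Fourier/pigeonhole argument to pick out one monomial whose composition with the leaves witnesses the required correlation.

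In more detail, write $D(x)=F(g_1(x),\dots,g_s(x))$ where $F\colon\boolpm^{s}\to\boolpm$ is a de Morgan formula of size $s$ and each $g_i \in \mathcal{G}$. By \Cref{thm:approx-polynomial}, there is a polynomial $p\colon\boolpm^s\to\real$ of degree
\[
d \;=\; O\!\left(\sqrt{s}\cdot\log(1/\varepsilon_0)\right)
\]
satisfying $|p(y)-F(y)|\le \varepsilon_0$ for every $y\in\boolpm^s$. Since we only care about $p$'s values on $\boolpm^s$, we may replace $p$ by its multilinear reduction of the same or smaller degree, so assume $p(y)=\sum_{S\subseteq[s],\,|S|\le d} c_S\,\chi_S(y)$, with $\chi_S(y)=\prod_{i\in S} y_i$. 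Define the pointwise approximator $\tilde D(x):=p(g_1(x),\dots,g_s(x))$. For every $x$ the tuple $(g_1(x),\dots,g_s(x))\in\boolpm^s$, so $|\tilde D(x)-D(x)|\le \varepsilon_0$. Applying \Cref{lem:lb-Tal-1} with $C=D$, $\tilde C=\tilde D$, and $\varepsilon=\varepsilon_0$ yields
\[
    \sum_{|S|\le d} c_S \cdot \Exp_{x\sim\mathcal{D}}\!\left[\textstyle\prod_{i\in S} g_i(x)\cdot f(x)\right] \;=\; \Exp_{x\sim\mathcal{D}}[\tilde D(x)\cdot f(x)] \;\ge\; \varepsilon_0.
\]

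To extract a single monomial, I bound the Fourier coefficients of $p$: since $|p(y)|\le 1+\varepsilon_0\le 2$ on $\boolpm^s$, Parseval gives $\sum_S c_S^2 = \Exp_{y\sim\boolpm^s}[p(y)^2]\le 4$, hence $|c_S|\le 2$ for every $S$. The number of relevant subsets is $\sum_{i=0}^{d}\binom{s}{i}\le s^{O(d)}$. By pigeonhole, some $S^\star$ of size at most $d$ satisfies
\[
    \left|c_{S^\star}\cdot\Exp_{x\sim\mathcal{D}}\!\left[\textstyle\prod_{i\in S^\star} g_i(x)\cdot f(x)\right]\right| \;\ge\; \frac{\varepsilon_0}{s^{O(d)}},
\]
and using $|c_{S^\star}|\le 2$ we get a monomial correlation of at least $1/s^{O(d)} = 1/s^{O(\sqrt{s}\cdot\log(1/\varepsilon_0))}$. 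Finally I take $h := \pm\prod_{i\in S^\star} g_i$, choosing the sign so the correlation with $f$ is positive; this $h$ is a $\boolpm$-valued XOR of at most $d=O(\sqrt{s}\cdot\log(1/\varepsilon_0))$ functions from $\mathcal{G}$ (the optional global sign is absorbed into the XOR class), giving the claimed $h\in\XOR_{O(\sqrt{s}\cdot\log(1/\varepsilon_0))}\circ \mathcal{G}$.

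The main subtle point in the plan is justifying the uniform bound $|c_S|\le 2$, which requires passing from the polynomial supplied by \Cref{thm:approx-polynomial} to its multilinear reduction on $\boolpm^s$ before applying Parseval; once that is in place the Fourier/pigeonhole step is routine. A secondary bookkeeping issue is ensuring that $\prod_{i\in S} g_i$ (possibly with a global negation) is indeed a legitimate member of $\XOR_t\circ\mathcal{G}$ under the convention of the paper, which amounts to translating between the product representation on $\boolpm$ and the XOR representation on $\bool$.
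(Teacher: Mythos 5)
Your proof is correct and follows essentially the same route as the paper's: approximate the top formula by a degree-$O(\sqrt{s}\log(1/\varepsilon_0))$ polynomial via Theorem~\ref{thm:approx-polynomial}, transfer the correlation from $D$ to $\tilde D$ via Lemma~\ref{lem:lb-Tal-1}, and then use a pigeonhole argument over the $s^{O(d)}$ monomials with a uniform bound on the Fourier coefficients. The only cosmetic difference is how you bound $|c_S|$: you invoke Parseval (and thus need to note the multilinear reduction), whereas the paper reads off $|\hat p(S)| = \bigl|\Exp_{z\sim\boolpm^s}[p(z)\chi_S(z)]\bigr| \le \max_z|p(z)| \le 1+\varepsilon_0$ directly, which is slightly shorter but gives the same $O(1)$ bound.
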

    
    We also need the following communication-complexity lower bound for $\GIP$.
    
    \begin{theorem}[{\cite[Theorem 2]{BNS92}}]\label{thm:lb-GIP-bound}
        For any $k\geq 2$, any function that computes $\GIP_{n}^{k}$ on more than $1/2+\delta$ fraction of the inputs (over uniformly random inputs) must have $k$-party deterministic communication complexity at least $\Omega \! \left(n/(k \cdot 4^k) - \log(1/\delta)\right)$.
    \end{theorem}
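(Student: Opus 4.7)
My plan is to prove this via the classical discrepancy method of Babai--Nisan--Szegedy, which reduces the communication lower bound to an analytic upper bound on the discrepancy of $\GIP^k_n$ over cylinder intersections, and then establishes the latter via the ``BNS cube'' argument (an iterated Cauchy--Schwarz).

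\textbf{Step 1 (Discrepancy method).} View $\GIP^k_n$ as a $\pm 1$-valued function via the sign convention $F(x) = (-1)^{\GIP^k_n(x)}$. For a cylinder intersection $S = S_1 \cap \cdots \cap S_k$ (where each $S_i$ does not depend on $x^{(i)}$), define
\[
    \disc(F,S) \;=\; \Exp_{x \sim \uniform_n}\!\left[ F(x) \cdot \mathbbm{1}_S(x) \right],
\]
and $\disc(F) = \max_S |\disc(F,S)|$. The standard reduction then says: if a $k$-party NOF deterministic protocol of cost $c$ agrees with $\GIP^k_n$ on at least a $1/2 + \delta$ fraction of inputs, then, because the protocol partitions $\{0,1\}^n$ into at most $2^c$ cylinder intersections labeled by bits, the advantage $\delta$ is at most $2^c \cdot \disc(F)$. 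Hence $c \geq \log(\delta / \disc(F))$.

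\textbf{Step 2 (BNS cube bound on discrepancy).} The core analytic lemma is that for any $F : (\{0,1\}^{n/k})^k \to \boolpm$,
\[
    \disc(F)^{2^{k-1}} \;\leq\; \Exp_{x^{(0)}, x^{(1)}}\!\left[\; \prod_{u \in \{0,1\}^k} F\!\left(x_1^{(u_1)}, \ldots, x_k^{(u_k)}\right) \right],
\]
where each $x_i^{(b)}$ is uniform in $\{0,1\}^{n/k}$. I would prove this by induction on $k$: Cauchy--Schwarz over the $x_k$ direction absorbs $S_k$ (since it is independent of $x_k$) and reduces a $k$-party discrepancy bound to a $(k-1)$-party one on a symmetrized function, iterating until no cylinder factors remain.

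\textbf{Step 3 (Cube expectation for $\GIP$).} Exploiting multiplicativity of $\GIP^k_n$ over the $n/k$ coordinate blocks, the $2^k$-fold cube expectation factorizes across the $n/k$ coordinates, reducing the task to computing a single expectation
\[
    \alpha_k \;=\; \Exp_{y^{(0)}, y^{(1)} \in \{0,1\}^k}\!\left[ \prod_{u \in \{0,1\}^k} (-1)^{\bigwedge_{i} y_i^{(u_i)}} \right].
\]
A direct computation (expanding the product, grouping the $\AND$ into a single parity of a $k$-fold cube, and summing over the $2^k \cdot 2$ choices per coordinate) gives $\alpha_k \leq 1 - 2^{-(2k-1)} \leq \exp(-\Omega(4^{-k}))$. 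Taking the $(n/k)$-th power and then the $2^{-(k-1)}$-th root yields $\disc(F) \leq \exp(-\Omega(n / (k \cdot 4^k)))$.

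\textbf{Step 4 (Combine).} Plugging the discrepancy bound into Step~1 gives $c \geq \Omega(n/(k \cdot 4^k)) - \log(1/\delta)$, which is exactly the stated bound. The main obstacle is carrying out the cube computation in Step~3 cleanly: one must verify that the expansion really does factor over the $n/k$ blocks (which uses the $\mathsf{GF}(2)$-linearity of the sum of ANDs), and then track the constant inside the $\Omega$ so that it yields the $1/(k \cdot 4^k)$ dependence. The inductive Cauchy--Schwarz in Step~2 is standard but requires care in handling the complex-conjugate pairing of cylinders to ensure each iteration drops exactly one coordinate.
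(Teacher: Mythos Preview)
The paper does not prove this theorem at all; it is quoted verbatim as \cite[Theorem~2]{BNS92} and used as a black box in the proof of \Cref{thm:lb-main}. Your outline is exactly the classical Babai--Nisan--Szegedy argument from that reference, so in that sense it is ``the same approach'' as the source the paper defers to.

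One arithmetic slip in Step~3 is worth flagging: the per-block cube expectation is actually $\alpha_k = 1 - 2^{1-k}$, not $1 - 2^{-(2k-1)}$. Indeed, $\sum_{u \in \{0,1\}^k} \prod_i y_i^{(u_i)} \equiv \prod_i (y_i^{(0)} \oplus y_i^{(1)}) \pmod 2$, which equals $1$ with probability exactly $2^{-k}$. With your weaker bound $\alpha_k \leq 1 - 2^{-(2k-1)}$, after raising to the $n/k$ and taking the $2^{-(k-1)}$-th root you would get $\disc(F) \leq \exp(-\Omega(n/(k\cdot 8^k)))$ rather than the claimed $\exp(-\Omega(n/(k\cdot 4^k)))$; the correct value of $\alpha_k$ restores the $4^k$. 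Other than this constant, the plan is sound.
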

    
    We first show that if a function correlates with some small formula, whose leaves are functions with low \emph{randomized} communication complexity, then it also correlates non-trivially with some function of relatively low \emph{deterministic} communication complexity.
    
    \begin{lemma}\label{thm:lb-proof-1}
        For any distribution $\mathcal{D}$ over $\boolpm^n$, and any class of functions $\mathcal{G}$, let $f\colon\boolpm^n\to\boolpm$ and $C\colon\boolpm^n\to\boolpm\in \FORMULA[s]\circ\mathcal{G}$ be such that 
	    \[
	        \Prob_{x\sim\mathcal{D}}[C(x)=f(x)]\geq 1/2+\varepsilon.
	    \]
	    Then there exists a function $h$, with $k$-party deterministic communication complexity at most
	    \[
	        O\left(R^{(k)}_{\varepsilon/(2s)}(\mathcal{G})\cdot\sqrt{s}\cdot\log(1/\varepsilon)\right),
	    \]
	    such that
	    \[
	        \Prob_{x\sim\mathcal{D}}[h(x)=f(x)]\geq 1/2+1/s^{O(\sqrt{s}\cdot\log(1/\varepsilon))}.
	    \]
    \end{lemma}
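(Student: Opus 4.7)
The plan is to combine the structural reduction supplied by \Cref{lem:lb-Tal-2} with a standard simulation of each leaf gate by its randomized NOF protocol, and then derandomize by an averaging argument. First, I would apply \Cref{lem:lb-Tal-2} with $\varepsilon_0 := \varepsilon$ to the given correlated formula $C$, obtaining a function $h' \in \XOR_t \circ \mathcal{G}$ with $t = O(\sqrt{s}\cdot\log(1/\varepsilon))$ leaves $g_1, \dots, g_t \in \mathcal{G}$ satisfying $\Exp_{x\sim\mathcal{D}}[h'(x)\cdot f(x)] \geq \eta$, where $\eta := 1/s^{O(\sqrt{s}\cdot\log(1/\varepsilon))}$. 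Writing $h' = g_1 \oplus \cdots \oplus g_t$, the remaining goal is to realize this XOR by an efficient deterministic $k$-party NOF protocol while preserving a non-trivial correlation with $f$.

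Next, I would build a randomized NOF protocol $\Pi$ for $h'$ by running, in parallel, the $(\varepsilon/2s)$-error NOF protocol of cost $R := R^{(k)}_{\varepsilon/(2s)}(\mathcal{G})$ for each $g_i$ and XORing the $t$ outputs with $O(1)$ extra bits of communication. The total cost is $O(t \cdot R) = O\!\left(\sqrt{s}\cdot\log(1/\varepsilon)\cdot R^{(k)}_{\varepsilon/(2s)}(\mathcal{G})\right)$, and a union bound over the $t$ invocations gives $\Pr[\Pi(x) \neq h'(x)] \leq t\cdot\varepsilon/(2s)$ on every input $x$. Viewing $\Pi$ as a distribution over deterministic protocols $\{\Pi_r\}_r$ and interchanging expectations yields
\[
    \Exp_r \Exp_{x\sim\mathcal{D}}[\Pi_r(x)\cdot f(x)] \;\geq\; \Exp_{x\sim\mathcal{D}}[h'(x)\cdot f(x)] - 2\cdot t\cdot \frac{\varepsilon}{2s},
\]
since each $x$ contributes an additive error of at most $2\cdot \Pr_r[\Pi_r(x)\neq h'(x)]$. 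Some fixing $r^\ast$ therefore produces a deterministic NOF protocol of the same cost whose induced function $h := \Pi_{r^\ast}$ satisfies $\Exp_{x\sim\mathcal{D}}[h(x)\cdot f(x)] \geq 1/s^{O(\sqrt{s}\cdot\log(1/\varepsilon))}$, after absorbing the lower-order error term into the exponent's implicit constant. Because $h$ and $f$ are $\pm 1$-valued, this correlation translates directly into $\Pr_{x\sim\mathcal{D}}[h(x) = f(x)] \geq 1/2 + 1/s^{O(\sqrt{s}\cdot\log(1/\varepsilon))}$, which is the statement of the lemma.

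The main obstacle is the quantitative balance between error and correlation: the correlation $\eta$ produced by \Cref{lem:lb-Tal-2} is exponentially small in $\sqrt{s}\cdot\log(1/\varepsilon)$, so the per-leaf randomized error $\delta$ must be small enough that the union-bound loss $t\delta$ does not swamp $\eta$, while remaining large enough to keep $R^{(k)}_\delta(\mathcal{G})$ within the claimed communication budget. The choice $\delta = \varepsilon/(2s)$ threads this needle once the implicit $O(\cdot)$ constants in the exponent of $\eta$ are allowed to adjust to absorb the $2t\delta$ slack.
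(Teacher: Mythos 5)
Your plan reverses the order of the two key steps, and that order matters quantitatively in a way that breaks the argument. You first apply \Cref{lem:lb-Tal-2} to $C$ to obtain $h' \in \XOR_t \circ \mathcal{G}$ with correlation $\eta = 1/s^{O(\sqrt{s}\cdot\log(1/\varepsilon))}$, and then replace the (randomized) leaf protocols by a fixed deterministic one via averaging. The averaging step costs you an additive loss of roughly $2t\delta$ in correlation, where $t = O(\sqrt{s}\cdot\log(1/\varepsilon))$ and $\delta = \varepsilon/(2s)$, so $2t\delta = \Theta\!\left(\varepsilon\log(1/\varepsilon)/\sqrt{s}\right)$. This quantity is only \emph{polynomially} small in $s$, whereas $\eta$ is \emph{exponentially} small in $\sqrt{s}$. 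Thus $\eta - 2t\delta < 0$ for essentially every interesting parameter regime, and no amount of tuning the hidden constant in the exponent of $\eta$ can rescue the inequality: making that constant larger only shrinks $\eta$ further, and the loss term $2t\delta$ does not depend on it. To push $2t\delta$ below $\eta$ you would have to take $\delta \approx \eta/t$, but then $R^{(k)}_\delta(\mathcal{G})$ could be larger than $R^{(k)}_{\varepsilon/(2s)}(\mathcal{G})$ by a factor of $\Theta(\log(1/\delta)) = \Theta(\sqrt{s}\log s\cdot\log(1/\varepsilon))$, overshooting the claimed communication budget.

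The paper's proof does the replacement in the other order, which is exactly what makes the bookkeeping work. It first replaces all $s$ leaves of $C$ by their $(\varepsilon/2s)$-error randomized protocols, shows via a union bound that the resulting averaged function $\tilde{C}$ is an $\varepsilon$-approximation of $C$, and invokes \Cref{lem:lb-Tal-1} to get $\Exp[\tilde{C}f] \geq \varepsilon$. Averaging then produces a \emph{formula} $D$ whose leaves are already deterministic-cost-$R$ protocols and whose correlation with $f$ is $\geq \varepsilon/2$ --- still polynomially large, because the union-bound loss was charged against the initial $\varepsilon$, not against an exponentially small quantity. Only at this point does the paper apply \Cref{lem:lb-Tal-2} to $D$, and the XOR of deterministic leaves is automatically a deterministic protocol of cost $O(R\cdot\sqrt{s}\cdot\log(1/\varepsilon))$, with no further derandomization step needed. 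Reordering your argument to mimic this (replace leaves first in $C$, then decompose the resulting formula $D$) fixes the gap.
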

    
    \begin{proof}
        Let $C=F(g_1,g_2\dots,g_s)$ be the function in $\FORMULA[s]\circ\mathcal{G}$, where $F$ is a formula and $g_1,g_2,\dots,g_s$ are leaf functions from the class $\mathcal{G}$. For each  $g_i$, consider a $k$-party randomized protocol $ \Pi_i$ of cost at most $R=R^{(k)}_{\varepsilon/(2s)}(\mathcal{G})$ that has an error $\varepsilon/(2s)$.
        Now consider the following function
        \[
            \tilde{C}(x) \vcentcolon=\Exp_{\Pi_1,\Pi_2,\dots,\Pi_s}\left[D(x)\right],
        \]
        where
        \[
            D(x) \vcentcolon= F(\Pi_1(x),\Pi_2(x),\dots,\Pi_s(x)).
        \]
        Note that for any fixed choice of $(\Pi_1,\Pi_2,\dots,\Pi_s)$, $D$ is a formula whose leaves are functions with \emph{deterministic} communication complexity at most $R$. Next, we show the following.
        
        \begin{claim}\label{claim:lb-approximating}
            The function $\tilde{C}$ $\varepsilon$-approximates $C$.
        \end{claim}
        
        \begin{proof}[Proof of \Cref{claim:lb-approximating}]
            First note that since each $\Pi_i$ is a $(\varepsilon/(2s))$-error randomized protocol, by taking the union bound over the $s$ leaf functions, we have that for every input $x\in\boolpm^n$,
            \[
                \Prob_{\Pi_1,\Pi_2,\dots,\Pi_s}[\Pi_1(x)=g_1(x) \land \Pi_2(x)=g_2(x) \land\dots\land\Pi_s(x)=g_s(x)]\geq 1-\varepsilon/2.
            \]
            Denote by $\mathcal{E}$ the event $\Pi_1(x)=g_1(x) \land \Pi_2(x)=g_2(x)\land\dots\land\Pi_s(x)=g_s(x)$. We have for every $x\in\boolpm^n$,
            \begin{align*}
                \tilde{C}(x) &= \Exp_{\Pi_1,\Pi_2,\dots,\Pi_s}\left[D(x)\right]\\
                &=\Exp\left[D(x)\given \mathcal{E}\right]\cdot\Prob[\mathcal{E}]+\Exp\left[D(x)\given \neg\mathcal{E}\right]\cdot\Prob[\neg\mathcal{E}]\\
                &=C(x)\cdot\Prob[\mathcal{E}]+\Exp\left[D(x)\given \neg\mathcal{E}\right]\cdot\Prob[\neg\mathcal{E}].
            \end{align*}
            On the one hand, we have
            \[
                \tilde{C}(x)=C(x)\cdot\Prob[\mathcal{E}]+\Exp\left[D(x)\given \neg\mathcal{E}\right]\cdot\Prob[\neg\mathcal{E}]\leq  C(x) + \varepsilon/2.
            \]
            On the other hand, we get
            \[
                \tilde{C}(x)=C(x)\cdot\Prob[\mathcal{E}]+\Exp\left[D(x)\given \neg\mathcal{E}\right]\cdot\Prob[\neg\mathcal{E}]\geq C(x)\cdot(1-\varepsilon/2)+(-1)\cdot(\varepsilon/2)\geq C(x)-\varepsilon.
            \]
            This completes the proof of the claim.
        \end{proof}
        
        Now by \Cref{claim:lb-approximating} and \Cref{lem:lb-Tal-1}, we have
        \begin{equation}\label{eq:lb-eq-1}
	        \Exp_{x\sim\mathcal{D}}[\tilde{C}(x)\cdot f(x)]\geq \varepsilon.
        \end{equation}
        By the definition of $\tilde{C}$, \Cref{eq:lb-eq-1} implies that there exists some $D$, which is a formula whose leaves are functions with \emph{deterministic} communication complexity at most $R$, such that
        \[
            \Exp_{x\sim\mathcal{D}}[D(x)\cdot f(x)]\geq \varepsilon,
        \]
        which implies
        \[
            \Prob_{x\sim\mathcal{D}}[D(x)=f(x)]\geq 1/2+ \varepsilon/2.
        \]
        Then by \Cref{lem:lb-Tal-2}, there exists a function $h$, which can be expressed as the $\XOR$ of at most $O(\sqrt{s}\cdot\log(1/\varepsilon))$ leaf functions in $D$, such that
        \[
            \Exp_{x\sim\mathcal{D}}[h(x)\cdot f(x)]\geq \frac{1}{s^{O(\sqrt{s}\cdot\log(1/\varepsilon))}},
        \]
        which again implies
        \[
            \Prob_{x\sim\mathcal{D}}[h(x)=f(x)]\geq \frac{1}{2} + \frac{1}{s^{O\left(\sqrt{s}\cdot\log(1/\varepsilon)\right)}}.
        \]
        Finally, note that the $k$-party deterministic communication complexity of $h$ is at most 
        \[
            O(R\cdot\sqrt{s}\cdot\log(1/\varepsilon)),
        \]
        where $R=R^{(k)}_{\varepsilon/(2s)}(\mathcal{G})$.
    \end{proof}
    
    We are now ready to show \Cref{thm:lb-main}.
    
    \begin{proof}[Proof of \Cref{thm:lb-main}]
        Consider \Cref{thm:lb-proof-1} with $f$ being $\GIP_{n}^{k}$ and $\mathcal{D}$ being the uniform distribution. Consider \Cref{thm:lb-GIP-bound} with $\delta=1/s^{O(\sqrt{s}\cdot\log(1/\varepsilon))}$. We have
        \[
            O\left(R^{(k)}_{\varepsilon/(2s)}(\mathcal{G})\cdot\sqrt{s}\cdot\log(1/\varepsilon)\right)\geq n/(k4^k) - O\left(\sqrt{s}\cdot\log(s)\cdot\log(1/\varepsilon))\right),
        \]
        which implies
        \[
            s\geq \Omega\left( \frac{n^2}{k^2\cdot 16^{k}\cdot\left(R^{(k)}_{\varepsilon/(2n^2)}(\mathcal{G})+\log n\right)^2\cdot\log^2(1/\varepsilon)}\right).\qedhere
        \]
    \end{proof}
	
	\section{Pseudorandom generators}\label{sec:prgs}
	
	Some of our PRGs are obtained from a general framework that allows us to reduce the task of fooling $\FORMULA\circ\mathcal{G}$ to the task of fooling the class of functions which are the parity or conjunction of few functions from $\mathcal{G}$.
	
	\subsection{The general framework}
	
	We show that in order to get a PRG for the class of subquadratic-size formulas with leaf gates in $\mathcal{G}$, it suffices to get a PRG for very simple sublinear-size formulas: either $\XOR\circ\mathcal{G}$ or $\AND\circ\mathcal{G}$.  
	
	\begin{theorem}[PRG for $\FORMULA\circ\mathcal{G}$ from PRG for $\XOR\circ\mathcal{G}$ or $\AND\circ\mathcal{G}$]\label{thm:prg-main}
		Let $\mathcal{G}$ be a class of gates on $n$ bits. For any integer $s>0$ and any $0<\varepsilon<1$, there exists a constant $c>0$ such that the following holds. If a distribution $\mathcal{D}$ over $\boolpm^n$ $\left(2^{-c\cdot\sqrt{s}\cdot\log (s)\cdot \log(1/\varepsilon)}\right)$-fools the $\XOR$ (parity) or the $\AND$ (conjunction) of $c\cdot\sqrt{s}\cdot \log(1/\varepsilon)$ arbitrary functions from $\mathcal{G}$, then $\mathcal{D}$ also $\varepsilon$-fools $\FORMULA[s]\circ\mathcal{G}$.
	\end{theorem}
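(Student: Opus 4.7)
My plan is to leverage Theorem~\ref{thm:approx-polynomial} to reduce the task of fooling $\FORMULA[s] \circ \mathcal{G}$ to fooling parities (or conjunctions) of at most $d = O(\sqrt{s}\,\log(1/\varepsilon))$ functions from $\mathcal{G}$, via a triangle-inequality argument on the monomials of an approximating polynomial for the top formula.

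First I would fix an arbitrary $F = F'(g_1,\dots,g_s) \in \FORMULA[s]\circ \mathcal{G}$, where $F'$ is the size-$s$ top formula viewed as a map on $s$ meta-variables $y_1,\dots,y_s$, and $g_1,\dots,g_s \in \mathcal{G}$ are the leaf functions on the $n$ input bits. Applying Theorem~\ref{thm:approx-polynomial} to $F'$ with approximation parameter $\varepsilon/3$ produces a multilinear real polynomial $p(y) = \sum_{S \subseteq [s],\,|S| \leq d} c_S \prod_{i \in S} y_i$ of degree $d$ with $|p(y) - F'(y)| \leq \varepsilon/3$ on $\{-1,1\}^s$. Composing with the leaves yields $q(x) := p(g_1(x),\dots,g_s(x))$, which $(\varepsilon/3)$-approximates $F$ pointwise on $\{-1,1\}^n$.

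Next I would observe that in the $\pm 1$ basis each monomial $\prod_{i \in S} g_i(x)$ is, up to a sign, the $\XOR$ of the $\{0,1\}$-encodings of the leaves indexed by $S$, hence lies in $\XOR_d \circ \mathcal{G}$. If $\mathcal{D}$ is $\eta$-fooling for every such parity, then by the triangle inequality
\[
\left| \Exp_{x\sim\mathcal{D}}[F(x)] - \Exp_{x\sim U_n}[F(x)] \right| \;\leq\; 2\cdot \tfrac{\varepsilon}{3} + \|p\|_1 \cdot \eta,
\]
where $\|p\|_1 := \sum_S |c_S|$. Choosing $\eta = \varepsilon/(3\|p\|_1)$ closes the argument, provided $\|p\|_1$ can be bounded by $2^{O(\sqrt{s}\log(s)\log(1/\varepsilon))}$.

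The main technical obstacle is exactly this coefficient bound. My plan here is to use only the pointwise boundedness $|p(y)| \leq 1 + \varepsilon/3 = O(1)$ on the cube $\{-1,1\}^s$: by Parseval in the $\pm 1$ basis this gives $\sum_S c_S^2 = \Exp_{y \sim U_s}[p(y)^2] = O(1)$, and then Cauchy--Schwarz combined with the degree bound yields
\[
\|p\|_1 \;\leq\; \sqrt{\textstyle\binom{s}{\leq d}} \cdot O(1) \;\leq\; s^{O(d)} \;=\; 2^{O(\sqrt{s}\log(s)\log(1/\varepsilon))},
\]
as required. The $\AND$ variant is obtained by repeating the argument in the $\{0,1\}$ basis (in which Theorem~\ref{thm:approx-polynomial} also holds, per the note following its statement): there, monomials $\prod_{i \in S} g_i(x)$ coincide with $\AND_{i \in S} g_i(x)$, and the same Parseval/Cauchy--Schwarz coefficient bound transfers after a standard change of basis.
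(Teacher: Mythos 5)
Your proposal matches the paper's proof in essence: both start from the Reichardt $(\varepsilon/3)$-approximating polynomial for the top formula, expand it in the monomial basis, observe that each monomial is a $\XOR$ (resp.\ $\AND$ in the $\{0,1\}$ basis) of at most $d=O(\sqrt{s}\log(1/\varepsilon))$ leaf functions, and close via the triangle inequality after controlling the coefficient $\ell_1$-mass. The only local difference is how the $\ell_1$-mass is bounded: you use Parseval plus Cauchy--Schwarz to get $\|p\|_1\le\sqrt{\binom{s}{\le d}}\cdot O(1)$, whereas the paper directly bounds each $|\hat p(S)|\le 1+\varepsilon/3$ and multiplies by the $s^{O(d)}$ count of monomials (and, for the $\AND$ case, proves an explicit change-of-basis claim that the $\{0,1\}$-basis $\ell_1$-mass blows up by at most $2^{O(d)}\cdot s^{O(d)}$, which is what your ``standard change of basis'' remark needs to be made precise); both routes give the same $s^{O(d)}$ bound and hence the same final error parameter.
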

	
	\begin{proof}\renewcommand{\qedsymbol}{$\square$ (\Cref{thm:prg-main})}
    	We first show the case where $\mathcal{D}$ fools the parity of a few functions from $\mathcal{G}$. The proof can be easily adapted to the case of conjunction.
    	
    	Let $C=F(g_1,g_2\dots,g_s)$ be a function in $\FORMULA[s]\circ\mathcal{G}$, where $F$ is a formula, and $g_1,g_2,\dots,g_s$ are functions from the class $\mathcal{G}$. Let $\uniform$ be the uniform distribution over $\boolpm^n$. 
    	We need to show
    	\begin{equation}\label{eq:prg-1}
    		\Exp[C(\mathcal{D})] \myapprox{\varepsilon} \Exp[C(\uniform)].
    	\end{equation}
    	Let $p$ be a $(\varepsilon/3)$-approximating polynomial for $F$ given by \Cref{thm:approx-polynomial}. Note that the degree of $p$ is
    	\[
    		d=O(\sqrt{s}\cdot\log(1/\varepsilon)).
    	\] 
    	Let us replace $F$, the formula part of $C$, with $p$ and let
    	\[
    		\tilde{C}\vcentcolon=p(g_1,g_2\dots,g_s).
    	\]
    	Since $\tilde{C}$ point-wisely approximates $C$, we have
    	\[
    		\Exp[\tilde{C}(\uniform)] \myapprox{\varepsilon/3} \Exp[C(\uniform)],
    	\]
    	and
    	\[
    		\Exp[\tilde{C}(\mathcal{D})] \myapprox{\varepsilon/3} \Exp[C(\mathcal{D})].
    	\]
    	Then to show \Cref{eq:prg-1}, it suffices to show
    	\[
    		\Exp[\tilde{C}(\mathcal{D})] \myapprox{\varepsilon/3} \Exp[\tilde{C}(\uniform)].
    	\]
    	We have
    	\begin{align*}
        	\label{eq:prg-2}
        	\Exp_{x\sim \mathcal{D}}[\tilde{C}(x)]&=\Exp_{x\sim D}\left[\sum_{\substack{S\subseteq[s]: \\ |S|\leq d}}\hat{p}(S)\cdot\prod_{i\in S}g_i(x)\right]\\
        	&=\sum_{\substack{S\subseteq[s]: \\ |S|\leq d}}\hat{p}(S)\cdot\Exp_{x\sim \mathcal{D}}\left[\prod_{i\in S}g_i(x)\right].
        	\numberthis
    	\end{align*}
    	Now note that for each $S\subseteq[s]$, $\prod_{i\in S}g_i(x)$ computes the $\XOR$ of at most $d$ functions from $\mathcal{G}$.
    	Using the fact the distribution $\mathcal{D}$ $\left(\delta=1/2^{c\cdot\sqrt{s}\cdot\log (s)\cdot \log(1/\varepsilon)}\right)$-fools the $\XOR$ of any $d$ functions from $\mathcal{G}$, we get
    	\begin{align*}
        	\Exp_{x\sim \mathcal{D}}[\tilde{C}(x)]&=\sum_{\substack{S\subseteq[s]: \\ |S|\leq d}}\hat{p}(S)\cdot\Exp_{x\sim D}\left[\prod_{i\in S}g_i(x)\right]\\
        	&=\sum_{\substack{S\subseteq[s]: \\ |S|\leq d}}\hat{p}(S)\cdot\left(\Exp_{x\sim \uniform}\left[\prod_{i\in S}g_i(x)\right]+\delta_S\right) \tag{where $|\delta_S|\leq \delta$} \\
        	&=\sum_{\substack{S\subseteq[s]: \\ |S|\leq d}}\left(\hat{p}(S)\cdot\Exp_{x\sim \uniform}\left[\prod_{i\in S}g_i(x)\right]+\hat{p}(S)\cdot\delta_S\right)\\
        	&=\sum_{\substack{S\subseteq[s]: \\ |S|\leq d}}\hat{p}(S)\cdot\Exp_{x\sim \uniform}\left[\prod_{i\in S}g_i(x)\right]+\sum_{\substack{S\subseteq[s]: \\ |S|\leq d}} \hat{p}(S)\cdot\delta_S\\
        	&=\Exp_{x\sim \uniform}[\tilde{C}(x)]+\sum_{\substack{S\subseteq[s]: \\ |S|\leq d}} \hat{p}(S)\cdot \delta_S.
    	\end{align*}
    	It remains to show
    	\[
    		\left| \sum_{\substack{S\subseteq[s]: \\ |S|\leq d}} \hat{p}(S)\cdot \delta_S \right| \leq \varepsilon/3.
    	\]
    	Note that because $p(z)\in[1-\varepsilon/3,1+\varepsilon/3]$ for every $z\in\boolpm^s$, we have
    	\[
    		|\hat{p}(S)|=\left|\Exp_{z\sim \boolpm^s}\left[p(z)\cdot\prod_{i\in S} z_i\right]\right| \leq 1+\varepsilon/3 < 2.
    	\]
    	Then,
    	\[
    		\left|\sum_{\substack{S\subseteq[s]: \\ |S|\leq d}} \hat{p}(S)\cdot \delta_S \right| \leq  \sum_{\substack{S\subseteq[s]: \\ |S|\leq d}} |\hat{p}(S)|\cdot |\delta_S|\leq \delta \cdot \sum_{\substack{S\subseteq[s]: \\ |S|\leq d}} |\hat{p}(S)| \leq \delta\cdot s^{O(\sqrt{s}\cdot\log(1/\varepsilon))} \leq \varepsilon/3,
    	\]
    	where the last inequality holds for some sufficiently large constant $c$.
    	
    	To show the case of conjunction, we can write the approximating polynomial as the sum of all degree-$d$ monomials, each of which is the $\AND$ of at most $d$ variables. One way to do this is to use the domain $\bool$ instead of $\boolpm$ in the above argument. We need to show that the coefficients in this case still have small magnitude.
    	
    	\begin{claim}\label{claim:prg-magnitude}
    		Let $p\colon\boolpm^{n}\to\real$ be a degree-$d$ polynomial of the form
    		\[
    			p(x)=\sum_{\substack{S\subseteq[n]: \\ |S|\leq d}} \hat{p}(S)\cdot \prod_{i\in S} x_i,
    		\]
    		and let $q\colon\bool^{n}\to\real$ be the corresponding polynomial of $p$ over the domain $\{0,1\}^n$, of the form
    		\[
    			q(y)=\sum_{\substack{T\subseteq[n]: \\ |T|\leq d}} \hat{q}(T)\cdot \prod_{i\in T} y_i.
    		\]
    		Then,
    		\[
    			|q|_1 = \sum_{\substack{T\subseteq[n]: \\ |T|\leq d}} |\hat{q}(T)| \leq n^{O(d)}\cdot \max_{\substack{S\subseteq[n]: \\ |S|\leq d}} |\hat{p}(S)|.
    		\]
    	\end{claim}
    	
    	\begin{proof}\renewcommand{\qedsymbol}{$\square$ (\Cref{claim:prg-magnitude})}
    		We have
    		\begin{align*}
    			q(y_1,y_2,\dots,y_n) &= p(1-2y_1,1-2y_2,\dots,1-2y_n)\\
    			&= \sum_{\substack{S\subseteq[n]: \\ |S|\leq d}} \hat{p}(S)\cdot \prod_{i\in S} (1-2y_i)\\
    			&= \sum_{\substack{S\subseteq[n]: \\ |S|\leq d}} \hat{p}(S)\cdot \left(\sum_{\ell\in \{0,1\}^{|S|}}\prod_{\substack{j\in S:\\ \ell_j=1}}-2y_j \right)\\
    			&= \sum_{\substack{S\subseteq[n]: \\ |S|\leq d}} \sum_{\ell\in \{0,1\}^{|S|}} \hat{p}(S)\cdot(-2)^{|\ell|}\cdot \prod_{\substack{j\in S:\\ \ell_j=1}} y_j. \tag{where $|\ell| = \sum_{i=1}^{|S|}\ell_i$}
    		\end{align*}
    		For a pair $(S,\ell)$ where $S \subseteq[n]$, $|S|\leq d$ and $\ell\in\{0,1\}^{|S|}$, let us define the polynomial $q_{(S,\ell)}$ as
    		\[
    			q_{(S,\ell)}(y)= \hat{p}(S)\cdot(-2)^{|\ell|}\cdot \prod_{\substack{j\in S:\\ \ell_j=1}} y_j.
    		\]
    		Note that there are at most $n^{d}\cdot2^{d}$ many pairs of such $(S,\ell)$'s and for each $(S,\ell)$, we have
    		\begin{align*}
    			|q_{(S,\ell)}|_1 = \left|\hat{p}(S)\cdot(-2)^{|\ell|}\right|\leq 2^{d}\cdot |\hat{p}(S)|.
    		\end{align*}
    	    Finally we have
    		\[
    			|q|_1 = \left|\sum_{(S,\ell)} q_{(S,\ell)}\right|_1\leq \sum_{(S,\ell)} |q_{(S,\ell)}|_1\leq n^{d}\cdot 2^d \cdot 2^d \cdot \max_{\substack{S\subseteq[n]: \\ |S|\leq d}} |\hat{p}(S)|,
    		\]
    		as desired.
    	\end{proof}
    	
    	This completes the proof of \Cref{thm:prg-main}.
	\end{proof}

    \subsection{Formulas of low-communication functions in the number-in-hand setting}
    
    In this subsection, we will use $\bool$ as the Boolean basis.
    
	\begin{theorem}\label{thm:prg-formula-lc}
		For any integers $k\geq 2$, $s>0$ and any $0<\varepsilon<1$, let $\mathcal{G}$ be the class of functions that have $k$-party number-in-hand $(\varepsilon/6s)$-error randomized communication protocols of cost at most $R$. There exists a PRG that $\varepsilon$-fools $\FORMULA[s]\circ \mathcal{G}$ with seed length
		\[
			n/k+O\left(\sqrt{s}\cdot (R +\log (s))\cdot \log(1/\varepsilon) +\log(k)\right)\cdot\log(k).
		\]
	\end{theorem}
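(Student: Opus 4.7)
The plan is to reduce fooling $\FORMULA[s]\circ\mathcal{G}$ to fooling $\FORMULA[s]\circ\mathcal{G}'$, where $\mathcal{G}'$ denotes the class of functions with $k$-party number-in-hand \emph{deterministic} communication cost at most $R$, and then combine \Cref{thm:prg-main} with an INW-style pseudorandom generator for $k$-party NIH deterministic protocols.

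First, I would replace each leaf's randomized protocol by its average. Given $C = F(g_1,\dots,g_s)\in\FORMULA[s]\circ\mathcal{G}$ and the $(\varepsilon/(6s))$-error randomized protocols $\Pi_i$ guaranteed by the hypothesis, define
\[
    \bar F(x) \vcentcolon= \Exp_{\vec r}\sqbra{F\cprn{\Pi_{1,r_1}(x),\dots,\Pi_{s,r_s}(x)}},
\]
where the $r_i$'s are independent random strings for the $\Pi_i$'s. A union bound over the $s$ leaves gives $\pr{\Pi_{i,r_i}(x) = g_i(x)\text{ for all }i}\geq 1-\varepsilon/6$ for every $x$, so $|\bar F(x) - C(x)|\leq \varepsilon/6$ pointwise. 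Because for every fixing $\vec r$ the formula $F\cprn{\Pi_{1,r_1},\dots,\Pi_{s,r_s}}$ lies in $\FORMULA[s]\circ\mathcal{G}'$, any distribution $\mathcal{D}$ that $(2\varepsilon/3)$-fools $\FORMULA[s]\circ\mathcal{G}'$ automatically $\varepsilon$-fools $\FORMULA[s]\circ\mathcal{G}$, by averaging over $\vec r$ and the triangle inequality.

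Next, I would invoke \Cref{thm:prg-main} for the class $\mathcal{G}'$: to $(2\varepsilon/3)$-fool $\FORMULA[s]\circ\mathcal{G}'$, it suffices to $\delta^*$-fool the $\XOR$ of $d \vcentcolon= c\sqrt{s}\log(1/\varepsilon)$ functions from $\mathcal{G}'$, where $\delta^* = 2^{-c\sqrt{s}\log(s)\log(1/\varepsilon)}$. Since the $\XOR$ of $d$ functions in $\mathcal{G}'$ is itself computable by a $k$-party NIH deterministic protocol of cost $dR$ (each player runs the $d$ sub-protocols in parallel and outputs the parity of their answers), it further suffices to construct a PRG that $\delta^*$-fools every $n$-bit function of $k$-party NIH deterministic communication cost at most $dR$.

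Finally, for such a PRG I would use a balanced-tree INW-style construction: arrange the $k$ input blocks of $n/k$ bits each as leaves of a complete binary tree of depth $\log k$, and at each internal level correlate the two pseudorandom sub-blocks via an expander-walk step with seed cost $O(D+\log(1/\delta))$, generalizing the two-party construction of \Cref{appendix:prg_cc}. This yields seed length $n/k + O(\log k\cdot (D+\log(1/\delta)))$ for $\delta$-fooling functions of $k$-party NIH deterministic communication complexity at most $D$; plugging in $D = dR$ and $\delta = \delta^*$ recovers the claimed seed length. The main obstacle I anticipate is the $k$-party generalization of the INW PRG: we must verify that the balanced-tree recursion preserves the additive $n/k$ form of the seed length, that a single expander-walk step per level suffices to fool $k$-party combinatorial rectangles, and that the error accumulates by only a $\log k$ factor across the $\log k$ levels of recursion.
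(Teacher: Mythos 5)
Your proposal matches the paper's proof: the paper also converts each randomized leaf protocol into a pointwise-approximating convex combination of deterministic protocols (so that fooling every deterministic instance in $\FORMULA[s]\circ\mathcal{G}'$ suffices), then applies \Cref{thm:prg-main} to reduce to fooling the $\XOR$ of $O(\sqrt{s}\log(1/\varepsilon))$ low-cost deterministic NIH functions, and concludes via the extractor-based $k$-party INW generator of \Cref{thm:prg-lc}. The only inaccuracy is in your stated "obstacle": in the balanced-tree recursion the per-level error actually compounds multiplicatively (the paper's \Cref{lem:prg-lc-claim-1} shows $3^{\log k}\cdot\delta'$, i.e., roughly a $k^{\log 3}$ factor, not a $\log k$ factor), but since only $\log(1/\delta')$ enters the seed cost this adds just an $O(\log k)$ term per level and the claimed seed length is unaffected.
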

    
    We need the following PRG that fools single functions with low communication complexity in the number-in-hand model. The proof is presented in \Cref{appendix:prg_cc} (\Cref{lem:prg-lc-app}) for completeness.

	\begin{theorem}[\cite{ASWZ96,INW94}]\label{thm:prg-lc}
		For any $k\geq 2$, there exists a PRG that $\delta$-fools any $n$-bits functions with $k$-party number-in-hand deterministic communication complexity of at most $D'$, with seed length
		\[
			n/k+O\left(D' + \log(1/\delta)+\log(k)\right)\cdot\log(k).
		\]
	\end{theorem}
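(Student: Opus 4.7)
The plan is to build the PRG via the classical Impagliazzo--Nisan--Wigderson recursive construction. A $k$-party NIH protocol of cost $D'$ partitions $(\bool^{n/k})^k$ into at most $2^{D'}$ combinatorial rectangles $R_1\times\cdots\times R_k$ (one per transcript), so it suffices to design a distribution whose weight on every such rectangle is within $\delta/2^{D'}$ of the product of marginals---or, equivalently, to fool the protocol directly via a hybrid argument over its $\log k$-depth ``splitting'' tree. I would work with the second formulation, since it absorbs the $2^{D'}$ blow-up into the extractor parameters rather than into the error budget.

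First I would set up the recursion on a balanced binary tree of depth $L=\lceil\log_2 k\rceil$ whose $k$ leaves correspond to the $k$ parties. Define $G_1(w)=w\in\bool^{n/k}$, and for $j$ a power of two,
\[
G_{2j}(w,z) \;=\; \bigl(G_j(w),\; G_j(\mathsf{Ext}_j(w,z))\bigr),
\]
where $\mathsf{Ext}_j\colon\bool^{m_j}\times\bool^{d_j}\to\bool^{m_j}$ is a strong seeded extractor and $m_j=j\cdot(n/k)$. The final generator is $G_k$. The total seed length then satisfies $\ell=n/k+\sum_{\ell=1}^L d_\ell$, and with the extractor seeds chosen uniformly across levels, $\ell=n/k+L\cdot\max_\ell d_\ell$.

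Second I would select the extractor parameters. Using the Leftover Hash Lemma (or any optimal strong extractor), one can achieve $d_\ell=O(\kappa+\log(1/\varepsilon_\ell))$ when extracting from a source of min-entropy deficit $\kappa$ with error $\varepsilon_\ell$. The INW hybrid argument shows that at each level the error incurred, when substituting truly independent halves by $(w,\mathsf{Ext}_j(w,z))$, is at most $O(\varepsilon_\ell\cdot 2^{O(D')})$ if we insist on pointwise rectangle fooling, but can be driven down to $O(\varepsilon_\ell)$ by the standard averaging trick: condition on the protocol's partial transcript on the first half (at most $2^{D'}$ outcomes), observe that $w$ still has min-entropy at least $m_j-D'$ conditioned on a typical transcript, and apply the extractor on that conditional source. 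Setting $\varepsilon_\ell=\delta/(2L)$ for a union bound over levels gives $d_\ell=O(D'+\log(1/\delta)+\log k)$. Summing over the $L=O(\log k)$ levels and adding the base $n/k$ yields exactly the claimed seed length $n/k+O(D'+\log(1/\delta)+\log k)\cdot\log k$.

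The main technical obstacle will be the per-level hybrid step: one must carefully show that fixing a transcript prefix of the protocol leaves the first-half input $X$ with high enough min-entropy conditioned on that prefix, so that $\mathsf{Ext}_j(X,Z)$ remains close to uniform in statistical distance, and then propagate this through the structure of the protocol on the second half. Care is needed because the second-half sub-protocol depends on the transcript produced by the first half---so one cannot simply treat the two sides as independent objects, and must instead swap them one at a time along a hybrid path, using the strong extractor guarantee (uniform over $X$ of an output that is close to uniform even conditioned on $X$). Once this swap lemma is proved at level $\ell$, iterating it up the tree and summing errors by the triangle inequality completes the proof.
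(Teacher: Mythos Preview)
Your proposal is correct and follows essentially the same INW recursion as the paper. The only bookkeeping difference is that the paper first writes $f$ as a sum of $2^{D'}$ product functions $h_1(x_1)\cdots h_k(x_k)$ (one per leaf of the protocol tree) and then fools each product to error $\delta' = \delta/(3^{\log k}\cdot 2^{D'})$; at each level of the recursion it conditions on the value of the left-half factor $A(X)=\LH{h}(G_{i-1}(X))$, which plays exactly the role of your transcript-conditioning and yields the same entropy loss of $D'$ in the extractor source.

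Two small imprecisions worth fixing: (i) your $m_j$ should be the \emph{seed} length of $G_j$ (namely $n/k + O(j\cdot d)$), not the output length $j\cdot(n/k)$, since $\mathsf{Ext}_j$ is applied to the seed $w$; (ii) the per-level errors do not merely union-bound---at level $i$ the induction hypothesis is invoked on both halves, so the error picks up a constant factor per level and one needs $\varepsilon_\ell \approx \delta/k^{O(1)}$ rather than $\delta/(2L)$. Neither point changes the stated seed length, since the extra $O(\log k)$ in $\log(1/\varepsilon_\ell)$ is already present in the bound.
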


    Next, we show a PRG for $\FORMULA\circ\mathcal{G}$, where $\mathcal{G}$ is the class of functions with low-cost communication protocols in the number-in-hand setting. We first show for the case of deterministic protocols.
    
    \begin{theorem}\label{thm:prg-formula-lc-det}
		For any integers $k\geq 2$ and $s>0$, let $\mathcal{G}$ be the class of functions whose $k$-party number-in-hand deterministic communication complexity are at most $D$. There is a PRG that $\varepsilon$-fools $\FORMULA[s]\circ \mathcal{G}$ of size $s$ with seed length
		\[
			n/k+O\left(\sqrt{s}\cdot \log(1/\varepsilon)\cdot (D +\log (s)) +\log(k)\right)\cdot\log(k).
		\]
	\end{theorem}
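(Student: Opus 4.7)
The plan is to derive Theorem~\ref{thm:prg-formula-lc-det} by composing the general reduction from Theorem~\ref{thm:prg-main} with the communication-based PRG of Theorem~\ref{thm:prg-lc}. Since the section works in the $\bool$ basis, I would invoke the conjunction version of Theorem~\ref{thm:prg-main}: to $\varepsilon$-fool $\FORMULA[s]\circ\mathcal{G}$, it suffices to exhibit a single distribution $\mathcal{D}$ that $\delta$-fools the $\AND$ of any $t$ functions from $\mathcal{G}$, where
\[
t = O\big(\sqrt{s}\cdot\log(1/\varepsilon)\big) \qquad\text{and}\qquad \delta = 2^{-c\sqrt{s}\cdot\log(s)\cdot\log(1/\varepsilon)}
\]
for an appropriate constant $c>0$.

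The second step is the key structural observation: the $\AND$ of $t$ functions from $\mathcal{G}$ still has low $k$-party number-in-hand deterministic communication complexity. Indeed, given $g_1,\dots,g_t\in\mathcal{G}$ each admitting a NIH protocol of cost at most $D$, the $k$ players may compute $\bigwedge_{i=1}^t g_i$ by sequentially executing the $t$ subprotocols on the shared blackboard. Each subprotocol output is then publicly visible, and the final conjunction requires no additional communication. Consequently, $\bigwedge_{i=1}^t g_i$ has NIH deterministic communication complexity at most $tD = O\big(\sqrt{s}\cdot\log(1/\varepsilon)\cdot D\big)$.

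Having reduced to a single (larger) class of low-communication functions, the final step is to apply Theorem~\ref{thm:prg-lc} with communication bound $D' = O\big(\sqrt{s}\cdot\log(1/\varepsilon)\cdot D\big)$ and error $\delta$. That gives a PRG of seed length
\[
n/k + O\big(D' + \log(1/\delta) + \log(k)\big)\cdot\log(k) \;=\; n/k + O\big(\sqrt{s}\cdot\log(1/\varepsilon)\cdot (D+\log s) + \log(k)\big)\cdot\log(k),
\]
matching the claimed bound. Since Theorem~\ref{thm:prg-main} supplies the reduction and Theorem~\ref{thm:prg-lc} supplies the base generator, there is essentially no new technical content to develop here; the work is purely a careful composition. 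The only point requiring mild care is checking that the error term $\log(1/\delta)$ contributes only an additive $\sqrt{s}\cdot\log(s)\cdot\log(1/\varepsilon)$ to the seed length rather than a larger factor, but this is exactly why Theorem~\ref{thm:prg-main} is engineered to require quasi-polynomially small error against the much simpler $\AND\circ\mathcal{G}$ class rather than exponentially small error.
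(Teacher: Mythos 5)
Your proposal is correct and follows essentially the same route as the paper: invoke Theorem~\ref{thm:prg-main} to reduce to fooling a small combination of $\mathcal{G}$-functions with quasi-poly small error, bound the NIH communication cost of that combination by $tD$ (sequential execution on the blackboard), and feed the result into Theorem~\ref{thm:prg-lc}. The only cosmetic difference is that the paper applies the $\XOR$ branch of Theorem~\ref{thm:prg-main} while you apply the $\AND$ branch; since the communication-cost argument (run the $t$ subprotocols in sequence and combine the public outputs for free) works identically for either combiner, and Theorem~\ref{thm:prg-main} explicitly offers both variants, the two choices are interchangeable here.
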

	
	\begin{proof}
		By \Cref{thm:prg-main}, it suffices to show a PRG that $\left(\delta=1/2^{c\cdot\sqrt{s}\cdot\log (s)\cdot \log(1/\varepsilon)}\right)$-fools every function that is the $\XOR$ of $t=c\cdot\sqrt{s}\cdot \log(1/\varepsilon)$ arbitrary functions from $\mathcal{G}$. Note that such a function has deterministic communication complexity at most $D'=t\cdot D$. Then \Cref{thm:prg-formula-lc-det} follows from \Cref{thm:prg-lc}.
	\end{proof}
	
	We now establish the randomized case.
    
    \begin{proof}[Proof of \Cref{thm:prg-formula-lc}]
    	Let $C$ be a function in $\FORMULA[s]\circ \mathcal{G}$. For each of the leaf functions in $C$, consider a $k$-party number-in-hand randomized protocol of cost at most $R$ that has an error at most $\varepsilon/(6s)$. By taking a union bound over the $s$ leaf functions and by viewing a randomized protocol as a distribution of deterministic protocols (as shown in the proof of \Cref{claim:lb-approximating}), we get the following which is a (point-wisely) $(\varepsilon/3)$-approximating function for $C$:
        \[
            \tilde{C}(x) \vcentcolon=\sum_{i} p_i \cdot D_i(x),
        \]
        where each $p_i\in[0,1]$ is some probability density value (so $\sum_i p_i =1$), and each $D_i$ is a formula whose leaves are functions with \emph{deterministic} communication complexity at most $R$. Then to $\varepsilon$-fool $C$, it suffices to $(\varepsilon/3)$-fool its $(\varepsilon/3)$-approximating function $\tilde{C}$. Also, since $\tilde{C}$ is a convex combination of the $D_i$'s, it suffices to $(\varepsilon/3)$-fools all the $D_i$'s. We will do this using the PRG form \Cref{thm:prg-formula-lc-det}. We get that there exists a PRG that $(\varepsilon/3)$-fools each $D_i$ with seed length
        \[
            n/k+O\left(\sqrt{s}\cdot(R+\log(s))\cdot\log(1/\varepsilon)+\log(k)\right)\cdot\log(k),
        \]
        as desired.
    \end{proof}

    \subsection{Applications: Fooling formulas of SYMs, LTFs, XORs, and AC\texorpdfstring{$^0$ circuits}{AC0}}

    \subsubsection{\texorpdfstring{$\FORMULA\circ\SYM$}{FORMULA-SYM} and \texorpdfstring{$\FORMULA\circ\LTF$}{FORMULA-LTF}}
	
	Here, we show how the PRG in \Cref{thm:prg-formula-lc} implies PRGs for $\FORMULA\circ\LTF$ and $\FORMULA\circ\SYM$.
	
	\begin{theorem}\label{thm:prg-formula-sym}
		For any size $s>0$ and $0<\varepsilon<1$, there exists a PRG that $\varepsilon$-fools $\FORMULA[s]\circ \LTF$ with seed length
        \[
			O\left(n^{1/2}\cdot s^{1/4}\cdot \log(n)\cdot \log(n/\varepsilon)\right).
		\]
		For $\FORMULA[s]\circ \SYM$, the seed length is
		\[
		    O\left(n^{1/2}\cdot s^{1/4}\cdot \log(n)\cdot \log(1/\varepsilon) \right).
		\]
	\end{theorem}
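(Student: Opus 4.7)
The plan is to instantiate \Cref{thm:prg-formula-lc} with $\mathcal{G}$ equal to $\LTF$ (respectively $\SYM$) and to optimize over the number of parties $k$. Recall that \Cref{thm:prg-formula-lc} provides a PRG of seed length
$$
\ell \;=\; n/k \;+\; O\!\left(\sqrt{s}\cdot\bigl(R_{\varepsilon/(6s)}^{(k\text{-}\mathsf{NIH})}(\mathcal{G})+\log s\bigr)\cdot\log(1/\varepsilon)+\log k\right)\cdot\log k,
$$
so all that is left is to supply a good upper bound on the $k$-party number-in-hand randomized communication complexity of leaves in $\mathcal{G}$ and to pick $k$ so that $n/k$ and the remaining terms are balanced.

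For $\SYM$, a trivial deterministic Hamming-weight protocol already suffices: each of the $k$ players broadcasts the Hamming weight of the (at most) $n/k$ input bits it holds, using $O(\log(n/k))$ bits, after which every player knows the total Hamming weight and hence the value of any symmetric leaf gate. This yields $R^{(k\text{-}\mathsf{NIH})}(\SYM)=O(k\log n)$ with zero error, independently of the approximation parameter. For $\LTF$, I would appeal to the theorem of Viola \cite{Vio15} on the $k$-party NIH randomized communication complexity of a halfspace (the same result invoked in the discussion preceding \Cref{cor:PRG_LTF}), which gives $R_\delta^{(k\text{-}\mathsf{NIH})}(\LTF) = \mathrm{polylog}(n/\delta)$. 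Taking $\delta=\varepsilon/(6s)$ makes this $\mathrm{polylog}(sn/\varepsilon)$.

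The final step is to set $k:=\lceil n^{1/2}/s^{1/4}\rceil$, so that $n/k = O(n^{1/2}s^{1/4})$ and $\log k = O(\log n)$. Substituting into the expression for $\ell$, the $n/k$ term and the $\sqrt{s}\cdot R\cdot\log(1/\varepsilon)\cdot\log k$ term both reduce to $O\bigl(n^{1/2}s^{1/4}\cdot\log(n)\cdot\log(1/\varepsilon)\bigr)$ in the $\SYM$ case, because the linear growth of $R$ in $k$ exactly cancels the $1/k$ inside $n/k$; in the $\LTF$ case one instead obtains $O\bigl(n^{1/2}s^{1/4}\cdot\log(n)\cdot\log(n/\varepsilon)\bigr)$, with the extra $\log(n/\varepsilon)$ factor coming directly from the $\mathrm{polylog}(n/\delta)$ dependence in Viola's bound. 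The main obstacle is the bookkeeping: I must quote the correct form of Viola's bound on $R^{(k\text{-}\mathsf{NIH})}_\delta(\LTF)$ and verify that the outer multiplier $\log k$, together with the $\log(1/\varepsilon)$ factor inside the big-$O$, does not introduce any logarithmic factors beyond those claimed. Once the right ingredients are plugged in, the remainder is a routine optimization of the seed-length formula supplied by \Cref{thm:prg-formula-lc}.
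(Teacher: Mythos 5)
Your overall strategy—instantiate \Cref{thm:prg-formula-lc} with $\mathcal{G}=\LTF$ or $\SYM$ and then optimize over $k$—is exactly what the paper does, but the crucial ingredient you feed into it is misquoted, and this breaks the optimization.

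The bound you state for halfspaces, $R_\delta^{(k\text{-}\mathsf{NIH})}(\LTF)=\mathrm{polylog}(n/\delta)$, is not what either Nisan or Viola proves. After reducing an LTF with $O(n\log n)$-bit integer weights to the $k$-party $\SG_m$ problem with $m=O(n\log n)$, Nisan's protocol gives cost $O(k\cdot\log(m)\cdot\log(m/\delta))$ and Viola's improvement gives $O(k\cdot\log(k)\cdot\log(m/\delta))$; both have a \emph{linear} factor of $k$ that cannot be removed (each of the $k$ players holding a nontrivial partial sum must speak). So the correct instantiation is $R_\delta^{(k\text{-}\mathsf{NIH})}(\LTF)=O(k\cdot\log(n)\cdot\log(n/\delta))$, and the $k$-dependence cannot be ignored when balancing the two terms of the seed length. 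With your choice $k=\lceil n^{1/2}/s^{1/4}\rceil$ and the correct bound, the second term of the seed length becomes $\sqrt{s}\cdot k\cdot\log(n)\log(ns/\varepsilon)\log(1/\varepsilon)\cdot\log(k)=\Theta\!\left(n^{1/2}s^{1/4}\cdot\log^2(n)\cdot\log(n/\varepsilon)\cdot\log(1/\varepsilon)\right)$, which exceeds the claimed seed length by roughly a $\log(n)\log(1/\varepsilon)$ factor. To land on the stated bound one must absorb the logarithmic factors into $k$; the paper takes $k=n^{1/2}/(s^{1/4}\log(n)\log(n/\varepsilon))$ for $\LTF$.

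Your $\SYM$ argument (each player broadcasts its partial Hamming weight, giving a zero-error $k$-party NIH protocol of cost $O(k\log n)$) is correct and matches the paper. However, here too the choice $k=\lceil n^{1/2}/s^{1/4}\rceil$ is suboptimal: the outer $\log k=\Theta(\log n)$ multiplier in the seed-length formula, together with the $\log n$ inside $R$, makes the second term $\Theta(n^{1/2}s^{1/4}\log^2(n)\log(1/\varepsilon))$, off by a factor of $\log n$. The paper takes $k=n^{1/2}/(s^{1/4}\log n)$ to fix this. The remark that ``the linear growth of $R$ in $k$ exactly cancels the $1/k$ inside $n/k$'' is not right—those two terms are added, not multiplied, so there is no cancellation; the linear-in-$k$ growth is precisely why the balancing choice of $k$ must include the extra $\log$ factors in the denominator.
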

	
	We need the fact that the class of $\LTF$ has low communication complexity in the number-in-hand model. Consider the following $k$-party $\SG_m$ problem where the $i$-th party holds a $m$-bit number $z_i$ in hand and they want to determine whether $\sum_{i=1}^k z_i > \theta$, where $\theta$ is a fixed number known to all the parties. Nisan~\cite{Nis94} gave an efficient randomized protocol (with public randomness) for this problem.
    
    
    \begin{theorem}[\cite{Nis94}\footnote{Viola~\cite{Vio15} gave a $\delta$-error randomized protocol for the $k$-party $\SG_m$ problem of cost $O(k\cdot\log(k)\cdot\log(m/\delta))$, which is better than Nisan's protocol when $k=m^{o(1)}$.}]\label{thm:prg-k-party-sum}
	    Let $m>0$ be an integer. For any integer $2\leq k\leq m^{O(1)}$, and any $0<\delta<1$, there exists a $\delta$-error randomized protocol of cost $O(k\cdot\log(m)\cdot\log(m/\delta))$ for the $k$-party $\SG_m$ problem.
	\end{theorem}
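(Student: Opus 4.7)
The plan is to instantiate Theorem \ref{thm:prg-formula-lc} for the two leaf classes $\LTF$ and $\SYM$ after first upper-bounding their number-in-hand (NIH) communication complexity, and then to optimize the number of parties $k$.

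For $\FORMULA[s]\circ \LTF$: By Muroga's classical theorem, any Boolean LTF on $n$ inputs admits a representation $\sgn\cprn{\sum_i a_i x_i - b}$ whose weights are integers of magnitude at most $2^{O(n\log n)}$; hence each party's partial weighted sum fits in $m=O(n\log n)$ bits. In the NIH model, party $i$ holds its block of $n/k$ variables and computes its partial sum $z_i$, and the gate evaluates to $1$ iff $\sum_{i=1}^{k} z_i > b$. This is exactly the $\SG_m$ problem, and Theorem \ref{thm:prg-k-party-sum} (or Viola's improved variant mentioned in the accompanying footnote) gives a $\delta$-error NIH protocol of cost $O\cprn{k\cdot\log(k)\cdot\log(m/\delta)}$. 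Setting $\delta=\varepsilon/(6s)$ yields $R=O\cprn{k\cdot \log(n)\cdot\log(ns/\varepsilon)}$ for the leaf class. Plugging this into Theorem \ref{thm:prg-formula-lc} and choosing $k=\Theta\cprn{n^{1/2}/s^{1/4}}$ so that the additive term $n/k$ balances the $\sqrt{s}\cdot k$ contribution inside the second summand yields the stated bound $O\cprn{n^{1/2}\cdot s^{1/4}\cdot \log(n)\cdot \log(n/\varepsilon)}$ after absorbing polylogarithmic factors.

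For $\FORMULA[s]\circ \SYM$: A symmetric function depends only on the Hamming weight of its input, so in the $k$-party NIH model each party simply broadcasts the Hamming weight of its $n/k$ coordinates (an integer in $[0,n/k]$), from which the total Hamming weight, and therefore the output, can be computed. This gives a \emph{deterministic} (zero-error) NIH protocol of cost $O(k\log n)$, so we can take $R=O(k\log n)$ independent of $\varepsilon$. Applying Theorem \ref{thm:prg-formula-lc} again with $k=\Theta\cprn{n^{1/2}/s^{1/4}}$ produces the seed length $O\cprn{n^{1/2}\cdot s^{1/4}\cdot \log(n)\cdot \log(1/\varepsilon)}$, where the absence of an $n/\varepsilon$ inside the logarithm (compared to the $\LTF$ case) reflects the fact that no union bound over randomized leaf protocols is needed.

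The main obstacle is bookkeeping rather than conceptual: one has to keep the choice of $k$ consistent across the two terms of Theorem \ref{thm:prg-formula-lc} and ensure that the communication cost $R$ contributes only polylogarithmically to the final seed length after balancing. The crucial structural input for the $\LTF$ case is the bound on the bit-length of the weights, without which the NIH protocol for $\SG$ would not be efficient; the $\SYM$ case is strictly simpler since no randomization is needed at the leaves and the $\log(1/\varepsilon)$ factor in the final bound enters only through the outer PRG application.
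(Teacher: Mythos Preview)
Your proposal does not prove the stated theorem. Theorem~\ref{thm:prg-k-party-sum} is Nisan's result that the $k$-party $\SG_m$ problem admits a $\delta$-error randomized protocol of cost $O(k\cdot\log(m)\cdot\log(m/\delta))$; the paper cites this without proof. What you wrote is instead a proof of Theorem~\ref{thm:prg-formula-sym} (the PRG seed-length bounds for $\FORMULA[s]\circ\LTF$ and $\FORMULA[s]\circ\SYM$), which \emph{uses} Theorem~\ref{thm:prg-k-party-sum} as a black box. A proof of Theorem~\ref{thm:prg-k-party-sum} itself would need to actually construct the randomized protocol for $\SG_m$ (e.g., via Nisan's recursive comparison of partial sums using randomized most-significant-bit detection, or Viola's variant); your write-up contains no such construction.

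If your intent was to prove Theorem~\ref{thm:prg-formula-sym}, then your approach is essentially the same as the paper's: bound the NIH communication cost of the leaf class (via Muroga's integer-weight bound plus Theorem~\ref{thm:prg-k-party-sum} for $\LTF$, and via broadcasting Hamming weights for $\SYM$), plug into Theorem~\ref{thm:prg-formula-lc}, and optimize $k$. One bookkeeping caveat: your choice $k=\Theta(n^{1/2}/s^{1/4})$ is slightly too crude. With that $k$, the second term of Theorem~\ref{thm:prg-formula-lc} picks up extra $\log$ factors beyond the claimed bound, and ``absorbing polylogarithmic factors'' does not make them disappear since the target bound is itself stated with explicit logarithms. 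The paper instead takes $k=n^{1/2}/\bigl(s^{1/4}\cdot\log(n)\cdot\log(n/\varepsilon)\bigr)$ for $\LTF$ and $k=n^{1/2}/\bigl(s^{1/4}\cdot\log(n)\bigr)$ for $\SYM$, which balances the two terms exactly at the stated seed lengths.
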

	    
    By \Cref{thm:prg-k-party-sum} and the fact that every linear threshold function on $n$ bits has a representation such that the weights are $O(n\log(n))$ integers~\cite{MTT61}, we get the following.
    
    \begin{corollary}\label{cor:prg-k-party-LTF}
         For every $k\geq 2$ and $0<\delta<1$, the $k$-party number-in-hand $\delta$-error randomized communication complexity of $\LTF$ is $O(k\cdot\log(n)\cdot\log(n/\delta))$.
    \end{corollary}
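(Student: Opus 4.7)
The plan is a straightforward reduction from evaluating an LTF in the $k$-party NIH model to the $\SG_m$ problem, for which Theorem~\ref{thm:prg-k-party-sum} already provides an efficient randomized protocol.

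First I would fix an arbitrary $\LTF$ $f\colon\bool^n\to\bool$ given by $f(x)=\mathsf{sign}\!\left(\sum_{i=1}^n a_i x_i - b\right)$. Invoking the classical Muroga--Toda--Takasu theorem~\cite{MTT61}, I may assume without loss of generality that the weights $a_1,\dots,a_n$ and the threshold $b$ are integers of bit-length $m=O(n\log n)$. Next, I would partition the input coordinates according to the canonical NIH assignment, so that party $j\in[k]$ holds a subset $I_j\subseteq[n]$ of the input bits. Each party $j$ computes locally, with no communication, the partial sum $S_j\vcentcolon=\sum_{i\in I_j}a_i x_i$, which is an integer representable in $O(m)=O(n\log n)$ bits. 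Then $f(x)=1$ if and only if $\sum_{j=1}^k S_j > b$, which is precisely an instance of the $k$-party $\SG_{O(m)}$ problem where party $j$ holds $S_j$ in hand and the threshold $b$ is public (it is part of the description of $f$).

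Now I apply Theorem~\ref{thm:prg-k-party-sum} to this $\SG$ instance with error parameter $\delta$ and bit-length parameter $m'=O(n\log n)$. This yields a $\delta$-error randomized protocol of communication cost
\[
    O\!\left(k\cdot\log(m')\cdot\log(m'/\delta)\right)=O\!\left(k\cdot\log(n\log n)\cdot\log(n\log n/\delta)\right)=O\!\left(k\cdot\log(n)\cdot\log(n/\delta)\right),
\]
which is the claimed bound. (The range $k\leq m^{O(1)}$ required by Theorem~\ref{thm:prg-k-party-sum} is harmless: for $k$ much larger than $n$ one has $k>n$ parties sharing $n$ bits and most parties hold nothing, so the stated bound is trivial or irrelevant.)

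There is no real obstacle here; the only subtlety worth double-checking is that the integer weights supplied by~\cite{MTT61} fit in $O(n\log n)$ bits so that the resulting $m$ in the $\SG_m$ application is $\mathsf{poly}(n)$, ensuring the $\log(m)$ and $\log(m/\delta)$ factors collapse to $\log(n)$ and $\log(n/\delta)$ respectively.
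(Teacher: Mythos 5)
Your proposal is correct and matches the paper's intended argument exactly: it combines the Muroga--Toda--Takasu weight-bound with Nisan's $k$-party $\SG_m$ protocol (Theorem~\ref{thm:prg-k-party-sum}), with each party locally forming its partial sum before invoking the protocol. The paper states Corollary~\ref{cor:prg-k-party-LTF} as an immediate consequence of these same two ingredients without further elaboration.
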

    
	\begin{proof}[Proof of \Cref{thm:prg-formula-sym}]
	    By \Cref{cor:prg-k-party-LTF} and \Cref{thm:prg-formula-lc}, for \emph{every} $k\geq 2$ we get a PRG for $\FORMULA\circ\LTF$ of seed length
        \[
    		n/k+O\left(\sqrt{s}\cdot k\cdot \log(n) \cdot\log(ns/\varepsilon)\cdot\log(1/\varepsilon)+\log(k)\right)\cdot\log(k).
    	\]
	    By choosing 
	    \[
	        k=\frac{n^{1/2}}{s^{1/4}\cdot\log(n)\cdot\log(n/\varepsilon)},
	    \]
	    the claimed seed length follows from a simple calculation.
	
	    For $\FORMULA\circ\SYM$, note that every $n$-bit symmetric function has a deterministic $k$-party number-in-hand communication protocol of cost at most $k\cdot\log(n)$. Then the rest can be shown using a similar argument as above (by choosing $k=n^{1/2}/\left(s^{1/4}\cdot\log(n)\right)$).
	\end{proof}

       


    \subsubsection{\texorpdfstring{$\FORMULA\circ\XOR$}{FORMULA-XOR}}
    
    For the case of $\FORMULA\circ\XOR$, we get a PRG with better seed length. 
    
	\begin{theorem}\label{thm:prg-formula-xor}
		For any size $s>0$ and $0<\varepsilon<1$, there exists a PRG that $\varepsilon$-fools $\FORMULA[s]\circ \XOR$ with seed length
		\[
			O\left(\sqrt{s}\cdot\log(s)\cdot\log(1/\varepsilon)+\log(n)\right).
		\]
	\end{theorem}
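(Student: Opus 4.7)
The plan is to instantiate the general reduction (Theorem~\ref{thm:prg-main}) with $\mathcal{G} = \XOR$ and then exploit the fact that a parity of parities is again a parity. By Theorem~\ref{thm:prg-main}, in order for a distribution $\mathcal{D}$ over $\boolpm^n$ to $\varepsilon$-fool $\FORMULA[s] \circ \XOR$, it suffices for $\mathcal{D}$ to $\delta$-fool the $\XOR$ of any $t$ functions from $\XOR$, where
\[
    t = O \! \left( \sqrt{s} \cdot \log(1/\varepsilon) \right) \quad \text{and} \quad \delta = 2^{-c\cdot \sqrt{s}\cdot \log(s)\cdot \log(1/\varepsilon)}
\]
for a sufficiently large constant $c>0$.

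Next I would observe that the class of functions we need to fool is extremely simple: the $\XOR$ of any $t$ parities $\chi_{S_1},\dots,\chi_{S_t}$ on $n$ bits equals the single parity $\chi_{S_1 \triangle \cdots \triangle S_t}$. Hence fooling the $\XOR$ of $t$ parities (with any $t$) is equivalent to fooling the class of all parity functions over $\{-1,1\}^n$ to error $\delta$; in other words, it suffices to construct a $\delta$-biased distribution on $\boolpm^n$.

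I would then invoke any standard explicit construction of small-bias generators (e.g.~Naor--Naor), which produces an $n$-bit $\delta$-biased distribution with seed length $O(\log(n/\delta))$. Plugging in the value of $\delta$ from the first step,
\[
    O(\log(n/\delta)) \;=\; O \! \left ( \log n + \sqrt{s}\cdot \log(s) \cdot \log(1/\varepsilon) \right ),
\]
which matches the claimed bound. The generator is explicit since both Theorem~\ref{thm:prg-main} and the small-bias construction are explicit.

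Since every ingredient is already in place, there is no real obstacle here; the only care needed is to verify that the bookkeeping in Theorem~\ref{thm:prg-main} indeed forces only error $\delta = 2^{-\widetilde{O}(\sqrt{s}\log(1/\varepsilon))}$ against parities (rather than a stronger requirement), so that an $O(\log(n/\delta))$-seed small-bias generator is enough. This is exactly what the statement of Theorem~\ref{thm:prg-main} provides, completing the proof.
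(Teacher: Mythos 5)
Your proposal is correct and follows essentially the same route as the paper's proof: apply Theorem~\ref{thm:prg-main} with $\mathcal{G}=\XOR$, observe that an XOR of parities is a parity, and conclude with an explicit $\delta$-biased generator of seed length $O(\log(n/\delta))$. The only superficial difference is the choice of small-bias construction cited (Naor--Naor versus the paper's \cite{AGHP92}), which is immaterial.
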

	
	\begin{proof}
	    By \Cref{thm:prg-main}, to fool $\FORMULA[s]\circ\mathcal{G}$, it suffices to $\left(\delta=1/2^{O\left(\sqrt{s}\cdot\log (s)\cdot \log(1/\varepsilon)\right)}\right)$-fool the $\XOR$ of a few functions from $\mathcal{G}$, where $\mathcal{G}$ in this case is the set of all $\XOR$ functions. Note that the $\XOR$ of any set of $\XOR$ functions simply computes some $\XOR$ function. Therefore, we can use small-bias distribution, which fools every $\XOR$ function, to fool $\FORMULA[s]\circ\XOR$. Finally, note that there are known constructions for $\delta$-bias distributions that use $O(\log(n/\delta))$ random bits (see e.g.~\cite{AGHP92}).
	\end{proof}
	
	Using the ``locality'' of this PRG for $\FORMULA\circ\XOR$, we get a lower bound for $\MCSP$ against subquadratic-size formulas of XORs. 
	
	\begin{theorem}\label{thm:prg-formula-xor-mcsp}
	    For every integer $s>0$, if $\MCSP$ on $N$-bit can be computed by some function in $\FORMULA[s]\circ \XOR$, then $s=\tilde{\Omega}(N^2)$.
	\end{theorem}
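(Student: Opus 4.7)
The plan is a standard distinguisher argument that exploits the ``locality'' of the small-bias PRG from \Cref{thm:prg-formula-xor}: if $\MCSP$ has a small $\FORMULA\circ\XOR$ formula, then that PRG must fool it, yet a well-chosen MCSP threshold turns MCSP itself into a very strong distinguisher between the PRG's output (whose truth tables have tiny circuit complexity) and uniform (whose truth tables almost never do).

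In more detail, suppose $\MCSP \in \FORMULA[s]\circ\XOR$ on $N$-bit inputs, and let $\ell_{\MCSP}$ be a threshold parameter to be fixed later. Restricting the $1^{\ell_{\MCSP}}$-tail of the input, I first pass to $M(f) := \MCSP(f, 1^{\ell_{\MCSP}})$ on $n_f := N - \ell_{\MCSP}$ bits, which is still in $\FORMULA[s]\circ\XOR$ (hardwiring leaves cannot increase the formula size). Taking $\varepsilon := 1/3$, \Cref{thm:prg-formula-xor} supplies an explicit $\delta$-biased distribution $G : \boolpm^\ell \to \boolpm^{n_f}$ fooling $M$, with seed length $\ell = O(\sqrt{s}\cdot\log s + \log n_f)$ and $\delta = 2^{-\Omega(\sqrt{s}\cdot\log s)}$, since the proof of that theorem just invokes a small-bias generator.

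The key point is the locality of the \cite{AGHP92} construction: each output bit can be written as $G(z)_i = \langle v_i, z\rangle \bmod 2$, where the vector $v_i \in \boolpm^\ell$ is computable from the index $i \in \boolpm^n$ (with $n := \log n_f$) by a circuit of size $\textup{poly}(n, \ell)$. Hence, for every fixed seed $z$, the $n$-variate Boolean function whose truth table is $G(z)$ has circuit complexity at most $s_0 := \textup{poly}(n, \ell) = \widetilde O(\sqrt{s}) + \textup{poly}(\log N)$. Setting $\ell_{\MCSP} := s_0$ therefore forces $M(G(z)) = 1$ for every seed $z$, whereas the standard counting bound $|\{g : \textup{ckt}(g) \leq s_0\}| \leq 2^{O(s_0 \log s_0)}$ gives $\Pr_f[M(f) = 1] \leq 2^{O(s_0 \log s_0) - n_f}$ for uniform $f \in \boolpm^{n_f}$.

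Combining the two, the distinguishing advantage of $M$ between $G(U_\ell)$ and $U_{n_f}$ is $1 - o(1)$ as soon as $s_0 \log s_0 \ll n_f = \Theta(N)$, which contradicts $\varepsilon$-fooling unless $s_0 \log s_0 = \Omega(N)$. Substituting $s_0 = \widetilde O(\sqrt{s}) + \textup{poly}(\log N)$ yields $s = \widetilde\Omega(N^2)$, as claimed. The only non-trivial point to check carefully is the locality of the small-bias construction used in \Cref{thm:prg-formula-xor}, i.e., that each output bit is a parity over a $\textup{poly}(\log N, \ell)$-computable subset of seed bits, so that hardwiring $z$ into the index-to-output circuit produces a truth table of circuit complexity $\leq s_0$; everything else is plug-and-play given \Cref{thm:prg-formula-xor} and the counting estimate for MCSP.
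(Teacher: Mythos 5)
Your proposal follows essentially the same route as the paper's: the paper's proof is a terse sketch that (i) observes the small-bias construction from \cite[Construction 3]{AGHP92} is \emph{local}, i.e.\ each output bit is computable from the seed and the index by a small circuit, and (ii) invokes the generic ``local PRGs imply $\MCSP$ lower bounds'' argument from \cite[Section 3]{CKLM19}. You have simply unpacked step (ii) --- fixing a threshold $\ell_{\MCSP}$, hardwiring a seed into the locality circuit to bound the circuit complexity of every output truth table, and using the counting bound for $\Pr_f[\MCSP(f,1^{\ell_{\MCSP}})=1]$ to show $\MCSP$ distinguishes $G(U_\ell)$ from uniform unless $s = \widetilde\Omega(N^2)$. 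That is exactly the CKLM mechanism, so there is no genuinely different idea here.

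One point you should tighten, because as written it looks like a gap: you set $s_0 := \poly(n,\ell)$ and then assert $s_0 = \widetilde O(\sqrt{s}) + \poly(\log N)$. With $\ell = O(\sqrt{s}\log s + \log N)$, this identity only holds if the locality circuit is \emph{nearly linear} in $\ell$; an $\Omega(\ell^2)$-size locality circuit would give $s_0 = \widetilde\Omega(s)$, collapsing the contradiction and yielding only $s = \widetilde\Omega(N)$ rather than $\widetilde\Omega(N^2)$. This is not a fatal flaw --- the AGHP powering construction used in \Cref{thm:prg-formula-xor} does have locality circuits of size $\widetilde O(\ell\cdot\log N)$ (compute $x^i$ from $x$ and $i$ via $O(\log N)$ field multiplications over $\field{2^{m}}$ with $m = O(\ell)$, then take the fixed inner product with the hardwired $y$), which is what the paper records as $\widetilde{O}(\log(n/\delta)\cdot\log n)$ --- but the blanket ``$\poly(n,\ell)$'' hides precisely the dependence the final arithmetic hinges on, and you should state the near-linearity explicitly rather than leave it as a ``non-trivial point to check.''
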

	
	\begin{proof}[Proof sketch]
	    There is a standard construction of $\delta$-bias distribution that is local (see e.g.~{\cite[Construction 3]{AGHP92}} and~{\cite[Fact 7]{CKLM19}}) in the following sense: there exists a circuit of size at most $\tilde{O}(\log(n/\delta)\cdot \log(n))$ such that given a seed of length $O(\log(n/\delta))$ and a index $j\in[n]$, outputs the $j$-th bit of the distribution. Local PRGs imply $\MCSP$ lower bounds (see~\cite[Section 3]{CKLM19}). 
	\end{proof}
	
    \subsubsection{\texorpdfstring{$\FORMULA\circ\AC^0$}{FORMULA-AC0}}

	Another application of \Cref{thm:prg-main} is to take $\mathcal{G}$ to be the set all functions that can be computed by small constant-depth circuits ($\AC^0$). Note the state-of-the-art PRG against size-$M$ depth-$d$ $\AC^0$ has a seed length of $\log^{d+O(1)}(Mn)\cdot\log(1/\varepsilon)$~\cite{ST19}. Below, let $\AC^0_{d,M}$ denote the class of depth-$d$ circuits of size at most $M$.
	
	\begin{theorem}\label{thm:prg-formula-ac0}
		For any size $s,m>0$ and $0<\varepsilon<1$, there exists a PRG that $\varepsilon$-fools $\FORMULA\circ \AC^0_{d,M}$ of size $s$ with seed length
		\[
			\log^{d+O(1)}(Mn)\cdot\sqrt{s}\cdot\log(s)\cdot\log(1/\varepsilon).
		\]
	\end{theorem}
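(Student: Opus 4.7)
The plan is a direct application of the general framework from \Cref{thm:prg-main}, instantiated with the known PRG for $\AC^0$ from \cite{ST19}. The key conceptual point is choosing the right reduction: since $\XOR$ is not computable in $\AC^0$, one cannot compose our approximating polynomial's parity-representation with an $\AC^0$ PRG cleanly; however, \Cref{thm:prg-main} equally allows reducing to the $\AND$ of a few functions in $\mathcal{G}$, and $\AC^0$ \emph{is} trivially closed under conjunction.

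First I would invoke \Cref{thm:prg-main} with $\mathcal{G} = \AC^0_{d,M}$ in its $\AND$-form. This reduces the problem to constructing a distribution that $\delta$-fools the conjunction of $t := c \cdot \sqrt{s}\cdot \log(1/\varepsilon)$ arbitrary circuits from $\AC^0_{d,M}$, where
\[
    \delta \;=\; 2^{-c\cdot\sqrt{s}\cdot\log(s)\cdot\log(1/\varepsilon)}.
\]
Next I would observe that the conjunction of $t$ circuits in $\AC^0_{d,M}$ is itself an $\AC^0$ circuit of depth at most $d+1$ and size at most $t \cdot M + 1 = O(\sqrt{s}\cdot \log(1/\varepsilon) \cdot M)$, by simply attaching an $\AND$ gate at the top.

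Then I would apply the $\AC^0$ PRG of Servedio and Tan \cite{ST19}: for depth $d'$ and size $M'$ with error $\delta$, there is an explicit PRG with seed length $\log^{d' + O(1)}(M' n) \cdot \log(1/\delta)$. Plugging in $d' = d + 1$, $M' = O(\sqrt{s}\cdot \log(1/\varepsilon) \cdot M)$, and the value of $\delta$ above, the resulting seed length is
\[
    \log^{d+O(1)}\!\bigl(M n\cdot \sqrt{s}\cdot \log(1/\varepsilon)\bigr) \cdot \sqrt{s}\cdot \log(s)\cdot \log(1/\varepsilon),
\]
and the additional $\sqrt{s}\cdot \log(1/\varepsilon)$ factor inside the logarithm is absorbed into the $O(1)$ in the exponent, yielding the advertised $\log^{d+O(1)}(Mn)\cdot \sqrt{s}\cdot \log(s)\cdot \log(1/\varepsilon)$ seed length.

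There is no serious obstacle: the whole argument is a black-box plug-in. The only minor point requiring a sentence of care is the exponent manipulation when absorbing $\log(\sqrt{s}\log(1/\varepsilon))$ into the $\log^{d+O(1)}(Mn)$ term, which is valid provided $s, 1/\varepsilon \leq 2^{\poly(Mn)}$ (outside this regime the PRG statement is trivial). Everything else, including explicitness, follows immediately from the explicitness of \Cref{thm:prg-main} and of the PRG from \cite{ST19}.
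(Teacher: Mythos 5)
Your proof is correct and matches the paper's intended (but unwritten) argument: apply \Cref{thm:prg-main} with $\mathcal{G}=\AC^0_{d,M}$ and then invoke the $\AC^0$ PRG of \cite{ST19} to fool the reduced object. Your observation that one must use the $\AND$-form of \Cref{thm:prg-main} (since the $\XOR$ of $\Theta(\sqrt{s})$ circuits is not itself an $\AC^0$ circuit of comparable size and depth) is exactly the right point, and your parameter bookkeeping, including absorbing the extra $\log(\sqrt{s}\log(1/\varepsilon))$ factor into the $\log^{d+O(1)}(Mn)$ term in the regime where the PRG is non-trivial, is sound.
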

	
	Moreover, by inspecting the construction of PRG in~\cite{ST19}, it is not difficult to see that the PRG is also local; there exists a circuit of size at most $\lambda=\log^{d+O(1)}(Mn)\cdot\log(1/\varepsilon)$ such that given a seed of length $O\log^{d+O(1)}(Mn)\cdot\log(1/\varepsilon)$ and a index $j\in[n]$, outputs the $j$-th bit of the PRG. As a result, we get $\MCSP$ lower bounds from the this PRG.
	
	\begin{theorem}\label{thm:prg-formula-ac0-mcsp}
		 For every $s,d,M \in \mathbb{N}$, if $\MCSP$ on $N$-bit can be computed by some function in $\FORMULA[s]\circ \AC^0_{d,M}$, then
		 \[
		   s\geq N^2/\log^{2d+O(1)}(Mn).
		 \]
	\end{theorem}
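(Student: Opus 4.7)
My plan is to mirror the proof of Theorem \ref{thm:prg-formula-xor-mcsp}, which derives an $\MCSP$ lower bound from the locality of a PRG via the framework of~\cite{CKLM19}. Concretely, the argument has three ingredients: the PRG from Theorem \ref{thm:prg-formula-ac0}, the fact (asserted in the paragraph preceding the statement) that this PRG is \emph{local}, and the standard observation that local PRGs separate the image of the generator from uniform via $\MCSP$.

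In more detail, suppose for contradiction that $\MCSP$ on $N$-bit truth tables lies in $\FORMULA[s]\circ \AC^0_{d,M}$ with $s < N^2/\log^{2d + O(1)}(MN)$. I would first instantiate Theorem \ref{thm:prg-formula-ac0} with error $\varepsilon = 1/3$, obtaining a PRG $G\colon \{0,1\}^\ell \to \{0,1\}^N$ of seed length
\[
    \ell \;=\; O\!\left(\log^{d+O(1)}(MN)\cdot \sqrt{s}\cdot \log(s)\right),
\]
and then invoke the asserted locality: there is a circuit of size $\lambda = \log^{d+O(1)}(MN)$ that, on input a seed $z\in\{0,1\}^\ell$ and an index $j\in[N]$, outputs the $j$-th bit $G(z)_j$. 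Hard-coding $z$ into this local circuit gives, for every $z$, a circuit of size at most $\theta := \lambda + O(\ell)$ computing the Boolean function whose truth table is $G(z)$. Hence $\MCSP(G(z), 1^\theta) = 1$ with probability $1$.

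On the other hand, a standard counting argument shows that a uniformly random $N$-bit truth table has circuit complexity at least $\Omega(N/\log N)$ with probability $1 - o(1)$, so if $\theta = o(N/\log N)$ then $\Pr[\MCSP(U_N, 1^\theta) = 1] = o(1)$. In that case the function $x \mapsto \MCSP(x, 1^\theta)$ distinguishes $G(U_\ell)$ from $U_N$ with advantage close to $1$, contradicting the $(1/3)$-fooling guarantee of $G$ applied to $\MCSP \in \FORMULA[s]\circ \AC^0_{d,M}$. Therefore we must have $\theta = \Omega(N/\log N)$; since $\lambda \le \ell$, this yields
\[
    \sqrt{s}\cdot \log(s)\cdot \log^{d+O(1)}(MN) \;\ge\; \Omega(N/\log N),
\]
and rearranging (absorbing the $\log s$ and $\log N$ factors into the $O(1)$ exponent) gives the claimed bound $s \ge N^2/\log^{2d + O(1)}(MN)$.

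The only real step beyond routine bookkeeping is verifying the locality of the PRG of~\cite{ST19} underlying Theorem \ref{thm:prg-formula-ac0}, since this is what supplies the critical parameter $\lambda$; the paragraph before the theorem asserts that inspecting the construction yields such a local evaluation procedure, so I would expand on that inspection and quantify $\lambda$ precisely. Once $\lambda$ is pinned down, everything else is an arithmetic consequence of the PRG-versus-$\MCSP$ dichotomy already exploited in Theorem \ref{thm:prg-formula-xor-mcsp}.
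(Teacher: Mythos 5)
Your argument is the intended one: it follows the same route as the paper, which gives only a sketch via the locality of the ST19 generator (the paragraph preceding the statement) and the analogous Theorem~\ref{thm:prg-formula-xor-mcsp}. The only imprecision is cosmetic — hard-coding $z$ does not cost an additive $O(\ell)$ (fixing input wires never increases circuit size), but since the locality parameter $\lambda$ is itself at least $\Omega(\ell)$ (a circuit reading the whole seed has at least that many gates/wires), your bound $\theta = O(\ell)$ is what matters and your final arithmetic is correct.
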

	
	\subsection{Formulas of low number-on-forehead communication leaf gates}
	
	In this section, we show a PRG with mild seed length for formulas of functions with low \emph{multi-party number-on-forehead} communication complexity.
	
	\begin{theorem}\label{thm:mild-prg}
	    Let $\mathcal{G}$ be a class of $n$-bits functions. For any size $s>0$, there exists a PRG that $\varepsilon$-fools $\FORMULA[s]\circ \mathcal{G}$, with seed length
		\[
		    n-\frac{n}{O\left(\sqrt{s}\cdot k\cdot4^k\cdot \left( R^{(k)}_{\varepsilon/(2s)}(\mathcal{G})+\log(n)\right)\cdot\log(n/\varepsilon)\right)}.
		\]
	\end{theorem}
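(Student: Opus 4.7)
The plan is to follow the same blueprint used in the proof of Theorem \ref{thm:prg-formula-lc}, adapted to the number-on-forehead setting. I would first apply the general framework of Theorem \ref{thm:prg-main} to reduce the task of $\varepsilon$-fooling $\FORMULA[s]\circ\mathcal{G}$ to the task of $\delta$-fooling the XOR of $t=O(\sqrt{s}\cdot\log(1/\varepsilon))$ arbitrary functions from $\mathcal{G}$, where $\delta = 2^{-\Theta(\sqrt{s}\cdot\log(s)\cdot\log(1/\varepsilon))}$. I would then handle the randomized-to-deterministic transition exactly as in the proof of Theorem \ref{thm:prg-formula-lc}: each leaf function admits a $k$-party NOF $(\varepsilon/(2s))$-error randomized protocol of cost $R = R^{(k)}_{\varepsilon/(2s)}(\mathcal{G})$, which may be viewed as a distribution over deterministic protocols of the same cost. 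A union bound over the $t$ functions inside the XOR (after rescaling the error) reduces the task to $\delta$-fooling XORs of $t$ functions each of \emph{deterministic} $k$-party NOF complexity at most $R$.

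Next I would use subadditivity of NOF communication: the XOR of $t$ deterministic NOF protocols, each of cost $R$, is itself computable by a deterministic $k$-party NOF protocol of cost at most $c := t\cdot R + O(\log t)$, since the parties can simulate the sub-protocols in parallel and then XOR the outputs. It thus suffices to exhibit a PRG that $\delta$-fools any single function with $k$-party NOF deterministic communication complexity at most $c$, of the form promised by a (multiparty) analogue of Theorem \ref{thm:prg-lc}. A standard construction goes through the BNS paradigm: a function of NOF cost $c$ is a $\pm 1$-sum of at most $2^c$ cylinder intersections, so it is enough to fool cylinder intersections, which in turn can be done by distributions of small $k$-party NOF discrepancy (for example, block-wise $\varepsilon$-biased or almost $k$-wise independent constructions), paying the usual $4^k$ blowup arising from the Cauchy--Schwarz "BNS cube" unrolling. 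A sampler-composition argument then upgrades this to a PRG of the "mild" form $n - n/O\!\bigl(k\cdot 4^k\cdot (c+\log(n/\delta))\bigr)$. Substituting $c = O(\sqrt{s}\cdot(R+\log n)\cdot\log(1/\varepsilon))$ and $\log(1/\delta) = O(\sqrt{s}\cdot\log(s)\cdot\log(1/\varepsilon))$ and simplifying yields the claimed seed length $n - n/O(\sqrt{s}\cdot k \cdot 4^k \cdot (R+\log n)\cdot \log(n/\varepsilon))$.

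The main obstacle is the single-function NOF PRG invoked in the last step: although the overall strategy (fooling cylinder intersections via small $k$-party discrepancy) is classical, pinning down the exact parameters---keeping the saving multiplicative in $1/(k\cdot 4^k)$ and absorbing the $\log(1/\delta)$ contribution into an additive $\log n$ overhead rather than picking up a spurious $\log(1/\delta)$ factor in the denominator---requires care. Concretely, one has to choose the size of the "pseudorandom block" inside the seed to match $n/L$ for $L$ equal to the claimed denominator, and verify uniformly over all NOF protocols of cost $c$ that fooling the associated cylinder-intersection decomposition to error $\delta/2^{c}$ is both sufficient and achievable with these parameters. Once this building block is in hand, the remaining reductions above are routine and the theorem follows.
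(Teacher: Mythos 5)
Your proposal diverges fundamentally from the paper's proof, and the divergence is where the real difficulty lies. The paper does \emph{not} construct this PRG by reduction to a PRG for single low-NOF-complexity functions. Instead, it proceeds via hardness-versus-randomness: Theorem~\ref{thm:lb-main} establishes that $\GIP_m^k$ is average-case hard against $\FORMULA[s]\circ\mathcal{G}$, and Lemma~\ref{lem:mild-prg} converts that hardness into a PRG via a hybrid argument and Yao's next-bit-prediction lemma. The $4^k$ and $k$-party-discrepancy overhead enters solely through the BNS bound (Theorem~\ref{thm:lb-GIP-bound}), which is used as a \emph{lower bound} on the NOF complexity of a fixed explicit function, not as a tool for constructing a low-discrepancy distribution. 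The final seed length $n-t$ comes from tuning the block length $m \approx n/t$ so that the hybrid/predictor of Lemma~\ref{lem:mild-prg} contradicts the $\GIP$ hardness.

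The obstacle you flag at the end as "requiring care" is in fact a genuine gap: your plan needs an explicit PRG $\delta$-fooling arbitrary Boolean functions of $k$-party NOF deterministic communication cost $c$, with seed length $n - n/O\!\left(k\cdot 4^k\cdot(c+\log(n/\delta))\right)$. No such PRG is known. The INW-style construction (Theorem~\ref{thm:prg-lc}, Appendix~A.2) crucially exploits the \emph{disjoint-input} structure of the number-in-hand model, where each message can be recharged with extractor output; the NOF model has no analogous partition structure, and constructing explicit pseudorandom distributions with small $k$-party NOF discrepancy remains a notoriously open problem. Citing the BNS discrepancy bound does not help here: BNS shows that a \emph{specific explicit function} (GIP) has small discrepancy against cylinder intersections, which is useful for proving lower bounds and, in this paper, for proving hardness of $\GIP$---but it does not produce a pseudorandom \emph{distribution} with small discrepancy against all cylinder intersections, which is what your reduction requires. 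So the "single-function NOF PRG" step is not a matter of parameter bookkeeping; it is the missing ingredient, and the paper's hardness-based detour is precisely what avoids having to solve it.
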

	
	The PRG is constructed using the hardness vs. randomness paradigm.
	    
	\subsubsection{Hardness based PRGs}
	
	We show how to construct the PRG using the average-case hardness result for formulas of functions with low multi-party communication complexity (\Cref{thm:lb-main}). We start with some notations. For $x\in\boolpm^m$ and an integer $k$ such that $k$ divides $m$, we consider a partition of $x$ into $k$ equal-sized consecutive blocks and write $x=x^{(1)},x^{(2)},\dots,x^{(k)}$, where $x^{(i)}\in\boolpm^{m/k}$ for each $i\in [k]$.
    	
	\begin{lemma}\label{lem:mild-prg}
	    For any integers $m,t,k>0$ such that $k$ divides $m,t$, let $\mathcal{G}$ be a class of functions on $mt+t$ bits, and let $G\colon\boolpm^{m\times t}\to\boolpm^{mt+t}$ be
		\begin{align*}
			&\quad G(x_1,x_2,\dots,x_t) \\
			&=\left(x_1^{(i)},x_2^{(i)},\dots,x_t^{(i)},\GIP_{m}^{k}\left(x_{(i-1)\cdot(t/k)+1}\right),\GIP_{m}^{k}\left(x_{(i-1)\cdot(t/k)+2}\right),\dots, \GIP_{m}^{k}\left(x_{i\cdot(t/k)+1}\right)\right)_{i\in[k]},
		\end{align*}
		where $x_1,x_2,\dots,x_t\in\boolpm^m$.
		Then $G$ is a PRG that $(t\cdot \varepsilon)$-fools $\FORMULA \circ \mathcal{G}$ of size 
		\[
			s=\Omega\left( \frac{m^2}{k^2\cdot16^{k}\cdot\left(R^{(k)}_{\varepsilon/(2m^2)}(\mathcal{G})+\log m\right)^2\cdot\log^2(1/\varepsilon)}\right).
		\]
	\end{lemma}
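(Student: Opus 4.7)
The plan is the standard hardness-vs-randomness hybrid argument, which reduces the PRG claim to the average-case hardness bound of \Cref{thm:lb-main}. First I would define hybrid distributions $H_0, H_1, \dots, H_t$ as follows: draw a uniformly random seed $(x_1,\dots,x_t)\in\boolpm^{m\times t}$ together with fresh uniform bits $r_1,\dots,r_t\in\boolpm$, and let $H_\ell$ be the vector obtained from $G(x_1,\dots,x_t)$ by overwriting the first $\ell$ of its $t$ GIP-output coordinates with $r_1,\dots,r_\ell$. Then $H_0$ has the same distribution as $G(U)$ for a uniform seed $U$, while $H_t$ is uniform over $\boolpm^{mt+t}$. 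If some $C\in\FORMULA[s]\circ\mathcal{G}$ distinguishes $G(U)$ from uniform with advantage greater than $t\varepsilon$, a pigeonhole argument yields an index $\ell\in[t]$ with $\bigl|\Exp[C(H_{\ell-1})]-\Exp[C(H_\ell)]\bigr|\geq\varepsilon$.

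Next I would apply Yao's distinguisher-to-predictor conversion at coordinate $\ell$. The hybrids $H_{\ell-1}$ and $H_\ell$ have identical marginals on every coordinate other than the $\ell$-th GIP output, which equals $\GIP_m^k(x_\ell)$ in $H_{\ell-1}$ and an independent uniform bit in $H_\ell$. Averaging over the other coordinates yields a fixing $\vec v^*$ of all output bits outside $x_\ell^{(1)},\dots,x_\ell^{(k)}$ and this $\ell$-th GIP position that still achieves advantage $\geq\varepsilon$. The restricted function $C_{\vec v^*}(x_\ell,b)$ therefore distinguishes $(x_\ell,\GIP_m^k(x_\ell))$ from $(x_\ell,b)$ for independent uniform $x_\ell,b$, with advantage $\geq\varepsilon$; a standard one-line calculation produces $b\in\boolpm$ and $\sigma\in\boolpm$ such that $p(x_\ell) := \sigma\cdot C_{\vec v^*}(x_\ell,b)$ satisfies $\Prob_{x_\ell}\bigl[p(x_\ell)=\GIP_m^k(x_\ell)\bigr]\geq 1/2+\Omega(\varepsilon)$.

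The key structural step is to verify that $p$ lies in $\FORMULA[s]\circ\mathcal{G}'$ for a class $\mathcal{G}'$ of $m$-variable functions whose $k$-party NOF randomized communication complexity (at the error used by \Cref{thm:lb-main}) is bounded by that of $\mathcal{G}$. This is exactly where the layout of $G$ matters: each subblock $x_\ell^{(i)}$ of the seed is placed inside the $i$-th output block, so under the canonical partition of the $mt+t$ inputs to any gate $g\in\mathcal{G}$ it sits on party $i$'s forehead. Hard-coding the bits of $\vec v^*$ (and the extra bit $b$) into the players' strategies therefore yields an NOF protocol for the restricted gate $g'$ on the $m$ bits of $x_\ell$ whose partition into $x_\ell^{(1)},\dots,x_\ell^{(k)}$ is again the canonical one, with neither the cost nor the error increasing. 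The top-level sign $\sigma$ can be absorbed by negating the root gate of the formula.

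Finally I would invoke \Cref{thm:lb-main} on the function $\GIP_m^k$, the class $\mathcal{G}'$, and correlation parameter $\Omega(\varepsilon)$ to conclude
\[
    s=\Omega\!\left(\frac{m^2}{k^2\cdot 16^k\cdot\left(R^{(k)}_{\varepsilon/(2m^2)}(\mathcal{G})+\log m\right)^2\cdot\log^2(1/\varepsilon)}\right),
\]
after absorbing constant-factor slack into the $\Omega(\cdot)$. The main obstacle in this plan is the structural step above: one must check carefully that the restriction induced by $\vec v^*$ respects and does not permute the canonical NOF partition, which is precisely the reason $G$ is defined to place one $(m/k)$-bit subblock of each seed row $x_j$ into each of the $k$ output blocks.
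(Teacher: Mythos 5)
Your proof is correct and follows essentially the same route as the paper: a hybrid argument over the $t$ GIP output bits, Yao's distinguisher-to-predictor conversion on the selected hybrid index, and then the structural observation that the induced restriction preserves the canonical NOF partition (since each subblock $x_\ell^{(i)}$ lands in output block $i$ by design), so that \Cref{thm:lb-main} applies without cost or error increase. Your treatment of the structural step is if anything slightly more careful than the paper's one-line remark that $C''$ lies in $\FORMULA\circ\mathcal{G}'$ with $R^{(k)}_{\delta}(\mathcal{G}')\leq R^{(k)}_{\delta}(\mathcal{G})$; the hybrid indexing is reversed relative to the paper's but this is cosmetic.
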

	
	\begin{proof}
		The high level idea is as follows. We argue that if there is a $\FORMULA \circ \mathcal{G}$ of the claimed size that breaks the PRG, then there is a $\FORMULA \circ \mathcal{G'}$ of the same size that computes $\GIP$ on $m$ bits, where $\mathcal{G'}$ has a $k$-party communication complexity that is at most that of $\mathcal{G}$ \emph{with respect to the $m$-bit input}, and hence contradicts the $\FORMULA \circ \mathcal{G'}$ complexity of the generalized inner product function. The resulting formula is obtained by fixing some input bits of the original $\FORMULA \circ \mathcal{G}$ which breaks the PRG.

		We use a hybrid argument. First consider the distribution given by $G$, where we replace each $\GIP(x_j)$ ($j\in[t]$) with a uniformly random bit; let us denote those random bits as $\uniform_{j}$ for $j\in[t]$ (note that this is just the uniform distribution). Then for each $j\in[t]$, define $H_j$ to be the distribution that we substitute back $\GIP(x_1),\GIP(x_2),\dots,\GIP(x_{j})$ for the corresponding uniform bits in the previous distribution. 

		For the sake of contradiction, suppose there exists a $\FORMULA \circ \mathcal{G}$ $C$ of size $s$ such that
		\[
			\left| \Prob[C(H_t)=1]-\Prob[F(H_0)=1] \right| > t\cdot\varepsilon.
		\]
		By the triangle inequality, there exists a $1\leq j\leq k$ such that
		\[
			\left| \Prob[C(H_j)=1]-\Prob[C(H_{j-1})=1] \right| >\varepsilon.
		\]
		Then by averaging, there exist some fixings of $x_1,\dots,x_{j-1},x_{j+1},\dots,x_{t}$ and $\uniform_{j+1},\dots,\uniform_{t}$ to $C$ such that the above inequality still holds. Let us denote by $C'$ the circuit obtained by $C$ after such fixings and assume without loss of generality $(k-1)t/k\leq j \leq t$. Then we have
		\begin{equation}\label{eq:distinguish}
		    \left| \Prob\left[C'\left(x_j^{(1)},x_j^{(2)},\dots,x_j^{(k)},\GIP(x_j)\right)=1\right]-\Prob\left[C'\left(x_j^{(1)},x_j^{(2)},\dots,x_j^{(k)},\uniform_j\right)=1\right] \right| > \varepsilon.
		\end{equation}
		By a standard ``unpredictability implies pseudorandomness" argument~\cite{Yao82}, we can show that there is some circuit $C''$, obtained from $C'$ by fixing some value for the last bit, that computes the generalized inner product function on $m$ bits with probability greater than $1/2+\varepsilon$ over uniformly random inputs. Note that the size of $C''$ is the same as $C'$ (hence also $C$) , and also $C''$ can be computed by some $\FORMULA \circ \mathcal{G'}$, where $R^{(k)}_{\delta}(\mathcal{G'})\leq R^{(k)}_{\delta}(\mathcal{G})$ for every $\delta$. This contradicts hardness of $\GIP$ for such circuits (\Cref{thm:lb-main}).
	\end{proof}
	
	We are now ready to prove \Cref{thm:mild-prg}.
	
	\begin{proof}[Proof of \Cref{thm:mild-prg}]
		Consider \Cref{lem:mild-prg}. Let $n=mt+t$, and we have $m=\left(\frac{n}{t}-1\right)$. Then 
		\Cref{lem:mild-prg} gives a PRG that $\varepsilon$-fools $\FORMULA\circ\mathcal{G}$ of size
		\begin{align*}
			s &= \Omega\left( \frac{m^2}{k^2\cdot16^{k}\cdot\left(R^{(k)}_{\varepsilon/(2m^2)}(\mathcal{G})+\log m\right)^2\cdot\log^2(t/\varepsilon)}\right)\\
			&\geq \Omega\left(\left(\frac{n}{t}\right)^2/\left(k^2\cdot16^{k}\cdot\left(R^{(k)}_{\varepsilon/(2n^2)}(\mathcal{G})+\log n\right)^2\cdot\log^2(n/\varepsilon)\right)\right),
		\end{align*}
		which yields
		\[
			t\geq \Omega\left(\frac{n}{\sqrt{s}\cdot k\cdot4^{k}\cdot\left(R^{(k)}_{\varepsilon/(2n^2)}(\mathcal{G})+\log n\right)\cdot\log(n/\varepsilon)}\right).
		\]
		Note that the seed length in this case is $n-t$.
	\end{proof}

	\subsubsection{\texorpdfstring{$\MKtP$}{MKtP} lower bounds}
	
	The PRG in \Cref{thm:mild-prg} is sufficient to give an $\MKtP$ lower bound for formulas of functions with low multi-party communication complexity.
	
	\begin{theorem}\label{thm:prg-mktp}
	    For any integer $s>0$ and any class of $N$-bit function $\mathcal{G}$, if $\MKtP$ on $N$-bit can be computed by some function $\FORMULA[s]\circ \mathcal{G}$, then
	    \[
            s= \frac{N^2}{k^2\cdot16^{k}\cdot R^{(k)}_{1/3}(\mathcal{G})\cdot \polylog(N)}.
	    \]
	\end{theorem}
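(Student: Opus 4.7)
The plan is to derive the $\MKtP$ lower bound directly from the pseudorandom generator of Theorem \ref{thm:mild-prg}, using the standard hardness-versus-randomness paradigm. Suppose for contradiction that $\MKtP$ on $N$-bit inputs is computed by some $C \in \FORMULA[s]\circ\mathcal{G}$. Write the input as $(x,1^\theta)$ where $x \in \{0,1\}^n$ and the threshold $\theta = N-n$ is encoded in unary, and set $n = \Theta(N)$. Hard-wiring the last $N-n$ input bits of $C$ to the all-ones string gives a function $C_\theta \colon \{0,1\}^n \to \{0,1\}$ that is still in $\FORMULA[s]\circ\mathcal{G}$ and satisfies $C_\theta(x) = 1$ iff $\Kt(x) \le \theta$.

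Next, instantiate Theorem \ref{thm:mild-prg} with error $\varepsilon = 1/3$ to obtain an explicit PRG $G\colon\{0,1\}^{\ell}\to\{0,1\}^n$ that $(1/3)$-fools $\FORMULA[s]\circ\mathcal{G}$, where $\ell = n-t$ with
\[
t \;=\; \Omega\!\left(\frac{n}{\sqrt{s}\cdot k\cdot 4^k\cdot\left(R^{(k)}_{1/(6s)}(\mathcal{G})+\log n\right)\cdot\log n}\right).
\]
Choose the $\Kt$-threshold $\theta := \ell + c\log n$ for a suitable constant $c$ that accounts for the description of the PRG algorithm $G$ together with its seed. Since $G$ is polynomial-time computable and every output is determined by its $\ell$-bit seed, every $G(z)$ satisfies $\Kt(G(z)) \le \theta$; hence $\Pr_z[C_\theta(G(z))=1]=1$. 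On the other hand, by the standard counting bound, at most $2^{\theta+1}$ strings have $\Kt$ at most $\theta$, so $\Pr_x[C_\theta(x)=1] \le 2^{\theta - n + 1} = 2^{-t + O(\log n)}$. Combining these with the PRG guarantee forces
\[
1 - 2^{-t + O(\log n)} \;\le\; 1/3,
\]
so $t = O(\log n)$. Rearranging this inequality and substituting the amplification bound $R^{(k)}_{1/(6s)}(\mathcal{G}) = O(R^{(k)}_{1/3}(\mathcal{G}) \cdot \log s)$ for randomized communication complexity yields the claimed size lower bound on $s$, up to $\polylog(N)$ factors.

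The main obstacle is the bookkeeping of polylogarithmic factors and in particular tracking the dependence on $R^{(k)}_{1/3}(\mathcal{G})$: the direct substitution yields a factor of the form $(R^{(k)}_{1/3}(\mathcal{G})\cdot\polylog(N))^{2}$ in the denominator, and obtaining the sharper form stated requires either a careful choice of the fooling error $\varepsilon$ to balance the contribution of leaf-error amplification against the $\Kt$ counting slack, or absorbing one factor of $R^{(k)}_{1/3}(\mathcal{G})$ inside the $\polylog(N)$ term whenever $R^{(k)}_{1/3}(\mathcal{G})$ is itself poly-logarithmic (which is the regime of interest, e.g., for $\mathsf{LTF}$ and bounded-degree $\PTF$ leaves). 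The rest of the argument is routine and parallels the proofs of hardness magnification theorems via local PRGs, but here using the global (non-local) PRG from Theorem \ref{thm:mild-prg}.
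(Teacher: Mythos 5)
Your proposal follows essentially the same route as the paper's proof (which is itself only a brief sketch): apply the hardness-vs-randomness PRG of Theorem \ref{thm:mild-prg}, note that every PRG output has $\mathrm{Kt}$ complexity at most the seed length plus $O(\log N)$, and derive a contradiction from the $2^{\theta+1}$ counting bound on low-$\mathrm{Kt}$ strings. Your explicit handling of the unary threshold and of the parameter $\theta$ is a correct elaboration of what the paper leaves implicit.

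The concern you raise in your last paragraph is genuine, and it points to an imprecision in the theorem statement rather than to a flaw in your argument. From Theorem \ref{thm:mild-prg} with $\varepsilon = 1/3$, the seed length is $N - t$ with $R^{(k)}_{\varepsilon/(2N^2)}(\mathcal{G})$ appearing \emph{linearly} in the denominator of $t$; imposing $t = \Omega(\log N)$ and isolating $\sqrt{s}$ (and then squaring) necessarily produces $\bigl(R^{(k)}_{\varepsilon/(2N^2)}(\mathcal{G}) + \log N\bigr)^{2}$ in the denominator of the bound on $s$. This squared dependence is inherited from Theorem \ref{thm:lb-main} and persists regardless of how one tunes $\varepsilon$; even if $\mathcal{G}$ has zero-error protocols (so there is no amplification loss whatsoever), the square remains because it arises from going between $t$ and $\sqrt{s}$, not from error reduction. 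Consequently the bound one actually obtains is $s = \widetilde{\Omega}\bigl(N^2/(k^2\cdot 16^k\cdot (R^{(k)}_{1/3}(\mathcal{G}))^2)\bigr)$, and the theorem's stated form with $R^{(k)}_{1/3}(\mathcal{G})$ to the first power is defensible only if one absorbs a factor of $R^{(k)}_{1/3}(\mathcal{G})$ into the $\polylog(N)$ term. That is harmless for the motivating classes $\XOR$, $\LTF$, and bounded-degree $\PTF$ where $R^{(k)}_{1/3}(\mathcal{G}) = \polylog(N)$, but it is not a rigorous deduction for an arbitrary $\mathcal{G}$. In short, your derivation is sound; the ``obstacle'' you flag should be read as a gap in the stated constant of the exponent of $R$, not as a missing step in the proof strategy.
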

	
	\begin{proof}
    	Let $C$ be a function in $\FORMULA\circ\mathcal{G}$ of size less than
    	\[
    	    \frac{N^2}{k^2\cdot16^{k}\cdot R^{(k)}_{1/3}(\mathcal{G})\cdot\log^c(N)}
    	\]
    	where $c>0$ is some sufficiently large constant. By \Cref{thm:mild-prg}, we have that there is a PRG that $(1/3)$-fools $C$ and its seed length is
    	\[
             N-\polylog(N).
    	\]
    	Also, since the PRG is polynomial-time computable, we get that for every seed, the output of the PRG has $\mathrm{Kt}$ complexity at most $\theta=N-\polylog(N)$. However, consider the $\MKtP$ function with a threshold parameter $\theta$; this function is not fooled by such a PRG, since it accepts every output of the PRG and rejects a uniformly random string with high probability. 
    \end{proof}
    
    \section{Satisfiability algorithms}\label{sec:sat}
	
	In this section, we will use $\bool$ as the Boolean basis.
	
	\subsection{Computational efficient communication protocols}
	
	\begin{definition}[Computational efficient communication protocols]
	    Let $t\colon\naturals\times\naturals\to\naturals$. We say that a two-party communication protocol is \emph{$t$-efficient} if for each of the parties, given an input $x$ and some previously sent messages $\pi\in\bool^{*}$, the next message to send can be computed in time $t(|x|,|\pi|)$ ($\bot$ is being output if there is no next message). We say that such a protocol is \emph{explicit} if $t(|x|,|\pi|)=2^{o(|x|+|\pi|)}$.
	\end{definition}

	\begin{lemma}\label{lem:sat-time-efficient}
        Let $f\colon\bool^n\to 1$ and let $\Pi$ be a $t$-efficient communication protocol for $f$ with communication cost at most $D$. Then the protocol tree of $\Pi$ can be output in time $O\left(D\cdot t\cprn{n/2,D}\cdot2^{n}\cdot2^D\right)$. That is, there exists an algorithm that outputs a list of all (partial and full) transcripts of length at most $D$ and the rectangles associated with each of the transcripts.
	\end{lemma}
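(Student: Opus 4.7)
The plan is to explicitly enumerate every transcript $\pi$ of $\Pi$ of length at most $D$ together with its associated combinatorial rectangle $R_{\pi} = A_{\pi} \times B_{\pi}$, where $A_{\pi} \subseteq \bool^{n/2}$ (respectively $B_{\pi}$) consists of Alice's inputs (respectively Bob's inputs) consistent with $\pi$. Since every leaf rectangle of the protocol tree is a subrectangle of its ancestor, and each node of the tree corresponds bijectively to a (partial or full) transcript, producing this list is equivalent to producing the protocol tree of $\Pi$ in the sense of the lemma.

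First, I would traverse the protocol tree of $\Pi$ top-down, starting from the empty transcript. At each node $\pi$ I identify the speaking party by invoking each party's next-message function on any fixed input together with $\pi$: whichever party returns a bit (as opposed to $\bot$) is the speaker, and if both return $\bot$ then $\pi$ is a leaf. Branching on the speaker's two possible outputs at each internal node, the tree has at most $2^{D+1} - 1 = O(2^D)$ nodes in total, and the traversal itself takes time $O(2^D \cdot t(n/2, D))$.

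Next, for each transcript $\pi$ obtained above I compute $A_{\pi}$ by iterating over all $x \in \bool^{n/2}$ and simulating Alice's side along $\pi$: at each prefix of $\pi$ at which Alice speaks (which is already known from the first step), I invoke Alice's next-message function on $(x, \text{prefix})$ in time $t(n/2, D)$ and verify that its output matches the corresponding bit of $\pi$. Thus testing whether $x \in A_{\pi}$ costs $O(D \cdot t(n/2, D))$, and computing $A_{\pi}$ in full costs $O(2^{n/2} \cdot D \cdot t(n/2, D))$. The symmetric procedure on Bob's side produces $B_{\pi}$.

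Aggregating over the $O(2^D)$ transcripts, the total running time is $O\!\left(2^D \cdot 2^{n/2} \cdot D \cdot t(n/2, D)\right)$, which is comfortably within the claimed bound $O\!\left(D \cdot t(n/2,D) \cdot 2^n \cdot 2^D\right)$. The only subtle point is extracting the protocol-tree structure (the identity of the speaker at each node) from the abstract next-message oracle, but this is already encoded by the $\bot$-convention in the definition of $t$-efficiency: a single invocation per party at each node suffices, and that overhead is absorbed into $t(n/2, D)$. I do not anticipate any further obstacle; the argument is essentially routine bookkeeping on top of a tree traversal.
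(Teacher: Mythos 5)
Your proof is correct and matches the paper's argument essentially step for step: enumerate candidate transcripts of length at most $D$, and for each transcript and each half-input simulate the corresponding party's next-message function against the fixed transcript to test rectangle membership. The extra first pass that identifies the speaking party at each node is a minor bookkeeping refinement (the paper folds it into the consistency check), and your sharper $2^{n/2}$ factor in place of the paper's $2^n$ is a valid improvement that still sits within the stated bound.
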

	
	\begin{proof}
	    It suffices to show that, given an input $x\in\bool^{n/2}$ and a transcript $\ell\in\bool^{\leq D}$, we can decide whether $x$ belongs to the rectangle indexed by $\ell$ in time $D\cdot t\cprn{n/2,D}$.
	    Suppose $x$ is the input for Alice (resp. Bob), and we want to decide whether $x$ belongs to the rectangle indexed by $\pi$. We can carry out the communication task by simulating the behavior of Alice (resp. Bob) using the protocol $\Pi$ and simulating Bob's (resp. Alice's) behavior using the transcript $\pi$, and check whether the messages sent by Alice (resp. Bob) is consistent with the transcript $\pi$. This takes time at most $D\cdot t\cprn{n/2,D}$. To construct the tree, we do the above for every (partial and full) transcript $\pi\in\bool^{\leq D}$ and every input $x\in\bool^{n/2}$ for Alice (resp. Bob). The total running time is $O\left(D\cdot t\cprn{n/2,D}\cdot2^{n}\cdot2^D\right)$.
	\end{proof}
	
	For a protocol $\Pi$, we denote by $\mathrm{Leaves}(\Pi)$ the set of full transcripts of $\Pi$.\\
	
	\noindent \textbf{Remark.} We note that, in the \emph{white-box} context of the satisfiability problem, there is no need to assume a canonical partition of the input variables among the  players. For instance, a helpful partition can either be given as part of the input, or computed by the algorithm. As a consequence, in instantiations of Theorem \ref{thm:main_sat} for a particular circuit class $\mathcal{C}$, it is sufficient to be able to convert the input circuit from $\mathcal{C}$ into some device from $\FORMULA \circ \mathcal{G}$ for which protocols of bounded communication cost can be described.

    \subsection{Explicit approximating polynomials for formulas}
    
    From \Cref{thm:approx-polynomial}, we know that every size-$s$ formula has a degree-$O(\sqrt{s})$ polynomial that point-wisely approximates it. In our SAT algorithms, we will need to \emph{explicitly construct} such an approximating polynomial given a formula. One way to do this is to use an \emph{efficient} quantum query algorithm for formulas. It is known that a quantum query algorithm for a function $f$ using at most $T$ queries implies an approximating polynomial for $f$ of degree at most $2T$~\cite{BBC+01}, and by classically simulating such an quantum algorithm, one can show that the approximating polynomial can be obtained in time that is polynomial in the number of its monomials, in addition to the time for the classical simulation. For our task, we can use the result of Reichardt~\cite{Rei11a} which showed an \emph{efficient} quantum algorithm for evaluating size-$s$ formulas with $\bigo{\sqrt{s}\cdot \log s}$ queries\footnote{It is also known that there exists a quantum query algorithm for evaluating size-$s$ formulas with $\bigo{\sqrt{s}}$ queries~\cite{Rei11b}, which implies the existence of an approximating polynomial for size-$s$ formulas of degree $\bigo{\sqrt{s}}$ (see \Cref{thm:approx-polynomial}). However, because this algorithm is not known to be efficient, it is unclear whether such an approximating polynomial can be constructed efficiently with respect to the number of monomials.}. Here, we present an alternate way to construct approximating polynomials for de Morgan formulas which rely only on the \emph{existence} of such polynomials, without requiring an efficient quantum query algorithm. This ``black-box'' approach was suggested to us by an anonymous reviewer.
    
    We first need the following structural lemma for formulas.
    
    \begin{lemma}[\cite{IMZ12,Tal14}]\label{lemma:sat-decomposing}
        For every integer $s>0$, there exists an algorithm such that given a size-$s$ de Morgan formula $F$, runs in $\poly(s)$ time and outputs a top formula $F'$ with $O(\sqrt{s})$ leaves and each leaf of $F'$ is a sub-formula with $O(\sqrt{s})$ input leaves.
    \end{lemma}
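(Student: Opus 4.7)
The plan is to use a standard balanced decomposition of the formula tree based on subtree sizes. Given the input formula $F$, I would first compute, for every gate $v$ of $F$, the quantity $\mathrm{size}(v)$ equal to the number of leaves in the subtree $F_v$ rooted at $v$; this takes $O(s)$ time by a single bottom-up traversal.

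Fix a threshold $\tau \vcentcolon= \lceil \sqrt{s}\, \rceil$. Call a gate $v$ \emph{heavy} if $\mathrm{size}(v) \geq \tau$, and \emph{light} otherwise. Since $\mathrm{size}(\cdot)$ is monotone non-increasing along root-to-leaf paths, the set $H$ of heavy gates is connected and contains the root. I would define the top formula $F'$ to be the subtree induced by $H$, together with one extra "border" gate $v$ for every gate in $H$ whose parent is heavy but who is itself light (so that each leaf of $F'$ corresponds to a maximal light subtree of $F$). Then each leaf of $F'$ is labelled by the subformula $F_v$ of $F$ rooted at that border gate, and the internal gates of $F'$ inherit their AND/OR labels from $F$. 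Identifying $H$ and constructing $F'$ takes time $\mathrm{poly}(s)$.

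The two size bounds then fall out of the definitions. For the per-leaf bound: if $v$ is a leaf of $F'$ then $v$ is light, so $\mathrm{size}(v) < \tau = O(\sqrt{s})$, and the associated subformula $F_v$ has $O(\sqrt{s})$ input leaves. For the bound on the number of leaves of $F'$: the leaves of $F'$ correspond to disjoint subtrees whose roots' parents are heavy, hence have size $\geq \tau$; if we instead charge to the parents, the parents are heavy gates whose subtrees each have size $\geq \tau$ but whose combined sizes are at most $2s$ (each leaf of $F$ lies under at most two such parents in the binary-tree sense), so there are at most $O(s/\tau) = O(\sqrt{s})$ such leaves. A parallel bound on the internal gates of $F'$ follows because $F'$ is a binary tree, so its total number of leaves is $O(\sqrt{s})$ as claimed.

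The only mildly delicate point is making sure the two guarantees are simultaneously satisfied by a single choice of threshold; the analysis above shows that $\tau = \lceil\sqrt{s}\,\rceil$ works, so no real obstacle arises. Everything else is a routine traversal of the tree, and the total running time is clearly $\mathrm{poly}(s)$.
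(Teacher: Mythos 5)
Your threshold-based decomposition does not give the claimed bound on the number of leaves of $F'$. The counting step ``each leaf of $F$ lies under at most two such parents'' is incorrect: a leaf of $F$ lies under \emph{every} heavy gate on its root-to-leaf path, and almost all of those gates can be parents of border gates. Concretely, take $F$ to be a caterpillar, where each internal gate $u_i$ ($i=0,\dots,s-2$) has one leaf child $v_{i+1}$ and one non-leaf child $u_{i+1}$, and $u_{s-1}$ is a leaf. Then $\mathrm{size}(u_i)=s-i$, so every $u_i$ with $i\le s-\tau$ is heavy and has a light child $v_{i+1}$; the maximal light subtrees are exactly the singletons $v_1,\dots,v_{s-\tau+1}$ together with $F_{u_{s-\tau+1}}$. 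Your $F'$ therefore has $\Theta(s)$ leaves, not $O(\sqrt{s})$. No choice of the threshold $\tau$ repairs this: the caterpillar admits no family of $O(\sqrt{s})$ pairwise-disjoint subtrees of size $O(\sqrt{s})$ whose contraction leaves a small top formula, so a purely subtree-based single-threshold cut cannot work.

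The decomposition in~\citep{IMZ12,Tal14} (the paper cites these sources rather than reproving the lemma) handles this by extracting subtrees of size $\Theta(\sqrt{s})$ one at a time and \emph{restructuring} the neighborhood of each extraction, not by a static threshold cut. When a subtree $F_v$ is contracted to a fresh leaf $y$ that falls inside some block $G$, one exploits the fact that $G$ is a de~Morgan formula reading $y$ at a single leaf, hence (up to a negation on $y$) $G(x,y)=a(x)\lor\bigl(b(x)\land y\bigr)$ with $a=G|_{y=0}$ and $b=G|_{y=1}$, both of size at most that of $G$; one then replaces the single leaf of $F'$ carrying $G$ by a three-leaf gadget whose leaves carry $a$, $b$, and $F_v$. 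Each extraction removes $\Omega(\sqrt{s})$ leaves from the residual formula while adding only $O(1)$ leaves to $F'$, which is what yields the $O(\sqrt{s})$ bound. Your proposal forgoes exactly this restructuring step, and that omission is where it breaks.
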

    
    \begin{lemma}\label{thm:sat-explicit-polynomial}
		For any integer $s>0$ and any $0<\varepsilon<1$, there exists an algorithm of running time $s^{O\left(\sqrt{s}\cdot\log(s)\cdot\log(1/\varepsilon)\right)}$ such that given a de Morgan formula $F$ of size $s$, outputs an $\varepsilon$-approximating polynomial of degree $O(\sqrt{s}\cdot\log(s)\cdot\log(1/\varepsilon))$ for $F$. That is, the algorithm outputs a multi-linear polynomial (as sum of monomials) over the reals such that for every $x\in\bool^n$,
		\[
			\left| p(x)-F(x)\right| \leq \varepsilon.
		\]
	\end{lemma}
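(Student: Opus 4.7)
The plan is to reduce the construction to a classical simulation of the time-efficient quantum query algorithm of Reichardt~\cite{Rei11a} for formula evaluation. That algorithm, call it $\mathcal{A}$, evaluates any size-$s$ de Morgan formula $F$ on input $x$ with constant error using $Q=O(\sqrt{s}\log s)$ queries to $x$, while maintaining a quantum state of dimension $N = \mathsf{poly}(s)$ and using $\mathsf{poly}(s)$ classical time per non-query step. I would first amplify $\mathcal{A}$ by running $O(\log(1/\varepsilon))$ independent copies and taking a majority vote, obtaining an algorithm $\mathcal{A}'$ with query complexity $T = O(\sqrt{s}\log(s)\log(1/\varepsilon))$ and error at most $\varepsilon$ on every input. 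By the polynomial method of \cite{BBC+01}, the acceptance probability $p(x) \vcentcolon= \Pr[\mathcal{A}'\text{ accepts }x]$ is then a multilinear polynomial over $\mathbb{R}$ of degree at most $2T$ satisfying $|p(x)-F(x)|\leq \varepsilon$ for every $x \in \{0,1\}^n$; this is exactly the polynomial I wish to output.

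To produce $p$ explicitly as a sum of monomials, I would classically simulate $\mathcal{A}'$ step by step, carrying each of the $N$ amplitudes of the current quantum state as a formal multilinear polynomial in $\mathbb{R}[x_1,\ldots,x_n]$. All amplitudes are constants initially. A non-query unitary is a fixed real matrix applied to the amplitude vector and therefore preserves polynomial degree; a query step, which acts as $|i,b\rangle \mapsto |i,b\oplus x_i\rangle$ on the standard basis, corresponds on the amplitude level to multiplication of affected entries by the degree-$1$ polynomial $x_i$ or $1-x_i$, raising their degree by at most $1$ and at most doubling the number of monomials. After each multiplication we reduce $x_i^2$ to $x_i$ to stay multilinear, so after $T$ queries every amplitude is a multilinear polynomial of degree at most $T$ with at most $n^T \leq s^{O(T)}$ monomials. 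Finally $p(x) = \sum_{z\in\text{acc}} |a_z(x)|^2$ has degree at most $2T$ and at most $s^{O(T)}$ monomials. Since each of the $T$ simulation rounds performs $\mathsf{poly}(N)$ scalar arithmetic operations on polynomials of size $s^{O(T)}$, the overall running time is $s^{O(T)} = s^{O(\sqrt{s}\cdot\log(s)\cdot\log(1/\varepsilon))}$, matching the claim.

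The main obstacle is controlling the blow-up of explicit polynomial representations during the simulation; in particular, this requires the time-efficiency of Reichardt's algorithm, not merely its query complexity. While the $O(\sqrt{s})$-query algorithm of \cite{Rei11b} would give the slightly sharper degree $O(\sqrt{s}\log(1/\varepsilon))$, it is not known to be time-efficient, and a naive simulation would not fit the target budget; this is exactly why we use the $O(\sqrt{s}\log s)$-query construction of \cite{Rei11a}, paying an extra $\log s$ factor in the degree in exchange for an efficient classical simulation. As an alternative black-box route one could iterate \Cref{lemma:sat-decomposing} to peel off a top formula on $O(\sqrt{s})$ variables whose approximating polynomial (guaranteed by \Cref{thm:approx-polynomial}) is located by a linear program of size $s^{O(s^{1/4}\log(1/\varepsilon))}$, compose with the exact multilinear representations of the size-$O(\sqrt{s})$ subformulas, and recurse; the main additional difficulty there is accounting for error robustness under composition, which is what introduces the additional $\log s$ factor in the final degree.
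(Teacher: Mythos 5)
Your primary argument --- classically simulating the time-efficient $O(\sqrt{s}\log s)$-query quantum formula-evaluation algorithm of Reichardt, amplifying by majority to error $\varepsilon$, and then extracting the acceptance probability as a degree-$O(\sqrt{s}\log(s)\log(1/\varepsilon))$ multilinear polynomial via the polynomial method while carrying amplitudes as formal polynomials --- is correct, and it is in fact the first approach the paper mentions in its preamble to this lemma. The paper, however, deliberately sidesteps that route and instead proves the lemma by the black-box approach you sketch only briefly at the end: it invokes \Cref{lemma:sat-decomposing} once (no recursion) to split $F$ into a top formula $F'$ with $t=O(\sqrt{s})$ leaves, each a subformula of size $O(\sqrt{s})$; it then brute-forces a $(1/20)$-approximating polynomial $P$ for $F'$ of degree $O(s^{1/4})$ and $(1/(20t))$-approximating polynomials for each subformula of degree $O(s^{1/4}\log s)$, and composes after an affine rescaling into $[0,1]$ to make the composition error-robust. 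Your sketch says to compose with ``exact multilinear representations'' of the subformulas, which would push the degree to $O(\sqrt{s})$ per piece and overshoot the total; the point is that the subformulas too are replaced by \emph{approximating} polynomials, with the $\log s$ factor coming from the $1/(20t)$ error target in exactly the same way it comes from the query count of the time-efficient quantum algorithm in your route. The $\log(1/\varepsilon)$ factor is then added at the outer level by composing with the approximate-majority polynomial. The trade-off between the two routes is as you identify: yours leans on the time-efficiency guarantee of the quantum algorithm and careful simulation bookkeeping (e.g.\ the amplified state dimension grows like $\mathrm{poly}(s)^{O(\log(1/\varepsilon))}$, still fitting in the $s^{O(T)}$ budget), whereas the paper's route is entirely elementary, needing only the \emph{existence} of approximating polynomials plus brute-force search over $2^{O(\sqrt{s})}$-size truth tables for each piece. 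Both yield the identical degree and time bounds.
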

	
	\begin{proof}
	    We first note that it suffices to construct a $(1/3)$-approximating polynomial for $F$ with degree $D=O(\sqrt{s}\cdot\log(s))$. This is because given a $(1/3)$-approximating polynomial one can obtain explicitly an $\varepsilon$-approximating polynomial of degree $D\cdot O(\log(1/\varepsilon))$, by feeding $O(1/\varepsilon)$ copies of the $(1/3)$-approximating polynomial to the polynomial computing MAJORITY on $O(1/\varepsilon)$ bits~\cite{BNRdW07} (see also~{\cite[Appendix B]{Tal14}}).
	    
	    We first invoke \Cref{lemma:sat-decomposing} on $F$ to obtain a top formula $F'$ with $t=O(\sqrt{s})$ leaves, each of which is a sub-formula of size $O(\sqrt{s})$. We construct a $(1/20)$-approximating (multi-linear) polynomial $P$ for the top formula $F'$, which has degree $d_1=O(s^{1/4})$ by \Cref{thm:approx-polynomial}. Note that $P$ can be constructed in time $2^{O(\sqrt{s})}$ because $F'$ has at most $O(\sqrt{s})$ leaves. Next, for each of the $t$ sub-formulas, denoted as $F_1,F_2,\dots,F_t$, we construct a $(1/(20t))$-approximating polynomial. Note that these polynomials have degree $d_1=O(s^{1/4}\cdot\log(s))$ and can be constructed in time $2^{O(\sqrt{s})}$. Let's denote these $t$ polynomials as $Q_1, Q_2,\dots,Q_{t}$. Now for each $Q_i$ ($i\in[t]$), we define
	    \[
	        q_i(x)=\frac{Q_i(x)+1/(20t)}{1+1/(10t)}.
	    \]
	    The final approximating polynomial for $F$ is given as
	    \[
	        p(x)=P\left(q_1(x),q_2(x),\dots,q_t(x)\right).
	    \]
	    Note that $p$ has degree $d_1\cdot d_2=O(\sqrt{s}\cdot\log(s))$ and can be constructed (as sum of monomials) in time $s^{O(\sqrt{s}\cdot\log(s))}$. It remains to show that $p$ $(1/3)$-approximates $F$.
	    
	    For $0\leq q\leq 1$, let $N_q$ be the distribution over $\bool$ such that $\Prob_{y\sim N_q}[y=1]=q$. Then for an fixed input $x\in\bool^s$, we have
	    \begin{equation}\label{sat-bb1}
	        p(x)=\Exp_{y_i\sim N_{q_i(x)}}[P(y_1,y_2,\dots,y_t)].
	    \end{equation}
        Let $\mathcal{E}$ be the event that $y_i=F_i(x)$ for all $i\in[t]$. Note that
        \begin{equation}\label{sat-bb2}
            \delta \vcentcolon=\Prob_{y_i\sim N_{q_i(x)}}[\neg \mathcal{E}]\leq 1/10.
        \end{equation}
        To see \Cref{sat-bb2}, note that for every $i\in[t]$, if $F_i(x)=0$, then $0\leq q_i(x)\leq 1/(10t)$, which implies
        \[
            \Prob_{y_i\sim N_{q_i(x)}}[y_i\neq F_i(x)]\leq 1/(10t).
        \]
        Similar for the case when $F_i(x)=1$ (which implies $1-1/(10t)< q_i(x)\leq 1$). Then \Cref{sat-bb2} follows from a union bound. Now we can re-write \Cref{sat-bb1} as
        \begin{align*}
            p(x) &= \Exp[P(y_1,y_2,\dots,y_t)\mid \mathcal{E}]\cdot \Prob[\mathcal{E}] + \Exp[P(y_1,y_2,\dots,y_t)\mid \neg\mathcal{E}]\cdot \Prob[\neg\mathcal{E}]\\
            &=\left(F'(F_1(x),F_2(x),\cdot F_t(x))\pm 1/20 \right)\cdot(1-\delta)+ \Exp[P(y_1,y_2,\dots,y_t)\mid \neg\mathcal{E}]\cdot\delta.
        \end{align*}
        Note that $P(y)\in[-1/(20t),1+1/(20t)]$ for every $y\in\bool^t$, and that $\delta \leq 1/10$. A simple calculation shows that
        \[
               p(x) = F'(F_1(x),F_2(x),\dots, F_t(x))\pm \frac{1}{3},
        \]
        as desired.
	\end{proof}

	\subsection{The \#SAT algorithm}
	
	In this subsection, we present our \#SAT algorithm.
	
	\begin{theorem}\label{thm:sat-main}
	    For any integer $s>0$, there exists a deterministic \#SAT algorithm for $\FORMULA[s]\circ\mathcal{G}$, where $\mathcal{G}$ is the class of functions with explicit two-party deterministic protocols of communication cost at most $D$, that runs in time 
		\[
	        2^{n-\frac{n}{\sqrt{s}\cdot\log^2(s)\cdot D}}.
		\]
		In the case $\mathcal{G}$ is the class of functions with explicit randomized protocols of communication cost at most $R$, there exists an analogous randomized algorithm with a running time
		\[
	        2^{n-\left(\frac{n}{\sqrt{s}\cdot\log^2(s)\cdot R}\right)^{1/2}}.
	    \]
	\end{theorem}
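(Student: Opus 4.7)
The plan is to combine the explicit approximating polynomial for the top formula with the rank-type decomposition of each leaf gate induced by its low-cost communication protocol, and then evaluate the counting sum through the resulting bilinear expression.

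First, I would partition the $n$ input variables along the two-party split as $(x,y)\in\{0,1\}^{n/2}\times\{0,1\}^{n/2}$, so that $M_F[x,y] \vcentcolon= F(g_1(x,y),\ldots,g_s(x,y))$ is a $2^{n/2}\times 2^{n/2}$ matrix whose entry-sum is the target \#SAT count. Using \Cref{thm:sat-explicit-polynomial}, construct explicitly an $\varepsilon$-approximating polynomial $p(z_1,\ldots,z_s)=\sum_{|S|\leq d}\widehat{p}(S)\prod_{i\in S}z_i$ of degree $d=O(\sqrt{s}\,\log(s)\,\log(1/\varepsilon))$ for $F$. In parallel, for each leaf $g_i$ feed its explicit $D$-bit protocol into \Cref{lem:sat-time-efficient} to produce the full protocol tree, which writes $g_i(x,y)=\sum_{\ell\in L_i^1}\mathbf{1}_{A_{i\ell}}(x)\cdot\mathbf{1}_{B_{i\ell}}(y)$ as a sum over $1$-labelled leaves of at most $2^D$ rectangular rank-one terms.

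Substituting these leaf decompositions into the expansion of $p$ and distributing gives
\[
    p(g_1(x,y),\ldots,g_s(x,y)) \;=\; \sum_{j=1}^{r}\alpha_j\cdot u_j(x)\cdot v_j(y),
\]
with $r\leq\binom{s}{\leq d}\cdot 2^{dD}=2^{O(d(\log s+D))}$ and each $u_j,v_j$ an indicator of an intersection of at most $d$ of the $A_{i\ell_i}$'s (resp.\ $B_{i\ell_i}$'s). Summing this bilinear form over $(x,y)$ factors as
\[
    \sum_{x,y}p(g_1,\ldots,g_s)[x,y] \;=\; \sum_{j=1}^{r}\alpha_j\cdot|U_j|\cdot|V_j|,
\]
which reduces to multiplying a $2^{n/2}\times r$ matrix (rows indexed by $x$, columns by the $u_j$'s) against an $r\times 2^{n/2}$ matrix (rows the $v_j$'s, columns indexed by $y$). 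Applying fast rectangular matrix multiplication at the threshold $r\approx 2^{n/2-t}$ evaluates this bilinear form in time roughly $2^{n-t}$, and choosing $\varepsilon$ small enough that rounding the resulting sum to the nearest integer recovers the exact \#SAT count closes the loop.

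For the randomized variant, each leaf only admits a randomized protocol of cost $R$, so I would write each such protocol as a convex combination of deterministic protocols of cost $R$ (as in the proof of \Cref{thm:lb-proof-1}), sample the protocol randomness, and reduce to the deterministic case. The extra square-root in the randomized bound reflects the cost of boosting the per-leaf error so that a union bound over the $s$ leaves keeps the total sampling error below the approximation tolerance already absorbed into $p$.

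The hard part will be the parameter balancing: $\varepsilon$ must be small enough that the pointwise error $\varepsilon\cdot 2^n$ in $\sum p$ does not disturb the rounding, which inflates $d$ and hence $r$, while the saving $t$ in the exponent is governed by how large $r$ can be before the bilinear evaluation exceeds $2^{n-t}$. Obtaining the claimed $t=\Omega(n/(\sqrt{s}\,\log^2(s)\,D))$ therefore requires carefully exploiting the zero-one structure of $u_j,v_j$, using fast rectangular matrix multiplication at its sweet spot, and reusing computation across different choices of $S$ and $\vec{\ell}$ so as to avoid paying the naive $\binom{s}{\leq d}\cdot 2^{dD}$ bookkeeping cost.
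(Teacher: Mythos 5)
Your high-level plan (explicit approximating polynomial for the top formula, rectangle decomposition of each leaf from its protocol tree, substitute and distribute to get a bilinear form, evaluate by narrow matrix multiplication, and reduce the randomized case to the deterministic one by viewing a randomized protocol as a convex combination) matches the paper's, but there is a genuine gap in how you set up the counting: you are summing the bilinear form directly over the full split $(x,y)\in\{0,1\}^{n/2}\times\{0,1\}^{n/2}$, and that cannot yield the claimed savings.

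Two concrete problems. First, the error control: with the direct split, you need $\bigl|\sum_{x,y} p(g_1,\ldots,g_s)[x,y] - \#\mathrm{SAT}\bigr|\le \varepsilon\cdot 2^n < 1$, i.e.\ $\varepsilon < 2^{-n}$, which pushes the degree to $d=\Theta(\sqrt{s}\log(s)\cdot n)$ and the number of bilinear terms $r$ far past $2^n$; there is no ``parameter balance'' to be found here. Second, even setting error aside, the matrix dimensions in your proposal are $2^{n/2}\times r$ and $r\times 2^{n/2}$, so the narrow matrix multiplication of \Cref{thm:sat-fast-matrix-mul} runs in $\widetilde{O}\bigl((2^{n/2})^2\bigr)=\widetilde{O}(2^n)$ — no better than brute force — and your stated time ``$2^{n-t}$ for $r\approx 2^{n/2-t}$'' is simply not what that theorem gives. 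The missing idea is the paper's partial-assignment trick: choose a parameter $n'$, fix a balanced set $T$ of $n'$ variables, define $Q(x)=\sum_{z\in\{0,1\}^{n'}} C_z(x)$ for $x\in\{0,1\}^{n-n'}$, and approximate $Q$ by $Q'(x)=\sum_z p(g^1_z(x),\ldots,g^s_z(x))$. Because each $Q'(x)$ aggregates only $2^{n'}$ values, an approximating error of $\varepsilon = 1/(3\cdot 2^{n'})$ suffices — so $d = O(\sqrt{s}\log(s)\cdot n')$ rather than $O(\sqrt{s}\log(s)\cdot n)$ — and $Q'(x)$ is rounded to $Q(x)$ individually before the outer sum over $x$ is taken. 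The bilinear-term index then includes the restriction $z$, so the matrices are $2^{(n-n')/2}\times m$ and $m\times 2^{(n-n')/2}$, and the matrix product runs in $\widetilde{O}(2^{n-n'})$. Choosing $n'=\Theta\bigl(n/(\sqrt{s}\log^2(s)D)\bigr)$ then makes $m\le 2^{0.172(n-n')/2}$ and yields the stated bound; the randomized case is handled as you sketch, but again on top of the same restriction and with $\varepsilon'\le 1/(3s\cdot 2^{n'})$ so a union bound over the $s$ leaves and the $2^{n'}$ restrictions goes through, and the extra $\log(1/\varepsilon')=\Theta(n')$ blowup in protocol length is what forces $n'$ down to $\Theta\bigl(n/(\sqrt{s}\log^2(s)R)\bigr)^{1/2}$. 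Without the restriction step your derivation does not reach the theorem.
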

	
	The algorithm is based on the framework for designing satisfiability algorithms developed by Williams~\cite{Wil14}. The idea is to transform a given circuit into a ``sparse polynomial'' and solve satisfiability by evaluating the polynomial on all points in a faster-than-brute-force manner.
	
	We first need the following fast matrix multiplication algorithm for ``narrow'' matrices.
	
	\begin{theorem}[\cite{Cop82}]\label{thm:sat-fast-matrix-mul}
		Multiplication of an $N\times N^{.172}$ matrix with an $N^{.172}\times N$ matrix can be done in $O(N^2 \log^2 N)$ arithmetic operations over any field.
	\end{theorem}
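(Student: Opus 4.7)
The plan is to reduce the $\#\text{SAT}$ computation for $F\in\FORMULA[s]\circ\mathcal{G}$ to a single rectangular matrix multiplication and then invoke Coppersmith's algorithm (\Cref{thm:sat-fast-matrix-mul}).  Write $F = F'(g_1,\ldots,g_s)$ where $F'$ is a size-$s$ de~Morgan formula and each leaf $g_i$ is equipped with an explicit two-party deterministic protocol of cost $D$ over the canonical partition $x=(x_A,x_B)$.  First, apply \Cref{thm:sat-explicit-polynomial} with $\varepsilon=1/3$ to construct a multilinear polynomial $p$ of degree $d = O(\sqrt{s}\log s)$ that satisfies $|p(z)-F'(z)|\leq 1/3$ for every $z\in\{0,1\}^s$; since $F'$ is Boolean-valued, this yields $F(x) = \lfloor p(g(x))+1/2\rfloor$ pointwise.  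Second, for each leaf invoke \Cref{lem:sat-time-efficient} to output its protocol tree, which partitions $\{0,1\}^{n/2}\times\{0,1\}^{n/2}$ into at most $2^D$ monochromatic rectangles and thus provides an exact decomposition $g_i(x_A,x_B) = \sum_{\ell=1}^{2^D} A_i^\ell(x_A)\cdot B_i^\ell(x_B)$ with $\{0,1\}$-valued factors.

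Substituting these leaf decompositions into the monomial expansion $p(g(x)) = \sum_{|S|\leq d} c_S \prod_{i\in S} g_i(x)$ and regrouping terms produces a tensor form
\[
    p(g(x_A,x_B)) \;=\; \sum_{T=1}^{M} U_T(x_A)\cdot V_T(x_B),
\]
with $M \leq s^d\cdot (2^D)^d = 2^{O(\sqrt{s}\log^2(s)\cdot D)}$ and each factor depending on only one party's half of the input.  Assemble $\mathbf{U}\in\mathbb{R}^{2^{n/2}\times M}$ and $\mathbf{V}\in\mathbb{R}^{M\times 2^{n/2}}$ so that the product $\mathbf{P}:=\mathbf{U}\mathbf{V}$ records $p(g(x_A,x_B))$ at every entry.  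Using \Cref{thm:sat-fast-matrix-mul} directly when $M \leq (2^{n/2})^{0.172}$ and via blocking of the middle dimension into chunks of $(2^{n/2})^{0.172}$ otherwise, one computes $\mathbf{P}$ in time $\widetilde{O}(M\cdot 2^{0.914\,n})$; rounding each entry and summing returns $\#\text{SAT}$ exactly.  Balancing $\log M = O(\sqrt{s}\log^2(s)\cdot D)$ against the $0.086\,n$ headroom afforded by Coppersmith yields the deterministic savings $t = \Omega(n/(\sqrt{s}\log^2(s)\cdot D))$.

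The randomized case starts from a randomized protocol of cost $R$ per leaf, which we view as a distribution over deterministic protocols.  Union-bounding the error across the $s$ leaves forces per-leaf error $O(1/s)$, which standard amplification raises to effective deterministic cost $O(R\log s)$; since only \emph{sampled} deterministic protocols are accessible, one must additionally repeat the whole procedure to recover the count with high probability, and these two overheads combine multiplicatively after the same parameter balance to yield the square-root loss $t = \Omega\bigl((n/(\sqrt{s}\log^2(s)\cdot R))^{1/2}\bigr)$.  The main obstacle is in the matrix-multiplication step: the superficially appealing identity $\sum_{x} p(g(x)) = (\mathbf{1}^\top\mathbf{U})(\mathbf{V}\mathbf{1})$ would compute that sum in time $\widetilde{O}(M\cdot 2^{n/2})$, but its output approximates $\#\text{SAT}$ only up to additive error $\varepsilon\cdot 2^n$, and demanding $\varepsilon < 2^{-n}$ for exactness inflates the degree of the approximating polynomial and destroys the savings.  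The algorithm is therefore obliged to materialize every entry of $\mathbf{P}$ and round it individually, which is precisely where Coppersmith's amortized $\widetilde{O}(1)$-per-entry cost, versus the naive $O(M)$-per-entry cost, supplies the decisive speedup.
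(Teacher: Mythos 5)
There is a genuine gap here: your writeup does not prove the statement at all. The statement in question is Coppersmith's rectangular matrix multiplication bound --- that an $N\times N^{.172}$ matrix can be multiplied by an $N^{.172}\times N$ matrix in $O(N^2\log^2 N)$ arithmetic operations --- which is an external result in algebraic complexity, obtained in \cite{Cop82} via fast matrix multiplication techniques (bilinear algorithms and tensor-rank arguments), and which the paper simply imports by citation. Your argument never addresses the arithmetic complexity of rectangular matrix multiplication; instead it explicitly \emph{invokes} \Cref{thm:sat-fast-matrix-mul} as a black box, which is circular as a proof of that very theorem.

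What you have actually sketched is (a reasonable outline of) the proof of the $\#$SAT algorithm, \Cref{thm:sat-main}: decomposing each leaf via its protocol tree (\Cref{lem:sat-time-efficient}), replacing the top formula by an explicit approximating polynomial (\Cref{thm:sat-explicit-polynomial}), regrouping into a sum of $M$ products of one-sided functions, and evaluating all entries of $\mathbf{U}\mathbf{V}$ by blocked rectangular matrix multiplication. That material belongs to a different theorem in the paper; as a proof of the Coppersmith statement it supplies no content. If the goal were genuinely to establish \Cref{thm:sat-fast-matrix-mul}, one would need to reproduce (or at least outline) Coppersmith's construction of a low-rank bilinear algorithm for rectangular matrix products, which is an entirely different kind of argument from anything in your proposal.
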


	For an even number $n>0$, and $x\in\bool^{n}$, we denote by $\LH{x}$ (resp. $\RH{x}$) the first half of $x$ and $\RH{x}\in\bool^{n/2}$ the second half. We now prove \Cref{thm:sat-main}.
	
	\begin{proof}[Proof of \Cref{thm:sat-main}]
    	We first prove the deterministic case.
    	
    	Let $C=F(g^1,g^2\dots,g^s)$ be a device in $\FORMULA\circ\mathcal{G}$ where $F$ is a formula and $g^1,g^2,\dots,g^s$ are functions that have a explicit communication protocol of cost at most $D$.
        The first step is to output the protocol tree for each $g^i$ ($i\in[s]$). Since each $g^i$ has explicit protocol of cost at most $D$, by \Cref{lem:sat-time-efficient}, these protocol trees can be output in time $s\cdot 2^{n/2+D+o(n)}\leq 2^{n/1.9}$ (here we assume $D = o(n)$ otherwise the theorem holds trivially).
    
    	Let $n'$ be an integer whose value is determined later. Let $T$ be a set of $n'$ variables such that $T$ contains $n'/2$ variables from the first half of the $n$ variables and the rest are from the second half. For a partial assignment $z\in\bool^{n'}$ to $T$, denote by $C_z$ the restricted function of $C$ where the variables in $T$ are fixed according to $z$. To count the number of satisfying assignments of $C$, we need to compute the following quantity:
		\begin{equation}\label{eq:sat-1}
			\sum_{x\in\bool^{n-n'}} \sum_{z\in\bool^{n'}} C_z(x).
		\end{equation}
		Now consider
		\[
			Q(x) = \sum_{z\in\bool^{n'}} C_z(x).
		\]
		We will try to obtain the value of $Q(x)$ for every $x\in\bool^{n-n'}$, in time about $2^{n-n'}$, which will allow us to compute the quantity in \Cref{eq:sat-1} in time $O(2^{n-n'})$ by summing $Q(x)$ over all the $x$'s. We do this by first transforming $Q$ into an \emph{approximating} polynomial with not-too-many monomials, and each monomial is a product of \emph{functions that only rely on either the first or the second half of $x$}. With such a polynomial, we can perform fast multipoint evaluation using the fast matrix multiplication algorithm in \Cref{thm:sat-fast-matrix-mul}.

		For each $z\in\bool^{n'}$, we view the formula $C_z$ as $F(g^1_z,g^2_z,\dots,g^s_z)$, where $F$ is the de Morgan formula part of $C_z$ and $g^1_z,g^2_z,\dots,g^s_z$ are the leaf gates. Let us now replace $F$ by a $\varepsilon$-approximating polynomial $p$, where $\varepsilon = 1/\left(3\cdot 2^{n'}\right)$, using \Cref{thm:sat-explicit-polynomial}. Note that the degree of $p$ is at most 
		\[
		    d\leq O(\sqrt{s}\cdot\log(s)\cdot\log(1/\varepsilon))\leq O(\sqrt{s}\cdot\log(s)\cdot n').
		\]
		Now consider the following
		\[
			Q'(x)= \sum_{z\in\bool^{n'}} p(g^1_z(x),g^2_z(x),\dots,g^s_z(x)).
		\]
		First, note that by the value that we've chosen for the approximating error $\varepsilon$, we have that, for every $x$,
		\[
			\left|Q'(x)-Q(x)\right|\leq 2^{n'}\cdot \varepsilon= 1/3.
		\]
		In other words, given $Q'(x)$, we can recover the value of $Q(x)$, which is supposed to be an integer.

		Next, we perform fast multipoint evaluation on $Q'$. First of all, we re-write $Q'$ as follows:
		\begin{equation}\label{eq:sat-2}
			Q'(x) = \sum_{z\in\bool^{n'}} \sum_{\substack{S\subseteq[s]: \\ |S|\leq d}}\hat{p}(S)\cdot\prod_{i\in S}g^i_z(x).
		\end{equation}
		Now let $\Pi_i$ be the protocol of $g^i$, we can re-write $g^i_z$ as follows:
		\begin{equation}\label{eq:sat-3}
		    g^i_z(x)=\sum_{\pi_i\in\mathrm{Leaves}(\Pi_i)} \alpha^i\left(\LH{z}\LH{x},\pi_i\right)\cdot \beta^{i}\left(\RH{z}\RH{x},\pi_i\right),
		\end{equation}
		where $\alpha^i\left(\LH{z}\LH{x},\pi_i\right)$ (resp. $\beta^{i}\left(\RH{z}\RH{x},\pi_i\right)$) is $1$ if and only if $\left(\LH{z}\LH{x}\right)$ (resp. $\left(\RH{z}\RH{x}\right)$) belongs to the rectangle indexed by $\pi_i$ and the function value of that rectangle is $1$. Note that for each $i\in[s]$, given the pre-computed protocol tree of the $\Pi_i$, $\alpha^{i}$ and $\beta^{i}$ can be computed in polynomial time (for example, using binary search).
		After plugging \Cref{eq:sat-3} into \Cref{eq:sat-2} for every $i\in[s]$ and rearranging, we get
		\begin{equation}\label{eq:sat-4}
		    Q'(x) = \sum_{z\in\bool^{n'}} \sum_{\substack{S\subseteq[s]: \\ |S|\leq d}} \sum_{\substack{\vec{\pi}=(\pi_i)_{i\in S}: \\ \pi_i\in\mathrm{Leaves}(\Pi_i)}} \hat{p}(S)\cdot \prod_{i\in S} \alpha^i\left(\LH{z}\LH{x},\pi_i\right)\cdot \prod_{i\in S} \beta^{i}\left(\RH{z}\RH{x},\pi_i\right).
		\end{equation}
		Note that $Q'$ can be expressed as the sum of at most $m$ terms, where
		\[
			m\leq 2^{n'}\cdot s^{O(\sqrt{s}\cdot \log(s)\cdot n')}\cdot 2^{O(\sqrt{s}\cdot \log(s)\cdot n'\cdot D)}\leq 2^{O(\sqrt{s}\cdot\log^2(s)\cdot D\cdot n')}.
		\]
		Note that given \Cref{thm:sat-explicit-polynomial}, we can obtain $Q'$ in time
		\begin{equation}\label{eq:sat-time-1}
		    2^{O(\sqrt{s}\cdot\log^2(s)\cdot D\cdot n')}.
		\end{equation}
		Next, we construct a $2^{(n-n')/2} \times m$ matrix $A$ and a $m \times 2^{(n-n')/2}$ matrix $B$ as follows:
		\[
			A_{\LH{x},(z,S,\vec{\pi})}=\hat{p}(S)\cdot\prod_{i\in S} \alpha^i\left(\LH{z}\LH{x},\pi_i\right),
		\]
		and 
		\[
			B_{(z,S,\vec{\pi}),\RH{x}}=\prod_{i\in S} \beta^{i}\left(\RH{z}\RH{x},\pi_i\right).
		\]
		It is easy to see that for each $x\in\bool^{n-n'}$,
		\[
			Q'(x) = (A\cdot B)_{\LH{x},\RH{x}}.
		\]
		We now want to compute $A\cdot B$. Therefore, we want $m \leq 2^{.172(n-n')/2}$ so that computing $A\cdot B$ can be done in time $\tilde{O}(2^{n-n'})$ using \Cref{thm:sat-fast-matrix-mul}. For this we can set $n'$ to be
		\[
			n'=\frac{n}{c\cdot\sqrt{s}\cdot \log^{2}(s)\cdot D},
		\]
		where $c>0$ is some sufficiently large constant. Together with the running time in \Cref{eq:sat-time-1}, The total running time of the algorithm is therefore
		\[
		    2^{n-\frac{n}{\sqrt{s}\cdot\log^2(s)\cdot D}}.
		\]
		
		For the randomized case, for each $g^i$ ($i\in[s]$), we consider a randomized protocol $\Pi_i$ that has error $\varepsilon'\leq 1/(3\cdot s\cdot2^{n'})$, and replace $g^i$ with a randomly picked protocol from $\Pi_i$, so we can say that for every $x\in{n-n'}$, the algorithm computes $Q(x)$ (or $Q'(x)$) with probability at least $2/3$ (via a union bound over all the $g^i$'s and a union bound over all the $z$'s in $\bool^{n'}$). Then we can repeat the above algorithm $\poly(n)$ times and obtain $Q(x)$ for all $x\in\bool^{n-n'}$ correctly with high probability. Note that the error of any randomized protocol with communication complexity $R$ can be reduced to $\varepsilon'$ by blowing up the communication complexity by a factor of $O(\log(1/\varepsilon'))$. In this case the, (as we are considering longer transcripts) the number of terms in $Q'$ (as in \Cref{eq:sat-4}) will be
		\[
		    2^{O(\sqrt{s}\cdot\log^2(s)\cdot R\cdot (n')^2)},
		\]
		and we need to set accordingly
		\[
		    n'=\Omega\left(\frac{n}{\sqrt{s}\cdot\log^2(s)\cdot R}\right)^{1/2},
		\]
		which gives the claimed running time for the randomized case.
	\end{proof}
	
	In fact, using the ideas above we can also get a randomized \#SAT algorithm for the more expressive class $\FORMULA \circ\AC^{0}_{d,M}\circ\mathcal{G}$, where $\AC^{0}_{d,M}$ is the class of depth-$d$ size-$M$ circuits and $\mathcal{G}$ is the class of functions that have low-communication complexity\footnote{Here we define the size of a $\AC^{0}_{d,M}$ circuit to be the number of wires. Note that a circuit in $\FORMULA \circ\AC^{0}_{d,M}\circ\mathcal{G}$ can have $M$ functions from $\mathcal{G}$ at the bottom.}, by combining with the fact that $\AC^0$ circuits have low-degree \emph{probabilistic polynomials over the reals} (a probabilistic polynomial of a function $f$ is a distribution on polynomials such that for every input $x$, a randomly picked polynomial from the distribution agrees with $f$ on the input $x$). More specifically, we have the following.
	
	\begin{theorem}\label{thm:sat-formula-ac0-lc}
	    For any integers $s,d,M>0$, there exists a randomized \#SAT algorithm for $\FORMULA[s]\circ\AC^{0}_{d,M}\circ\mathcal{G}$, where $\mathcal{G}$ is the class of functions with explicit two-party deterministic protocols of communication cost at most $D$, the algorithm outputs the number of satisfying assignments in time 
    	\[
    	    2^{n-\left(\frac{n}{\sqrt{s}\cdot\log^2(s)\cdot (\log M)^{O(d)} \cdot D}\right)^{1/2}}.
    	\]
		In the case $\mathcal{G}$ is the class of functions with explicit randomized protocols of communication cost at most $R$, there exists an analogous randomized algorithm with a running time
    	\[
    	    2^{n-\left(\frac{n}{\sqrt{s}\cdot\log^2(s)\cdot (\log M)^{O(d)} \cdot R}\right)^{1/3}}.
    	\]
	\end{theorem}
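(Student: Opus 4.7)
The plan is to mimic the proof of \Cref{thm:sat-main}, but insert one additional layer of polynomial approximation to handle the $\AC^0_{d,M}$ middle layer, namely \emph{probabilistic polynomials over the reals} (in the style of Braverman / Harsha--Srinivasan) of degree $D_{\AC^0} = (\log M)^{O(d)} \cdot \log(1/\varepsilon_2)$ that agree pointwise with the $\AC^0$ circuit with probability $1-\varepsilon_2$. Write $C = F(C_1,\ldots,C_s) \in \FORMULA[s]\circ\AC^0_{d,M}\circ\mathcal{G}$, split the variables into $n'$ ``iterated'' bits and $n-n'$ ``matrix-multiplication'' bits as before, and replace $F$ by an explicit approximating polynomial $p$ of degree $d_1 = O(\sqrt{s}\cdot\log s\cdot n')$ using \Cref{thm:sat-explicit-polynomial} with error $\leq 1/(6\cdot 2^{n'})$. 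Then, independently sample one probabilistic polynomial $\tilde C_i$ per $\AC^0$ leaf with per-input error $\varepsilon_2 \leq 1/(6\cdot s\cdot 2^{n'})$, so that by a union bound over the $s$ leaves and the $2^{n'}$ partial assignments, with probability $\geq 2/3$ the composed real polynomial $Q'(x) = \sum_{z\in\bool^{n'}} p(\tilde C_1,\ldots,\tilde C_s)(z,x)$ lies within $1/3$ of the target restricted count $Q(x)$.

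Next, as in the proof of \Cref{thm:sat-main}, replace each bottom $\mathcal{G}$ leaf by its rectangle decomposition: in the deterministic case directly with cost $D$; in the randomized case, first amplify to error $\leq 1/\mathrm{poly}(s,M,2^{n'})$ at cost $R\cdot O(n')$, and union-bound over all protocol calls inside all $\tilde C_i$'s. Expanding everything in the monomial basis, $Q'(x)$ becomes a sum of at most
\[
    m \;\leq\; 2^{n'} \cdot (sM)^{d_1 D_{\AC^0}} \cdot 2^{d_1 D_{\AC^0} \cdot D^\star}
\]
terms, where $D^\star = D$ in the deterministic case and $D^\star = R\cdot O(n')$ in the randomized case; each term factors into a function of $\LH{x}$ times a function of $\RH{x}$ because every rectangle indicator does. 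Build the two matrices $A,B$ of dimension $2^{(n-n')/2}\times m$ and $m\times 2^{(n-n')/2}$ as in \Cref{thm:sat-main}, and compute $A\cdot B$ by \Cref{thm:sat-fast-matrix-mul}, provided $m \leq 2^{0.172(n-n')/2}$; reading off $(A\cdot B)_{\LH{x},\RH{x}} = Q'(x)$, rounding, and summing recovers the total count. Repeat the whole procedure $\poly(n)$ times to drive the overall failure probability down.

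Choosing $\varepsilon_2 \asymp 2^{-n'}/(sM)$ forces $D_{\AC^0} = (\log M)^{O(d)}\cdot n'$, so $d_1 \cdot D_{\AC^0} = O(\sqrt{s}\cdot\log s\cdot (\log M)^{O(d)}\cdot (n')^2)$. Plugging into the bound on $m$ and imposing $m \leq 2^{0.172(n-n')/2}$ yields, in the deterministic $\mathcal{G}$ case,
\[
    (n')^2 \;\lesssim\; \frac{n}{\sqrt{s}\cdot\log^2 s\cdot (\log M)^{O(d)}\cdot D},
\]
and in the randomized $\mathcal{G}$ case, where one extra factor of $n'$ enters through $D^\star$,
\[
    (n')^3 \;\lesssim\; \frac{n}{\sqrt{s}\cdot\log^2 s\cdot (\log M)^{O(d)}\cdot R}.
\]
These yield the running times claimed in \Cref{thm:sat-formula-ac0-lc}.

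The main obstacle is the \emph{simultaneous} error budgeting across the three approximation layers (top-level approximating polynomial, middle-level probabilistic $\AC^0$ polynomials, bottom-level randomized communication protocols): each layer must be pushed to error $1/2^{\Omega(n')}$, but each amplification inflates the corresponding degree by a factor of $n'$, and these inflations multiply in the final monomial count. Keeping track of this accounting is exactly what produces the $(\cdot)^{1/2}$ versus $(\cdot)^{1/3}$ exponents in the two cases; the rest of the argument is a direct adaptation of \Cref{thm:sat-main}, and the correctness analysis for $Q'$ versus $Q$ follows the same ``union-bound over $2^{n'}$ assignments, round to nearest integer'' template already used there.
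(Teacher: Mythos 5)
Your proposal matches the paper's proof essentially step for step: top-level explicit $\varepsilon_1$-approximating polynomial for the formula via \Cref{thm:sat-explicit-polynomial}, a middle-level real-valued probabilistic polynomial for each $\AC^0_{d,M}$ circuit (the paper cites [HS19]) with error $\varepsilon_2 = 1/(6s\cdot 2^{n'})$, a bottom-level rectangle decomposition of the $\mathcal{G}$ leaves with error $\varepsilon_3 = 1/(6M\cdot 2^{n'})$ when randomized, and the same multiplication of degrees $d_1 \cdot D_{\AC^0}$ together with a union bound over the $2^{n'}$ partial assignments, feeding into the narrow matrix-multiplication of \Cref{thm:sat-fast-matrix-mul}. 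Your accounting of why the deterministic case gives exponent $1/2$ (two factors of $n'$: one from the top polynomial's degree, one from the $\AC^0$ probabilistic polynomial's degree) and the randomized case gives $1/3$ (a third factor of $n'$ from amplifying the protocol error) is exactly the paper's reasoning, so there is nothing to add.
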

	
	\begin{proof}[Proof sketch]
	    We show the case where $\mathcal{G}$ has low randomized communication complexity. 
	    Let
	    \begin{itemize}
	        \item $\varepsilon_1 = 1/\left(3\cdot2^{n'}\right)$,
	        \item $\varepsilon_2 = 1/\left(6\cdot s\cdot2^{n'}\right)$ and
	        \item $\varepsilon_3 = 1/\left(6\cdot M\cdot2^{n'}\right)$.
	    \end{itemize}
	    As in the proof of \Cref{thm:sat-main}, we can replace the formula part of $\FORMULA[s] \circ\AC^{0}_{d,M}\circ\mathcal{G}$ with a  $\varepsilon_1$-approximating polynomial of degree
	    \[
	        O(\sqrt{s}\cdot\log(s)\cdot\log(1/\varepsilon_1))= O(\sqrt{s}\cdot\log(s)\cdot n').
	    \]
	    Then we replace the $\AC^{0}_{d,M}$ circuit with a randomly picked polynomial from a $\varepsilon_2$-error probabilistic polynomial. By~\cite{HS19}, such a probabilistic polynomial is constructive and has degree at most
	    \[
	        (\log M)^{O(d)}\cdot \log(1/\varepsilon_2)= (\log M)^{O(d)}\cdot (n' + \log(s)).
	    \]
	    Finally, we replace each of the bottom functions, which is from $\mathcal{G}$, with a randomly picked protocol from a randomized protocol with error $\varepsilon_3$, and hence has cost at most
	    \[ 
	        R\cdot O(\log(1/\varepsilon_3))= O(R\cdot(n'+\log(M))).
	    \]
	    As a result, we can express $Q'$ as a polynomial with at most 
    	\[
    		2^{O\left(\sqrt{s}\cdot\log^2(s)\cdot (\log M)^{O(d)} \cdot R\cdot (n')^3\right)}
    	\]
    	monomials, whose variables are functions that depend on either the first half or the second half of $x$. Note that with our choices of $\varepsilon_2$ and $\varepsilon_3$, for every $x\in\bool^{n-n'}$, the algorithm computes $Q(x)$ correctly that with probability at least $2/3$ (by union bounds). By the same reasoning as in the proof of \Cref{thm:sat-main}, we get a randomized \#SAT algorithm with running time
    	\[
    	    2^{n-\left(\frac{n}{\sqrt{s}\cdot\log^2(s)\cdot (\log M)^{O(d)} \cdot R}\right)^{1/3}},
    	\]
    	as desired.
	\end{proof}
	
	It is worth noting that unlike \Cref{thm:sat-main}, the algorithm in \Cref{thm:sat-formula-ac0-lc} is \emph{randomized} even if $\mathcal{G}$ is the class of functions with low \emph{deterministic} communication complexity, because of  the use of probabilistic polynomials for the $\AC^0$ circuits.
	
    \section{Learning algorithms}\label{sec:learning}
	
    In this section, we prove the following learning result for the $\FORMULA \circ \mathsf{XOR}$ model.
    
    \begin{theorem}\label{thm:learn-formula-xor}
        For every constant $\gamma > 0$, there is an algorithm that \emph{PAC} learns the class of $n$-variate Boolean functions $\FORMULA[n^{2 - \gamma}]\circ\mathsf{XOR}$ to accuracy $\varepsilon$ and with confidence $\delta$ in time $\mathsf{poly}\big (2^{n/\log n}, 1/\varepsilon, \log(1/\delta) \big )$.
    \end{theorem}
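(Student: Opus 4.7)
The plan is to design a weak PAC learner that exploits the fact, implicit in \Cref{lem:lb-Tal-2}, that every $f\in \FORMULA[s]\circ \XOR$ correlates non-trivially with a single parity under any distribution, and then apply standard boosting to amplify it into a strong learner.

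To make this correlation precise, I would apply \Cref{lem:lb-Tal-2} with $\mathcal{G} = \XOR$, $D = f$, and $\varepsilon_0 = 1/2$. The lemma produces an $h\in \XOR_{O(\sqrt{s})}\circ \XOR$ with $\Exp_{x\sim\mathcal{D}}[h(x)\cdot f(x)] \geq 1/s^{O(\sqrt{s})}$. Since an XOR of parities is itself a parity $\chi_T$, we conclude that for $s = n^{2-\gamma}$ the target $f$ is $\alpha$-correlated with some parity under every distribution, where $\alpha := 1/s^{O(\sqrt{s})} = 2^{-\widetilde{O}(n^{1-\gamma/2})}$.

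The weak learner then invokes the agnostic parity-learning algorithm of Kalai, Mansour, and Verbin \cite{KMV08}, which is based on the Blum--Kalai--Wasserman algorithm for learning parity with noise. Given labeled samples from $f$ under $\mathcal{D}$, it returns a parity whose correlation with $f$ is at least $\alpha/2$ in time $2^{O(n/\log n)}\cdot \mathsf{poly}(1/\alpha)$. Because $\gamma > 0$ is constant, $\log(1/\alpha) = \widetilde{O}(n^{1-\gamma/2}) = o(n/\log n)$, so $\mathsf{poly}(1/\alpha) = 2^{o(n/\log n)}$ and the overall weak-learner runtime is $2^{O(n/\log n)}$. Freund's boosting algorithm \cite{Fre90} then drives this weak learner, invoking it $\mathsf{poly}(1/\alpha, \log(1/\varepsilon), \log(1/\delta))$ times on adaptively reweighted versions of $\mathcal{D}$, yielding a strong PAC learner that achieves accuracy $\varepsilon$ and confidence $\delta$ within the claimed runtime $\mathsf{poly}(2^{n/\log n}, 1/\varepsilon, \log(1/\delta))$.

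The main obstacle will be reconciling the adaptive reweighting performed by Freund's boosting with the parity-learning subroutine, whose cleanest form is stated under the uniform distribution. I would handle this either by appealing to a distribution-independent agnostic parity learner (feasible here because the advantage $\alpha$ we seek is much larger than $2^{-n/\log n}$, leaving comfortable slack in the runtime budget), or by employing a smooth boosting variant that keeps every reweighted distribution close to $\mathcal{D}$, so that the correlation guaranteed by \Cref{lem:lb-Tal-2} transfers uniformly across all boosting rounds and can be found by a parity-learner tuned for distributions of bounded density.
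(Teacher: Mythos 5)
Your proposal is correct and follows essentially the same route as the paper: apply \Cref{lem:lb-Tal-2} with $D=f$ and $\varepsilon_0 = 1/2$ to deduce that every $f\in\FORMULA[n^{2-\gamma}]\circ\XOR$ is $2^{-\widetilde{O}(n^{1-\gamma/2})}$-correlated with some parity under any distribution, use the Kalai--Mansour--Verbin agnostic parity learner as a weak learner, and amplify with Freund's boosting. The concern you raise about distributional requirements is resolved exactly as in your first option: the paper's statement of \Cref{lem:parity-learner} already gives an agnostic parity learner working over an arbitrary distribution $\Delta$, so no smooth-boosting workaround is needed.
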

    
    We first review some useful results that pertain to agnostically learning parities as well as boosting of learning algorithms.
    
    \subsection{Agnostically learning parities and boosting}
    
    For a parameter $n \geq 1$, let $\Delta$ be a distribution on labelled examples $\prn{x,y}$ supported over $\bool^n\times\bool$, and assume that for each $x$ there is at most one $y$ such that $(x,y) \in \mathsf{Support}(\Delta)$. For a function $h \colon \bool^n \to \bool$, we denote by $\err_\Delta\cprn{h}$ the error of $h$ under this distribution:
    \[
        \err_{\Delta}\cprn{h} \;=\; \Prob_{\prn{x,y}\sim \Delta}\csqbra{h\cprn{x}\neq y}.
    \]
    Similarly, for a class of functions $\mathcal{C}$, we let $\opt_\Delta\cprn{\mathcal{C}}$ be the error of the best function in the class:
    \[
        \opt_\Delta\cprn{\mathcal{C}} \;=\; \min_{h\in\mathcal{C}}\;\err_\Delta\cprn{h}.
    \]
    We will need a result established by Kalai, Mansour, and Verbin~\cite{KMV08}, which gives a non-trivial time agnostic learning algorithm for the class of parities.
    
    \begin{lemma}[\cite{KMV08}]\label{lem:parity-learner}
        Let $\mathsf{XOR}$ be the class of parity functions on $n$ variables. Then, for any constant $\zeta > 0$, there is a randomized learning algorithm $W$ such that, for every parameter $n\geq1$ and distribution $\Delta$ over labelled examples, when $W$ is given access to independent samples from $\Delta$ it outputs with high probability a circuit computing a hypothesis $h:\brkts{0,1}^n\to\brkts{0,1}$ such that
        \[
            \err_\Delta\cprn{h} \;\leq\; \opt_\Delta\cprn{\mathsf{XOR}}+2^{-n^{1-\zeta}}.
        \]
        The sample complexity and running time of $W$ is $2^{O\prn{n/\log n}}$.
    \end{lemma}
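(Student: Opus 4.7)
The plan is to invoke the Blum-Kalai-Wasserman (BKW) algorithm for noisy parities under the uniform distribution, and then lift its guarantee to an arbitrary distribution $\Delta$ via a reduction exploiting the group structure of the parity class under XOR.

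First, I would reformulate in $\pm 1$ encoding: finding the parity $\chi_S(x)=\prod_{i\in S} x_i$ minimizing $\err_\Delta$ is equivalent to maximizing the correlation $\mathrm{corr}_\Delta(S) \vcentcolon= \mathbb{E}_{(x,y)\sim\Delta}[\,y\cdot\chi_S(x)\,]$, since $\err_\Delta(\chi_S) = (1-\mathrm{corr}_\Delta(S))/2$ and the best achievable correlation equals $1 - 2\opt_\Delta(\mathsf{XOR})$. Hence it suffices to output $S$ whose correlation is within additive $O(2^{-n^{1-\zeta}})$ of the optimum in time $2^{O(n/\log n)}$.

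Next I would recall BKW: under a uniform $x$-marginal this is precisely the classical noisy-parity problem. BKW partitions the $n$ coordinates into blocks of size $\approx (1-\zeta)\log n$, iteratively XOR-combines carefully chosen batches of samples so that the combined samples' $x$-components vanish on all but one block, and then brute-forces the surviving block after $O(n/\log n)$ reduction rounds. This block-size choice produces exactly the residual $2^{-n^{1-\zeta}}$ advantage required at the final brute-force step, while keeping the overall running time and sample complexity at $2^{O(n/\log n)}$.

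The main obstacle is that BKW's sample-combination step depends crucially on the uniform marginal: the XOR of uniform samples is uniform, so dimension reduction preserves the distributional setting. For arbitrary $\Delta$ this closure fails, and I would address it via the KMV transfer argument, which leverages the fact that parities form an abelian group under XOR. Specifically, one builds a weak agnostic parity learner by running BKW on samples derived from reweightings of $\Delta$, and combines the parities produced across $\textup{poly}(n)$ rounds of an agnostic boosting scheme; since the XOR of any two parity hypotheses is again a parity, the combined hypothesis admits an exact Fourier-analytic correlation bound with respect to $\Delta$, and so the cumulative loss that ordinarily defeats boosting in the agnostic setting can be avoided. Standard independent repetitions then amplify the success probability. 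The hard and technically novel part is precisely this agnostic weak-to-strong step: generic boosting (e.g., AdaBoost) is not known to work agnostically, and the argument must rely essentially on the algebraic closure of the parity class to certify that the final XOR of per-round parities retains the advantage of each round. Combining this reduction with BKW yields the claimed $2^{O(n/\log n)}$ running time and the $2^{-n^{1-\zeta}}$ additive-error guarantee.
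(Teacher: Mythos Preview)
The paper does not prove this lemma; it is quoted from \cite{KMV08} and used as a black box in the proof of \Cref{thm:learn-formula-xor}. So there is nothing in the paper to compare your sketch against, and it must be judged on its own as an outline of the KMV argument.

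Your high-level ingredients are right: the engine is BKW for uniform-marginal noisy parity, and the lift to arbitrary $\Delta$ goes through agnostic boosting. But the mechanism you describe for the lift is off in two places.

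First, you claim the per-round parity hypotheses are combined by XOR, and that closure of the parity class under XOR is what lets agnostic boosting avoid the usual cumulative-loss obstacle. This is not what KMV do. Their agnostic booster aggregates the weak hypotheses by (weighted) majority, and its correctness is a general potential-function / smooth-distribution argument that makes no use of algebraic structure in the hypothesis class; the final output is a majority of parities, not a single parity. XOR-combining would in fact destroy the guarantee: if $\chi_{S_1},\chi_{S_2}$ each correlate with the labels under $\Delta$, there is no reason $\chi_{S_1\triangle S_2}$ does.

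Second, and this is the real gap, you have not explained how BKW can serve as the weak agnostic learner over an arbitrary $\Delta$ (or a boosting reweighting of $\Delta$). BKW's block-elimination step needs the $x$-marginal to be uniform so that every first-block bucket receives enough samples; a reweighting of $\Delta$ does not produce a uniform marginal. The bridge here is a separate reduction (the Feldman--Gopalan--Khot--Ponnuswami reduction from agnostic parity to uniform-distribution LPN), which turns ``find the parity of largest correlation under $\Delta$'' into an LPN instance on which BKW can legitimately run. That reduction is the missing idea; invoking BKW ``on samples derived from reweightings of $\Delta$'' skips it.
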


    Recall that a boosting procedure for learning algorithms transforms a weak learner that outputs a hypothesis that is just weakly correlated with the unknown function into a (strong) PAC learning algorithm for the same class (i.e., a learner in the sense of Definition \ref{def:pac}). We refer for instance to \citep{KV94} for more information about boosting in learning theory. We shall make use of the following boosting result by Freund~\cite{Fre90}.
    
    
    \begin{lemma}[\cite{Fre90}]\label{lem:freund-boosting}
        Let $W$ be a \emph{(}weak\emph{)} learner for a class $\mathcal{C}$ that runs in time $t(n)$ and outputs \emph{(}under any distribution\emph{)} a hypothesis of error up to $1/2-\beta$, for some constructive function $\beta(n) > 0$. Then, there exists a \emph{PAC} learning algorithm for $\mathcal{C}$ that runs in time $\mathsf{poly}(n,t,1/\varepsilon,1/\beta,\log (1/\delta))$.
    \end{lemma}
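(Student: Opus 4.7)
The plan is to implement Freund's \emph{boost-by-majority} scheme. Given the weak learner $W$ that produces, under any distribution over labelled examples, a hypothesis of error at most $1/2-\beta$ in time $t(n)$, I would iterate $W$ for $T = O\!\left(\beta^{-2}\cdot\log(1/\varepsilon)\right)$ rounds. In round $i$, I simulate access to a carefully chosen distribution $\Delta_i$ over labelled examples $(x,y)$, run $W$ on $\Delta_i$ to obtain a weak hypothesis $h_i$ of error at most $1/2 - \beta$ under $\Delta_i$, and eventually output the majority vote $H(x) = \mathrm{Maj}(h_1(x),\dots,h_T(x))$.

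The key ingredient is the choice of $\Delta_i$: I would reweight the underlying example distribution $\Delta$ by a factor depending on the ``margin so far'' of the previous hypotheses. Concretely, for each example $(x,y)$, let $r_i(x,y) = |\{ j < i : h_j(x) = y\}|$ count the correct predictions among $h_1,\dots,h_{i-1}$. The weight of $(x,y)$ under $\Delta_i$ is set proportional to the probability that a sum of $T-i+1$ independent Bernoulli$(1/2+\beta)$ random variables lands in a window that would flip the final majority to the wrong side --- i.e., a binomial tail probability centred so that examples close to the decision boundary receive the most weight. To draw samples from $\Delta_i$ I would use rejection sampling on top of $\Delta$, keeping an example with probability equal to the normalized weight; the expected overhead is $\poly(1/\varepsilon, 1/\beta)$.

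To analyse the error of $H$, I would use a potential-function (drift) argument: define $\Phi_i = \Exp_{(x,y)\sim \Delta}[w_i(x,y)]$, where $w_i$ is the current binomial-tail weight, and show that each round the weak-learning guarantee forces $\Phi_{i+1} \leq \Phi_i$ at a rate controlled by $\beta$. A Chernoff bound on the sum of $T$ Bernoulli$(1/2+\beta)$ variables then yields $\Pr_{(x,y)\sim \Delta}[H(x)\neq y] \leq \exp(-\Omega(\beta^2 T))$, which is $\leq \varepsilon/2$ for the stated choice of $T$. Finally, to convert the estimate into a PAC guarantee with confidence $1-\delta$, I draw $O(\varepsilon^{-2}\log(1/\delta))$ fresh examples to validate $H$, and repeat the entire boosting procedure $O(\log(1/\delta))$ times, keeping the hypothesis whose empirical error is smallest; standard Hoeffding and union-bound arguments promote success probability from $\Omega(1)$ to $1-\delta$ with only a logarithmic overhead.

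The main obstacle will be the sampling efficiency in each round: the binomial-tail weights $w_i$ can be extremely small on a large fraction of examples, so naive rejection sampling might need exponentially many fresh draws from $\Delta$ to produce one example from $\Delta_i$. Controlling this requires bounding the normalization $\Phi_i$ from below by $\poly(\varepsilon,\beta)$ throughout the process, which is exactly what Freund's potential-function analysis delivers; the resulting per-round sample cost is $\poly(n, t(n), 1/\varepsilon, 1/\beta)$, and multiplying by $T = \poly(1/\beta, \log(1/\varepsilon))$ rounds and the $O(\log(1/\delta))$ confidence boosting yields the total running time $\poly(n, t, 1/\varepsilon, 1/\beta, \log(1/\delta))$ claimed by the lemma.
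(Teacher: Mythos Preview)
The paper does not prove this lemma at all: it is stated as a black-box citation to Freund~\cite{Fre90} and invoked directly in the proof of \Cref{thm:learn-formula-xor}. So there is no ``paper's own proof'' to compare against; you are effectively sketching Freund's original boost-by-majority argument, which is the right thing to do and your outline follows it faithfully.

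One small correction to your sketch: the way the sampling-efficiency obstacle is handled in Freund's analysis is not by showing $\Phi_i \geq \poly(\varepsilon,\beta)$ throughout, but rather by observing that once the total weight $\Phi_i$ drops below (roughly) $\varepsilon$, the majority of the hypotheses collected so far already has error at most $\varepsilon$, so one may terminate early. Thus the algorithm tests after each round whether $\Phi_i$ is small (or, equivalently, estimates the current majority's error on fresh samples) and stops if so; only while $\Phi_i$ is still $\geq \poly(\varepsilon)$ does rejection sampling need to be efficient, and in that regime the expected number of draws per accepted sample is indeed $\poly(1/\varepsilon,1/\beta)$. With that adjustment your argument goes through and matches the cited result.
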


    \subsection{PAC-learning small formulas of parities}
    
    
    We are ready to show that sub-quadratic size formulas over parity functions can be learned in time $2^{O(n/\log n)}$. First, we argue that Lemma \ref{lem:parity-learner} provides a weak learner that works under any distribution $\mathcal{D}$ supported over $\{0,1\}^n$. This will follow from Lemma \ref{lem:lb-Tal-2}, which shows that any function in $\FORMULA[s] \circ \mathsf{XOR}$ is correlated with some parity function with respect to $\mathcal{D}$. We then obtain a standard PAC learner via the boosting procedure from Lemma \ref{lem:freund-boosting}.
    
    \begin{proof}[Proof of \Cref{thm:learn-formula-xor}]
        Let $\mathcal{C} = \mathsf{FORMULA} \circ \mathsf{XOR}$, where $s = n^{2 - \gamma}$ for some constant $\gamma > 0$. For any function $f \in \mathsf{FORMULA}[s] \circ \mathsf{XOR}$ and distribution $\mathcal{D}$ supported over $\{0,1\}^n$, Lemma \ref{lem:lb-Tal-2} shows that there exists a parity function $\chi = \chi(f,\mathcal{D})$ such that 
        $$
            \Pr_{x \sim \mathcal{D}}[f(x) = \chi(x)] \;\geq\;\frac{1}{2} + \frac{1}{2^{n^{1 - \lambda}}},
        $$
        for some $\lambda = \lambda(\gamma) > 0$ independent of $n$, under the assumption that $n$ is sufficiently large. Let $\Delta = \Delta(\mathcal{D},f)$ be the distribution over labelled examples induced by $\mathcal{D}$ and $f$. Note that $\opt_\Delta\cprn{\mathsf{XOR}} \leq 1/2 - \exp{\!(n^{1 - \lambda})}$. Consequently, by invoking Lemma \ref{lem:parity-learner} with parameter $\zeta = \lambda$, it follows that $\FORMULA[n^{2 - \gamma}] \circ \mathsf{XOR}$ can be learned under an arbitrary distribution to error $\beta(n) \leq 1/2 - \exp(n^{1 - \Omega(1)})$ in time $t(n) = 2^{O(n/\log n)}$. Consequently, we can obtain a PAC learner algorithm for $\FORMULA[n^{2 - \gamma}] \circ \mathsf{XOR}$ via Lemma \ref{lem:freund-boosting} that runs in time $\mathsf{poly}(n,t(n),1/\varepsilon,1/\beta,\log(1/\delta)) = \mathsf{poly}(2^{n/\log n}, 1/\varepsilon, \log(1/\delta))$.
    \end{proof}

	\section*{Acknowledgements}
	
	We would like to thank Rocco Servedio for bringing to our attention the work by Kalai, Mansour, and Verbin~\citep{KMV08}, which is a central ingredient in the proof of Theorem \ref{thm:main_learning}. We also thank Mahdi Cheraghchi for several discussions on the analysis of Boolean circuits with a bottom layer of parity gates.
	
	This work was funded in part by a Royal Society University Research Fellowship (URF$\setminus$R1$\setminus$191059).
	
	\bibliographystyle{alpha}
    \bibliography{main}
    
    \appendix

    \section{Proofs of useful lemmas}
    
    \subsection{Useful lemmas for formulas}\label{subsec:formula-lemmas}
    
    The proofs in this section are essentially the same as that of~\cite{Tal16}.
    
    \begin{lemma}[\cite{Tal16}, \Cref{lem:lb-Tal-1} restated]\label{lem:lb-Tal-1-app}
	    Let $\mathcal{D}$ be a distribution over $\boolpm^n$, and let $f,C\colon\boolpm^n\to\boolpm$ be such that
	    \[
	        \Prob_{x\sim\mathcal{D}}[C(x)=f(x)]\geq 1/2+\varepsilon.
	    \] 
	    Let $\tilde{C}\colon\boolpm^n\to\real$ be a $\varepsilon$-approximating function of $C$, i.e., for every $x\in\boolpm^n$, $|C(x)-\tilde{C}(x)|\leq \varepsilon$. Then,
	    \[
	        \Exp_{x\sim\mathcal{D}}[\tilde{C}(x)\cdot f(x)]\geq \varepsilon.
	    \]
	\end{lemma}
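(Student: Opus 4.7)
The plan is a short two-step calculation that exploits the $\pm 1$-valuedness of $C$ and $f$, together with the pointwise approximation bound.

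First, I would observe that since both $C$ and $f$ take values in $\{-1,1\}$, the product $C(x) \cdot f(x) \in \{-1,1\}$ is $+1$ exactly when $C(x) = f(x)$ and $-1$ otherwise. Hence
\[
    \Exp_{x \sim \mathcal{D}}\!\left[C(x) \cdot f(x)\right] \;=\; \Pr_{x \sim \mathcal{D}}[C(x) = f(x)] - \Pr_{x \sim \mathcal{D}}[C(x) \neq f(x)] \;\geq\; 2\varepsilon,
\]
using the hypothesis that $\Pr_{x \sim \mathcal{D}}[C(x) = f(x)] \geq 1/2 + \varepsilon$.

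Second, I would use the pointwise approximation property. For every $x \in \{-1,1\}^n$, since $|f(x)| = 1$,
\[
    \bigl|\tilde{C}(x) \cdot f(x) - C(x) \cdot f(x)\bigr| \;=\; |f(x)| \cdot \bigl|\tilde{C}(x) - C(x)\bigr| \;\leq\; \varepsilon.
\]
Taking expectations over $x \sim \mathcal{D}$ and applying the triangle inequality yields
\[
    \Exp_{x \sim \mathcal{D}}\!\left[\tilde{C}(x) \cdot f(x)\right] \;\geq\; \Exp_{x \sim \mathcal{D}}\!\left[C(x) \cdot f(x)\right] - \varepsilon \;\geq\; 2\varepsilon - \varepsilon \;=\; \varepsilon,
\]
which is exactly the conclusion. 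There is no real obstacle here; the whole argument is essentially one line once one notices that Boolean-valued correlation converts directly into a signed expectation, and that the pointwise error $\varepsilon$ of $\tilde{C}$ preserves the expectation up to an additive $\varepsilon$ because $f$ has unit magnitude.
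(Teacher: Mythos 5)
Your proof is correct and is essentially the same argument as in the paper: the paper conditions on the event $C(x)=f(x)$ and uses the pointwise bounds $\tilde{C}(x)\cdot C(x)\ge 1-\varepsilon$ and $\tilde{C}(x)\cdot(-C(x))\ge -1-\varepsilon$ in each case, which after expanding gives exactly the computation $2\Pr[C=f]-1-\varepsilon\ge\varepsilon$ that you obtain by first computing $\Exp[Cf]\ge 2\varepsilon$ and then moving from $C$ to $\tilde C$ losing at most $\varepsilon$. Your reorganization is if anything a touch cleaner, but there is no substantive difference.
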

	
	\begin{proof}
		Note that since $\tilde{C}$ $\varepsilon$-approximate $C$, we have for every $x\in\boolpm^n$
		\[
			\tilde{C}\cdot C(x)\geq 1-\varepsilon,
		\]
		and 
		\[
			\tilde{C}\cdot (1-C(x))\geq -1-\varepsilon.
		\]
		Then,
		\begin{align*}
			\Exp_{x\sim\mathcal{D}}[\tilde{C}(x)\cdot f(x)]&=\Exp_{x\sim\mathcal{D}}[\tilde{C}(x)\cdot f(x)\given C(x)=f(x)]\cdot\Prob_{x\sim\mathcal{D}}[C(x)=f(x)]\\
			&\qquad+\Exp_{x\sim\mathcal{D}}[\tilde{C}(x)\cdot f(x)\given C(x)\ne f(x)]\cdot\Prob_{x\sim\mathcal{D}}[C(x)\ne f(x)]\\
			&\geq(1-\varepsilon)\cdot \Prob_{x\sim\mathcal{D}}[C(x)=f(x)] + (-1-\varepsilon)\cdot \left(1-\Prob_{x\sim\mathcal{D}}[C(x)=f(x)]\right)\\
			&=2\cdot \Prob_{x\sim\mathcal{D}}[C(x)=f(x)] - 1 -\varepsilon\\
			&\geq 2\cdot (1/2+\varepsilon) - 1 - \varepsilon \geq \varepsilon,
		\end{align*}
		as desired.
	\end{proof}
	
	\begin{lemma}[\cite{Tal16}, \Cref{lem:lb-Tal-2} restated]\label{lem:lb-Tal-2-app}
	    Let $\mathcal{D}$ be a distribution over $\boolpm^n$ and let $\mathcal{G}$ be a class of functions. For $f\colon\boolpm^n\to\boolpm$, suppose that $D\colon\boolpm^n\to\boolpm\in \FORMULA[s]\circ\mathcal{G}$ is such that 
	    \[
	        \Prob_{x\sim\mathcal{D}}[D(x)=f(x)]\geq 1/2+\varepsilon_0.
	    \]
	    Then there exists some $h\colon\boolpm^n\to\boolpm\in \XOR_{O\left(\sqrt{s}\cdot\log(1/\varepsilon_0)\right)}\circ \mathcal{G}$ such that
	    \[
	        \Exp_{x\sim\mathcal{D}}[h(x)\cdot f(x)]\geq \frac{1}{s^{O\left(\sqrt{s}\cdot\log(1/\varepsilon_0)\right)}}.
	    \]
	\end{lemma}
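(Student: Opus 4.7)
The plan is to combine the pointwise polynomial approximation for de Morgan formulas (Theorem~\ref{thm:approx-polynomial}) with Lemma~\ref{lem:lb-Tal-1-app}, reducing the correlation statement to an averaging argument over the monomials in the Fourier expansion of the approximating polynomial.

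First, write $D = F(g_1, \ldots, g_s)$, where $F$ is a size-$s$ de Morgan formula and each $g_i \in \mathcal{G}$. Apply Theorem~\ref{thm:approx-polynomial} to $F$ with approximation error $\varepsilon_0$ to obtain a polynomial $p \colon \boolpm^s \to \real$ of degree $d = O(\sqrt{s} \cdot \log(1/\varepsilon_0))$ such that $|p(z) - F(z)| \leq \varepsilon_0$ for every $z \in \boolpm^s$. Composing with the leaves, define $\tilde{D}(x) := p(g_1(x), \ldots, g_s(x))$. Since $F(g_1(x), \ldots, g_s(x)) = D(x)$, we have $|\tilde{D}(x) - D(x)| \leq \varepsilon_0$ pointwise, so Lemma~\ref{lem:lb-Tal-1-app} applied to $\tilde{D}$ yields
\[
    \Exp_{x \sim \mathcal{D}}[\tilde{D}(x) \cdot f(x)] \;\geq\; \varepsilon_0.
\]

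Next, expand $p$ in the $\boolpm$ Fourier basis as $p(z) = \sum_{S \subseteq [s],\, |S| \leq d} \hat{p}(S) \prod_{i \in S} z_i$. The number of terms is at most $N := \sum_{k=0}^{d} \binom{s}{k} \leq s^{O(d)}$, and since $|p(z)| \leq 1 + \varepsilon_0 \leq 2$ on every input, each coefficient satisfies $|\hat{p}(S)| = \left|\Exp_{z}\!\left[p(z) \cdot \prod_{i \in S} z_i\right]\right| \leq 2$. Substituting and using linearity of expectation,
\[
    \sum_{|S| \leq d} \hat{p}(S) \cdot \Exp_{x \sim \mathcal{D}}\!\left[ f(x) \cdot \prod_{i \in S} g_i(x) \right] \;\geq\; \varepsilon_0,
\]
so by averaging there exists some $S^{\star}$ with
\[
    \left|\Exp_{x \sim \mathcal{D}}\!\left[f(x) \cdot \prod_{i \in S^{\star}} g_i(x)\right]\right| \;\geq\; \frac{\varepsilon_0}{2 N} \;\geq\; \frac{1}{s^{O(\sqrt{s} \cdot \log(1/\varepsilon_0))}}.
\]

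Finally, set $h(x) := \sigma \cdot \prod_{i \in S^{\star}} g_i(x)$, where $\sigma \in \boolpm$ is the sign that makes the expectation positive. Over the $\boolpm$ basis, the product of Boolean-valued functions is exactly their parity, so $h \in \XOR_{d} \circ \mathcal{G}$ with $d = O(\sqrt{s} \cdot \log(1/\varepsilon_0))$, and $\Exp_{x \sim \mathcal{D}}[h(x) \cdot f(x)]$ meets the required lower bound. The argument is a standard application of the approximating-polynomial framework of Reichardt, so I do not anticipate substantial obstacles; the only point to keep track of is the simultaneous bound $|\hat{p}(S)| \leq 2$ and the monomial count $N \leq s^{O(d)}$, from which the stated $1/s^{O(\sqrt{s} \cdot \log(1/\varepsilon_0))}$ lower bound drops out cleanly.
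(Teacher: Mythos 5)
Your proof is correct and follows essentially the same route as the paper's: decompose $D=F(g_1,\dots,g_s)$, replace $F$ by a Reichardt $\varepsilon_0$-approximating polynomial of degree $O(\sqrt{s}\log(1/\varepsilon_0))$, apply \Cref{lem:lb-Tal-1-app} to $\tilde D = p(g_1,\dots,g_s)$, expand in the multilinear basis with $|\hat p(S)|=O(1)$ and $s^{O(d)}$ monomials, and average to extract a single product $\prod_{i\in S^\star} g_i$ (i.e.\ an XOR of $\leq d$ leaf functions, after fixing a sign). In fact your write-up fixes a small slip in the paper's version, which applies \Cref{lem:lb-Tal-1-app} but then writes $\Exp[D(x)f(x)]$ where $\Exp[\tilde D(x)f(x)]$ is meant; your explicit handling of $\tilde D$ is the intended argument.
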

	
    \begin{proof}
		Let 
		\[
			D = F(g_1,g_2\dots,g_s)
		\]
		be a device in $\FORMULA\circ\mathcal{G}$ where $F$ is a formula and $g_1,g_2,\dots,g_s$ are function from $\mathcal{G}$.

		Let $p\colon\boolpm^s\to\real$ be a $\varepsilon_0$-approximating polynomial for $F$ of degree $d=O(\sqrt{s}\cdot\log(1/\varepsilon_0))$. Note that we can write
		\[
			p(z)=\sum_{\substack{S\subseteq[s]: \\ |S|\leq d}}\hat{p}(S)\cdot\prod_{i\in S}z_i.
		\]
		Also, for each $S\subseteq[s]$, we have
		\[
			|\hat{p}(S)|=\left| \Exp_{z\in\boolpm^s}[p(z)\cdot\prod_{i\in S}z_i] \right| \leq  1+\varepsilon_0.
		\]
		Now let
		\[
			\tilde{D} \vcentcolon= p(g_1,g_2\dots,g_s).
		\]
		Note that $\tilde{D}$ is a $\varepsilon_0$-approximating function for $D$. Therefore, by \Cref{lem:lb-Tal-1-app}, we have
		\begin{align*}
			\varepsilon_0 &\leq \Exp_{x\sim\mathcal{D}}[D(x)\cdot f(x)]\\
			&= \Exp_{x\sim\mathcal{D}}\left[\left(\sum_{\substack{S\subseteq[s]: \\ |S|\leq d}}\hat{p}(S)\cdot\prod_{i\in S}g_i \right) \cdot f(x)\right]\\
			&=\sum_{\substack{S\subseteq[s]: \\ |S|\leq d}}\hat{p}(S)\cdot \Exp_{x\sim\mathcal{D}}\left[\prod_{i\in S}g_i\cdot f(x)\right]\\
			&\leq \sum_{\substack{S\subseteq[s]: \\ |S|\leq d}} (1+\varepsilon_0) \cdot \left| \Exp_{x\sim\mathcal{D}}\left[\prod_{i\in S}g_i\cdot f(x)\right] \right|.
		\end{align*}
		The above equation is the sum of at most $s^{O(d)}$ summands. Therefore, there exists some $S\subseteq[s]$ such that
		\[
			\left|\Exp_{x\sim\mathcal{D}}\left[\prod_{i\in S}g_i\cdot f(x)\right]\right| \geq \frac{\varepsilon_0}{(1+\varepsilon_0)\cdot s^{O(d)}}\geq \frac{1}{s^{O\left(\sqrt{s}\cdot\log(1/\varepsilon_0)\right)}},
		\]
		which implies that there exists some $h$, such that either $h=\prod_{i\in S}g_i$ or $h=-\prod_{i\in S}g_i$, and
		\[
			\Exp_{x\sim\mathcal{D}}\left[h(x)\cdot f(x)\right] \geq \frac{1}{s^{O\left(\sqrt{s}\cdot\log(1/\varepsilon_0)\right)}}.
		\]
		Finally, note that such $h$ can be expressed as the $\XOR$ of at most $d$ functions from $\mathcal{G}$.
    \end{proof}
    
    \subsection{PRG for low-communication functions in the number-in-hand setting}\label{appendix:prg_cc}
	
	In this subsection, we show how to fool functions with low communication complexity in the number-in-hand model.
    
	\begin{theorem}[\cite{ASWZ96,INW94}, \Cref{thm:prg-lc} restated]\label{lem:prg-lc-app}
		For any $k\geq 2$, there exists a PRG that $\delta$-fools any $n$-bits functions with $k$-party number-in-hand deterministic communication complexity at most $D'$, with seed length
		\[
			n/k+O\left(D' + \log(1/\delta)+\log(k)\right)\cdot\log(k).
		\]
	\end{theorem}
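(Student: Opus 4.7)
The plan is to construct the generator via the Impagliazzo--Nisan--Wigderson tree paradigm, instantiated with explicit strong seeded extractors: first a 2-party base case built from a single extractor, then a recursive lift to $k$ parties via a balanced binary tree of depth $\log k$.

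For the 2-party base case, let $n = 2m$ and let $E\colon \bool^m \times \bool^d \to \bool^m$ be an explicit strong $(k^{\ast}, \delta')$-extractor with min-entropy threshold $k^{\ast} = m - O(D + \log(1/\delta))$, seed length $d = O(D + \log(1/\delta))$, and error $\delta' = \delta/2^{D+2}$; such extractors follow from standard constructions. The generator samples $X \in \bool^m$ and $S \in \bool^d$ uniformly and outputs $(X, E(X, S))$, for total seed length $n/2 + O(D + \log(1/\delta))$. To verify that it $\delta$-fools any $f$ of 2-party deterministic cost $D$, I would partition $\bool^n$ into the at most $2^D$ combinatorial rectangles $A_t \times B_t$ induced by the transcripts of an optimal protocol for $f$. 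Call a rectangle \emph{small} if $|A_t| < 2^{k^{\ast}}$ and \emph{large} otherwise. The uniform mass of small rectangles is at most $2^D \cdot 2^{k^{\ast} - m} = O(\delta)$, and likewise under the PRG since its $X$-marginal is uniform. For each large rectangle, $X$ conditioned on $X \in A_t$ has min-entropy at least $k^{\ast}$, so the strong extractor guarantees that $\Pr[E(X, S) \in B_t \mid X \in A_t]$ differs from $|B_t|/2^m$ by at most $\delta'$; summing over the at most $2^D$ large rectangles contributes an additional error of at most $2^D \cdot \delta' = O(\delta)$.

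To lift the base case to $k$ parties, I would use a Nisan-style generator over a balanced binary tree of depth $\log k$. The seed consists of one base block $X \in \bool^{n/k}$ together with $\log k$ independent extractor seeds $S_1, \ldots, S_{\log k}$, each of length $O(D' + \log(1/\delta) + \log k)$, instantiating the 2-party base case at each level with an appropriate error parameter. The recursion is $G_1(X) = X$ and, writing $i = \log(2j)$, $G_{2j}(X, S_1, \ldots, S_i) = \bigl( G_j(X, S_1, \ldots, S_{i-1}),\, G_j(\mathrm{Ext}_i(X, S_i), S_1, \ldots, S_{i-1}) \bigr)$, where $\mathrm{Ext}_i\colon \bool^m \times \bool^{|S_i|} \to \bool^m$ is the level-$i$ extractor; the total seed length is $n/k + O((D' + \log(1/\delta) + \log k)\log k)$ as claimed. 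For the analysis I would run a hybrid argument that peels off one tree level at a time: at level $i$, any $k$-party NIH protocol of cost $D'$ can be simulated, with respect to the bipartition induced by the level-$i$ cut, by a 2-party protocol of cost at most $D'$, since one meta-player can aggregate all bits of the first subtree while the other aggregates the rest. The $i$-th hybrid step replaces the extractor-correlated layer at level $i$ by its uniform counterpart and loses at most the 2-party base-case error applied at that level; setting the per-level base-case error to $\delta/\log k$ and union-bounding over the $\log k$ hybrid steps yields total error $\delta$.

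The main obstacle is the min-entropy bookkeeping inside the recursion: when an extractor gadget is applied at an internal level of the tree, its ``$X$'' input is not a fresh uniform string but the output of a sub-generator, so one must verify inductively that the conditional min-entropy of this input, given the transcripts consumed so far by the communication across the relevant hybrid cut, still meets the threshold $k^{\ast}$ required by the extractor. This argument closes because the total transcript across any hybrid cut has length at most $D'$ and each level spends only $O(D' + \log(1/\delta) + \log k)$ seed bits, so the entropy deficit tolerated at the extractor remains within the slack built into $k^{\ast}$. Once this bookkeeping is set up cleanly, the two-stage construction yields the claimed seed length.
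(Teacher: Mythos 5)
Your construction is close in spirit to the paper's but differs in a way that turns out to matter. You build a Nisan-style tree where the extractor at level $i$ acts only on a fixed-size base block $X\in\bool^{n/k}$, and the lower-level seeds $S_1,\dots,S_{i-1}$ are \emph{shared} verbatim between the two halves of each recursive split. The paper's construction is the INW-style variant: the extractor at each level acts on the \emph{entire accumulated prefix} $a\in\bool^{r_{i-1}}$ (which contains $X$ together with all previously used extractor seeds), and its output $\mathrm{Ext}_{i-1}(a,z)\in\bool^{r_{i-1}}$ serves as a fresh, fully re-randomized seed for the second half. The two halves therefore share no explicit seed, and the only correlation across the cut is through one extractor application.

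This difference is exactly where your ``min-entropy bookkeeping'' obstacle becomes genuine rather than a formality, and your sketch does not close it. At a hybrid step, the event you condition on is a function of \emph{both} $X$ and $S_{<i}$; it is not a $D'$-bit transcript, and a mass-$\ge\delta'$ event lower-bounds only the joint min-entropy of $(X,S_{<i})$ by $r_{i-1}-\log(1/\delta')$, not the marginal min-entropy of $X$ alone, which is what your extractor (with source domain $\bool^{n/k}$) needs. In the worst case the event could pin down $S_{<i}$ and eat correspondingly into the entropy of $X$, and your slack accounting ($O(D'+\log(1/\delta)+\log k)$ per level) does not immediately rule this out, especially when $n/k$ is not much larger than the cumulative seed length $\sum_{j<i}|S_j|$. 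On top of this, once you replace the level-$i$ extractor output by a fresh uniform string, the two halves remain correlated through the shared $S_{<i}$, so your hybrid sequence does not cleanly decompose into independent recursive instances the way the paper's does; some additional averaging over $S_{<i}$ or a smooth/average min-entropy argument is needed, and you have not set that up. The paper sidesteps all of this: because the extractor's source \emph{is} the entire conditioning object $a$, the drop $\log(1/\delta')$ applies directly to the extractor's input, and because the two halves are seeded by $a$ and $\mathrm{Ext}_{i-1}(a,z)$ respectively rather than by a shared suffix, the induction on both halves goes through with no shared-state bookkeeping. One further structural difference: the paper first expands any cost-$D'$ NIH function as a sum of $2^{D'}$ rank-one products $h_1(x_1)\cdots h_k(x_k)$ and runs the hybrid on each product term, rather than running the hybrid on $f$ directly and invoking a $2$-party rectangle decomposition at every level as you do; your order of operations is also viable, but the paper's makes the induction cleaner. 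To repair your argument you would either need to switch to the paper's extractor-on-accumulated-prefix construction, or supply a genuine average-conditional-min-entropy argument that accounts for the shared seeds.
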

	
	The PRG in \Cref{thm:prg-lc} is based on the PRG by Impagliazzo, Nisan and Wigderson~\cite{INW94} that is used to derandomize ``network algorithms" and space-bounded computation. We will need to use randomness extractors, which we review below.
	
	\begin{definition}[Min-entropy]
    	Let $X$ be a random variable. The \emph{min-entropy} of $X$, denoted by $H_\infty(X)$, is the largest real number $k$ such that $\Prob\csqbra{X=x}\leq2^{-k}$ for every $x$ in the range of $X$. If $X$ is a distribution over $\boolpm^\EI$ with $H_\infty\cprn{X}\geq k$, then $X$ is called a {\em$\prn{\EI,k}$-source}.
    \end{definition}
    
    \begin{definition}[Extractors]
    	A function $\mathrm{Ext}\colon\boolpm^\EI\times\boolpm^d\to\boolpm^m$ is an \emph{$(k,\varepsilon)$-extractor} if, for any $\prn{\EI,k}$-source $X$, and any test $T\colon\boolpm^m\to\boolpm$, it is the case that
    	\[
    	    \left|\Prob[T(\mathrm{Ext}\cprn{X,\uniform_d}X)=1]-\Prob[T(\uniform_m)=1]\right|\leq \varepsilon.
    	\]
    \end{definition}
	
    \begin{theorem}[{\cite[Theorem 6.22]{Vad12}}]\label{thm:prg-ext}
		For any integer $m,\kappa>0$ and $0<\delta'<0$, there exists an explicit $(\kappa,\delta')$ extractor $\mathrm{Ext}\colon\bool^{m}\times\bool^d\to\bool^{m}$ with $d=O(m-k+\log(1/\delta'))$.
	\end{theorem}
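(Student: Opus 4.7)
The plan is to derive the theorem from a known explicit strong seeded extractor by padding the output with fresh random bits from the seed. Throughout, write $n = m$ for the source length, $\kappa$ for the entropy floor, and $\delta'$ for the error. The idea is that the ``extra'' $m-\kappa$ bits of entropy that we are short can simply be taken directly from the seed and concatenated with the output of a standard extractor.

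First I would fix an explicit \emph{strong} $(\kappa,\delta'/2)$-extractor $E_0\colon \{0,1\}^m \times \{0,1\}^{d_0} \to \{0,1\}^{\ell}$ with nearly optimal parameters $\ell = \kappa - O(\log(1/\delta'))$ and seed length $d_0 = O(\log(m/\delta'))$; such an object is given by the Guruswami--Umans--Vadhan construction (or, in simpler regimes, by Trevisan's extractor or by leftover hashing, since leftover hashing alone already gives a strong extractor with $d_0 = m$ but with shorter output). Next, define
\[
    \mathrm{Ext}\bigl(x,(s_0,s_1)\bigr) \;:=\; E_0(x,s_0) \,\|\, s_1,
\]
where $s_1 \in \{0,1\}^{m-\ell}$ is a block of fresh uniform ``padding'' bits taken from the seed. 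The output has length $\ell + (m-\ell) = m$ as required, and the total seed length is $d = d_0 + (m - \ell) = (m-\kappa) + O(\log(m/\delta'))$.

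The correctness analysis is short: by the strong extractor property, for every $(m,\kappa)$-source $X$,
\[
    \bigl(s_0, E_0(X,s_0)\bigr) \;\approx_{\delta'/2}\; (U_{d_0},U_{\ell}).
\]
Since the padding block $s_1$ is independent of $(X,s_0)$ and uniform, concatenating $s_1$ on both sides preserves statistical distance, so $(s_0,E_0(X,s_0),s_1)$ is $(\delta'/2)$-close to $U_{d_0+m}$. Projecting onto the last $m$ coordinates yields $\mathrm{Ext}(X,U_d)$ which is therefore $\delta'$-close to $U_m$, by the data-processing inequality for statistical distance. Explicitness of $\mathrm{Ext}$ is inherited from explicitness of $E_0$ and the trivial nature of the padding step.

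The step I expect to require the most care is absorbing the $\log m$ overhead so that the final seed length matches the claimed $d = O(m-\kappa+\log(1/\delta'))$ rather than the slightly weaker $O(m-\kappa + \log(m/\delta'))$. In the low-min-entropy regime $\kappa \leq m/2$, this is automatic since $\log m = O(m-\kappa)$. In the high-min-entropy regime, $\log m$ could in principle dominate; this is handled by choosing $E_0$ from a family of extractors specialized to high min-entropy (e.g.\ those based on small-bias sets or on the Nisan--Wigderson paradigm with a high-entropy analysis), or by splitting $X$ into blocks and recursing, which shaves the $\log m$ factor down to $\log(1/\delta')$. Assembling these choices to cover all parameter regimes is the only real technical content beyond the plug-and-play construction above.
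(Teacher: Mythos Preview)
The paper does not prove this statement: it is quoted as a black box from \cite[Theorem~6.22]{Vad12} and invoked directly inside the proof of \Cref{lem:prg-lc-app}. There is therefore no in-paper argument to compare your proposal against.

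That said, your construction is the standard route to an extractor whose output length equals its source length, and the correctness argument is fine. (In fact you do not even need $E_0$ to be strong: since $s_1$ is independent of $(X,s_0)$, weak extraction already gives $(E_0(X,s_0),s_1)\approx_{\delta'/2} U_m$.) The one place where your sketch is loose is the claim $\ell = \kappa - O(\log(1/\delta'))$ for $E_0$: the basic GUV statement only yields $\ell = (1-\alpha)\kappa$ for constant $\alpha>0$, which would inflate the padding block by an unwanted $\Theta(\kappa)$ term. Getting $\ell$ all the way up to $\kappa - O(\log(1/\delta'))$ with polylogarithmic seed requires the full machinery behind Vadhan's Theorem~6.22 (lossless condensers plus block-source extraction), so in practice one simply cites that result rather than re-deriving it from GUV plus padding. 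Your final paragraph correctly anticipates that this is where the real work hides.
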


    We are now ready to show \Cref{thm:prg-lc}.
    
	\begin{proof}[Proof of \Cref{thm:prg-lc}]\renewcommand{\qedsymbol}{$\square$ (\Cref{thm:prg-lc})}
	    We first describe the construction of the PRG. In fact, we will construct a sequence of PRGs $G_0,G_1,\dots,G_{\log(k)}$. We begin by specifying the parameters of these PRGs.
	    Let $t=\log(k)$, and let
	    \[
	    	d=O\left(D'+\log(1/\delta)+t\right).
	    \]
	    For $i=0,1,\dots,t$, let
	    \begin{itemize}
	        \item $r_0=n/k$,
	        \item $r_i=r_{i-1}+d$.
	    \end{itemize}
	    Note that we have $r_i=n/k+i\cdot d$. Also, let
	    \[
	    	\mathrm{Ext}_i\colon\bool^{r_i}\times\bool^d\to\bool^{r_i}
	    \]
	    be a $(\kappa_i,\delta')$-extractor from \Cref{thm:prg-ext}, where
	    \[
	    	\kappa_i=r_i-D'-2t - \log(1/\delta)
	    \] 
	    and
	    \[
	        \delta'=\delta/\left(3^t\cdot 2^{D'}\right).
	    \]
	    Note that the seed length of the extractors is $d=O\left(D'+\log(1/\delta)+t\right)$. Finally, define $G_i\colon\bool^{r_i}\to\bool^{n/2^{t-i}}$ recursively as follows
	    \begin{itemize}
	    	\item $G_{0}(a)=a$, where $a\in\bool^{n/k}$.
	    	\item $G_i(a,z)=G_{i-1}(a)\circ G_{i-1}(\mathrm{Ext}_{i-1}(a,z))$, where $a\in\bool^{r_{i-1}}$ and $z\in\bool^d$.
	    \end{itemize}
	    We will show that $G_t\colon\bool^{r_t=n/k+t\cdot d}\to\bool^n$ fools any functions $f$ with $k$-party number-in-hand deterministic communication complexity at most $D'$.
	    First, note that such $f$ can be written as
	    \[
	    	f(x_1,x_2,\dots,x_k)=\sum_{i=1}^{2^{D'}} h^{(i)}_{1}(x_1)\cdot h^{(i)}_{2}(x_2)\cdot\ldots\cdot h^{(i)}_{k}(x_k),
	    \]
	    for some $h^{(i)}_{j}\colon\bool^{n/k}\to \bool$ ($i\in\left[2^{D'}\right], j\in[k]$).
	    Therefore, to show that the PRG $G_t$ $\delta$-fool $f$, it suffices to show that $G_t$ $\left(\delta/2^{D'}\right)$-fools every function $g$ of the form
	    \[
	    	g(x_1,x_2,\dots,x_k)=h_1(x_1)\cdot h_2(x_2)\cdot\ldots\cdot h_k(x_k).
	    \]
	    More specifically we show the following.
	    
	    \begin{claim}\label{lem:prg-lc-claim-1}
	    	For every $k \geq 2$ and $  0\leq i \leq t$, the generator $G_i$ defined above $\left(3^{i}\cdot \delta'\right)$-fools every function $g_i\colon\bool^{n/2^{t-i}}\to\bool$ of the form
	    	\[
	    		g_i(x_1,x_2,\dots,x_{k/2^{t-i}})=h_1(x_1)\cdot h_2(x_2)\cdot\ldots\cdot h_{k/2^{t-i}}(x_{k/2^{t-i}}),
	    	\]
	    	where $x_1,x_2,\dots,x_{k/2^{t-i}}\in\bool^{n/k}$.
	    \end{claim}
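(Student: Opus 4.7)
The plan is to prove Claim~\ref{lem:prg-lc-claim-1} by induction on $i$, following the standard analysis of the INW generator. The base case $i=0$ is immediate: $G_0$ is the identity on $\bool^{n/k}$ and $g_0(x_1) = h_1(x_1)$ depends on a single block, so $\mathbb{E}[g_0(G_0(a))] = \mathbb{E}[h_1(U_{n/k})]$ exactly, and $0 \leq 3^0 \cdot \delta'$.

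For the inductive step, I will split $g_i$ into a left-right product. Let $m = k/2^{t-i}$ and set
\[
    L(y_1) \;=\; h_1(x_1) \cdots h_{m/2}(x_{m/2}), \qquad R(y_2) \;=\; h_{m/2+1}(x_{m/2+1}) \cdots h_m(x_m),
\]
so that $g_i = L \cdot R$. Both $L$ and $R$ are products of $k/2^{t-i+1}$ block-functions on $n/2^{t-i+1}$ bits, hence the inductive hypothesis gives that $G_{i-1}$ $(3^{i-1}\delta')$-fools each of them. Unfolding the recursive definition,
\[
    \mathbb{E}\bigl[g_i(G_i(a,z))\bigr] \;=\; \mathbb{E}_{a,z}\!\left[L\bigl(G_{i-1}(a)\bigr) \cdot R\bigl(G_{i-1}(\mathrm{Ext}_{i-1}(a,z))\bigr)\right],
\]
so the goal reduces to comparing this to $\mathbb{E}_{a_1}[L(G_{i-1}(a_1))] \cdot \mathbb{E}_{a_2}[R(G_{i-1}(a_2))]$ and then to $\mathbb{E}[L(U)] \cdot \mathbb{E}[R(U)]$.

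The key step is the extractor argument, which crucially uses that $L \circ G_{i-1}$ is $\{0,1\}$-valued (since each $h_j$ is). Consider the set $A := \{a \in \bool^{r_{i-1}} : L(G_{i-1}(a)) = 1\}$. In the \emph{large} case $|A| \geq 2^{\kappa_{i-1}}$, the uniform distribution on $A$ is an $(r_{i-1},\kappa_{i-1})$-source, so applying the extractor $\mathrm{Ext}_{i-1}$ against the test $T(w) := R(G_{i-1}(w))$ gives
\[
    \Bigl| \mathbb{E}_{a \in A,\, z}\!\left[T(\mathrm{Ext}_{i-1}(a,z))\right] - \mathbb{E}_{w}[T(w)] \Bigr| \;\leq\; \delta'.
\]
Multiplying by $|A|/2^{r_{i-1}} = \mathbb{E}_a[L(G_{i-1}(a))]$ and using $|L| \leq 1$ yields the desired $\delta'$-closeness. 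In the \emph{small} case $|A| < 2^{\kappa_{i-1}}$, both $\mathbb{E}_{a,z}[LR]$ and the factored product are bounded above by $\Pr[a \in A] < 2^{\kappa_{i-1} - r_{i-1}}$; since $r_{i-1} - \kappa_{i-1} = D' + 2t + \log(1/\delta)$ and $\log(1/\delta') = D' + t\log_2 3 + \log(1/\delta)$, and $2t \geq t \log_2 3$, we get $2^{\kappa_{i-1}-r_{i-1}} \leq \delta'$, so the two quantities differ by at most $\delta'$ trivially.

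Finally, two applications of the inductive hypothesis give $\mathbb{E}_{a_1}[L(G_{i-1}(a_1))] \cdot \mathbb{E}_{a_2}[R(G_{i-1}(a_2))] \approx \mathbb{E}[L(U)] \cdot \mathbb{E}[R(U)]$ to within $2 \cdot 3^{i-1}\delta'$, using that $|L|, |R| \leq 1$ and the elementary inequality $|\alpha\beta - \alpha'\beta'| \leq |\alpha - \alpha'| + |\beta - \beta'|$ for quantities bounded by $1$. Combining the three pieces yields the overall error
\[
    \delta' + 2 \cdot 3^{i-1}\delta' \;\leq\; 3^i \delta',
\]
closing the induction. The main subtlety is calibrating the extractor parameters: $\kappa_{i-1}$ must be small enough that the ``bad'' case contributes at most $\delta'$, while large enough that the seed length $d = O(r_{i-1} - \kappa_{i-1} + \log(1/\delta')) = O(D' + \log(1/\delta) + t)$ remains independent of $i$, so that the $t$ levels of composition only blow up the seed additively by $t \cdot d = O((D' + \log(1/\delta) + \log k) \cdot \log k)$ on top of the base $r_0 = n/k$, as required by Theorem~\ref{lem:prg-lc-app}.
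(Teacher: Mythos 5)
Your proof is correct and takes essentially the same approach as the paper: induction on $i$, splitting $g_i$ into a left-right product, bounding the extractor step by $\delta'$ via a case split on whether the conditioning event is large enough to preserve min-entropy, and two applications of the inductive hypothesis, for a total error of $\delta' + 2\cdot 3^{i-1}\delta' \leq 3^i\delta'$. The only differences are cosmetic (you run the triangle-inequality chain in the opposite order from the paper's hybrids $\mathcal{D}_1,\dots,\mathcal{D}_4$, and you use a set-size threshold $|A|\ge 2^{\kappa_{i-1}}$ instead of the probability threshold $\Prob[A(X)=1]>\delta'$, both of which work with the chosen parameters).
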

	    
	    \begin{proof}\renewcommand{\qedsymbol}{$\square$ (\Cref{lem:prg-lc-claim-1})}
	    	The proof is by induction on $i$. The base case is $i=0$, which is trivial given the definition of $G_{0}$. Now suppose the claim holds for $i-1$, we show the case for $i$. This is done using a hybrid argument. Consider the following four distributions
	    	\begin{itemize}
	    		\item $\mathcal{D}_1=U_{n/2^{t-i}}$,
	    		\item $\mathcal{D}_2=U_{n/2^{t-i+1}}\circ G_{i-1}(U_{r_{i-1}})$,
	    		\item $\mathcal{D}_3=G_{i-1}(U_{r_{i-1}}) \circ G_{i-1}(U'_{r_{i-1}})$ ($U$ and $U'$ are two independent uniform distributions),
	    		\item $\mathcal{D}_4=G_i(U_{r_i})$.
	    	\end{itemize}
	    	We want show show that
	    	\[
	    		\left|\Exp[g_i(\mathcal{D}_1)]-\Exp[g_i(\mathcal{D}_4)]\right|\leq 3^{i}\cdot \delta'.
	    	\]
	    	By the triangle inequality, it suffices to show that
	    	\begin{equation}\label{eq:prg-lc-triangle}
	    		\left|\Exp[g_i(\mathcal{D}_1)]-\Exp[g_i(\mathcal{D}_2)]\right| + \left|\Exp[g_i(\mathcal{D}_2)]-\Exp[g_i(\mathcal{D}_3)]\right| + \left|\Exp[g_i(\mathcal{D}_3)]-\Exp[g_i(\mathcal{D}_4)]\right| \leq  3^{i}\cdot \delta'.
	    	\end{equation}
	    	We show \Cref{eq:prg-lc-triangle} by upper bounding each of the three summands.
	    	
	    	\paragraph{First summand.}
	    	
	    	We show that
            \begin{equation}\label{eq:prg-lc-triangle-1}
				\left|\Exp[g_i(\mathcal{D}_1)]-\Exp[g_i(\mathcal{D}_2)]\right| \leq 3^{i-1}\cdot \delta'.
			\end{equation}
			Let us re-write $g_i$ as
	    	\[
	    		g_i(x_1,x_2,\dots,x_{k/2^{t-i}})=\LH{h}(x_1,x_2,\dots,x_{k/2^{t-i+1}})\cdot \RH{h}(x_{k/2^{t-i+1}+1},x_{k/2^{t-i+1}+2},\dots,x_{k/2^{t-i}}),
	    	\]
	    	where
	    	\[
	    	    \LH{h}(y)\vcentcolon=\prod_{j=1}^{k/2^{t-i+1}}h_i(y) \text{\quad and\quad  } \RH{h}(y)\vcentcolon=\prod_{j=k/2^{t-i+1}}^{k/2^{t-1}}h_i(y).
	    	\]
	    	Then,
	    	\begin{align*}
	    		\Exp[g_i(\mathcal{D}_2)]&=\Exp\left[\LH{h}(U_{n/2^{t-i+1}})\cdot \RH{h}(G_{i-1}(U_{r_{i-1}}))\right]\\
	    		&=\Exp\left[\LH{h}(U_{n/2^{t-i+1}})\right]\cdot\Exp\left[\RH{h}(G_{i-1}(U_{r_{i-1}}))\right]\\
	    		&=\Exp\left[\LH{h}(U_{n/2^{t-i+1}})\right]\cdot\left(\Exp\left[\RH{h}(U_{n/2^{t-i+1}})\right]\pm 3^{i-1}\cdot \delta'\right)
	    		\tag{By the induction hypothesis}\\
	    		&=\Exp\left[\LH{h}(U_{n/2^{t-i+1}})\right]\cdot\Exp\left[\RH{h}(U_{n/2^{t-i+1}})\right] \pm 3^{i-1}\cdot \delta'\\
	    		&=\Exp[g_i(\mathcal{D}_1)]\pm 3^{i-1}\cdot \delta',
	    	\end{align*}
	    	as desired.
	    	
	    	\paragraph{Second summand.}
	    	
	    	By a similar argument, it can be shown that
            \begin{equation}\label{eq:prg-lc-triangle-2}
				\left|\Exp[g_i(\mathcal{D}_2)]-\Exp[g_i(\mathcal{D}_3)]\right| \leq 3^{i-1}\cdot \delta'.
			\end{equation}
			We omit the details here.
	    	
	    	\paragraph{Third summand.}
	    	
	    	We show that
            \begin{equation}\label{eq:prg-lc-triangle-3}
				\left|\Exp[g_i(\mathcal{D}_3)]-\Exp[g_i(\mathcal{D}_4)]\right| \leq \delta'.
			\end{equation}
	    	We have
	    	\begin{align*}
	    		\Exp[g_i(\mathcal{D}_4)]&=\Exp[g_i(G_i(U_{r_i}))]\\
	    		&=\Exp\left[\LH{h}(G_{i-1}(X))\cdot \RH{h}(G_{i-1}(\mathrm{Ext}_{i-1}(X,Z)))\right]
	    		\tag{where $X\sim\bool^{r_{i-1}}$ and $Z\sim\bool^{d}$}\\
	    		&=\Exp[A(X)\cdot B(\mathrm{Ext}_{i-1}(X,Z))]
	    		\tag{where $A(\cdot)= \LH{h}(G_{i-1}(\cdot))$  and $B(\cdot)= \RH{h}(G_{i-1}(\cdot))$}\\
	    		&=\Exp[B(\mathrm{Ext}_{i-1}(X,Z))\given A(X)=1]\cdot\Prob[A(X)=1].
	    	\end{align*}
	    	Similarly, we get
	    	\[
	    		\Exp[g_i(\mathcal{D}_3)]=\Exp[B(U_{r_{i-1}})\given A(X)=1]\cdot\Prob[A(X)=1].
	    	\]
	    	As a result, we have
	    	\begin{align*}\label{eq:prg-lc-triangle-3.1}
	    		&\quad\left|\Exp[g_i(\mathcal{D}_4)]-\Exp[g_i(\mathcal{D}_3)]\right|\\
	    		&=\left|\left(\Exp[B(\mathrm{Ext}_{i-1}(X,Z))\given A(X)=1]-\Exp[B(U_{r_{i-1}})\given A(X)=1]\right)\cdot\Prob[A(X)=1]\right|.
	    		\numberthis
	    	\end{align*}
		    On the one hand, if $\Prob[A(X)=1]\leq \delta'$, then \Cref{eq:prg-lc-triangle-3.1} is at most $\delta'$. On the other hand, if $\Prob[A(X)=1]> \delta'$, then
	        \[
	        	H_{\infty}\cprn{X\given A(X)=1} > r_{i-1}-\log(1/\delta')> r_{i-1}-D'-2t -\log(1/\delta)=\kappa_{i-1}.
	        \]
	        Then by the fact that $\mathrm{Ext}_{i-1}$ is a $(\kappa_{i-1},\delta')$-extractor, we have
	        \[
	            \left|\Exp[B(\mathrm{Ext}_{i-1}(X,Z))\given A(X)=1] - \Exp[B(U_{r_{i-1}})\given A(X)=1]\right| \leq \delta'.
	        \]
	        Therefore, \Cref{eq:prg-lc-triangle-3.1} is at most $\delta'$ and this complete the proof of \Cref{eq:prg-lc-triangle-3}. Finally, note that \Cref{eq:prg-lc-triangle} follows from \Cref{eq:prg-lc-triangle-1}, \Cref{eq:prg-lc-triangle-2} and \Cref{eq:prg-lc-triangle-3}. This completes the proof of \Cref{lem:prg-lc-claim-1}.
	    \end{proof}
	    
	    Given \Cref{lem:prg-lc-claim-1}, \Cref{thm:prg-lc} now follows by letting $i=t$.
	\end{proof}

\end{document}